\documentclass[lettersize,journal]{IEEEtran}
\usepackage{graphicx}
\usepackage{latexsym}
\usepackage{tabularx}
\usepackage{cite}
\usepackage[labelformat=simple]{subcaption}

\usepackage{caption} 
\usepackage{color}
\usepackage{wrapfig}
\usepackage{amsfonts,amssymb}
\usepackage{amsthm,amsmath}
\usepackage{url}
\usepackage[encapsulated]{CJK} 
\usepackage{indentfirst}
\usepackage{soul}
\usepackage{comment}
\usepackage{setspace}
\usepackage{balance}
\usepackage{array}
\usepackage{orcidlink}
\usepackage{hyperref}
\hypersetup{colorlinks,
      linkcolor=blue,
      citecolor=blue,
      urlcolor=black,
      filecolor=blue}

\usepackage{pifont}

\usepackage{makecell}

\newcommand{\noindentparagraph}[1]{\textbf{\textit{#1}}}

\usepackage{xcolor}
\newcommand{\chengwei}[1]{{\color{blue}#1}}

\definecolor{dv}{named}{black}

\makeatletter
\newcommand\footnoteref[1]{\protected@xdef\@thefnmark{\ref{#1}}\@footnotemark}
\makeatother

\usepackage{enumitem}

\usepackage{algorithmicx}
\usepackage{algorithm}
\usepackage[noend]{algpseudocode}
\usepackage{makecell}
\usepackage{booktabs}
\usepackage{multirow}
\usepackage{adjustbox}

\def\BibTeX{{\rm B\kern-.05em{\sc i\kern-.025em b}\kern-.08em
    T\kern-.1667em\lower.7ex\hbox{E}\kern-.125emX}}



\makeatletter

\newcommand{\algmargin}{\the\ALG@thistlm}
\newlength{\whilewidth}
\algnewcommand{\parState}[1]{\State%
  \parbox[t]{\dimexpr\linewidth-\algmargin}{\strut #1\strut}}

\algrenewtext{If}[1]
    {\parbox[t]{\dimexpr\linewidth-\algmargin}
    {\hangindent\whilewidth\algorithmicif\ #1\ \algorithmicthen}}

\algrenewtext{ElsIf}[1]
    {\parbox[t]{\dimexpr\linewidth-\algmargin}
    {\hangindent\whilewidth\algorithmicelsif\ #1\ \algorithmicthen}}

\algnewcommand\algorithmicinput{\textbf{Input:}}
\algnewcommand\INPUT{\item[\algorithmicinput]}

\algnewcommand\algorithmicoutput{\textbf{Output:}}
\algnewcommand\OUTPUT{\item[\algorithmicoutput]}

    \newcounter{phase}[algorithm]
    \newlength{\phaserulewidth}
    \newcommand{\setphaserulewidth}{\setlength{\phaserulewidth}}
    
    \setphaserulewidth{.7pt}
\makeatother

\usepackage{thmtools}

\theoremstyle{definition}
\newtheorem{defi}{Definition}
\theoremstyle{plain}
\newtheorem{remark}{Remark}
\newtheorem{lemma}{Lemma}
\newtheorem{corollary}{Corollary}
\newtheorem{theorem}{Theorem}
\newtheorem{assumption}{Assumption}

\DeclareMathOperator*{\argmax}{arg\,max}
\DeclareMathOperator*{\argmin}{arg\,min}

\graphicspath{ {images/} }

\usepackage{etoolbox}

\renewcommand{\baselinestretch}{0.985}
\normalsize
\allowdisplaybreaks

\begin{document}

\title{Totoro$^+$: An Adaptive and Scalable Edge Federated Learning System}

\author{
        Cheng-Wei~Ching\,\orcidlink{0000-0001-6621-4907},
        Xin~Chen\,\orcidlink{0009-0008-8188-8211},
        Taehwan Kim\,\orcidlink{https://orcid.org/0009-0008-4378-1691},
        Jian-Jhih~Kuo\,\orcidlink{0000-0002-1051-5089},  
        Dilma Da~Silva\,\orcidlink{0000-0001-6538-2888}, and 
        Liting~Hu\,\orcidlink{0009-0007-7222-5507}
\vspace{-0.1in}
        
\IEEEcompsocitemizethanks{
    \IEEEcompsocthanksitem This work was supported in part by the National Science Foundation under Grants
    NSF-OAC-2313738, NSF-CAREER-2313737, and NSF-CNS-2322919, and in part by the National Science and Technology Council under Grants 113-2221-E-194-040-MY3, and 114-2628-E-194-002-MY3.
    A preliminary version of this paper appeared at the Proceedings of the Nineteenth European Conference on Computer Systems (EuroSys’24)~\cite{totoro}. \textit{(Corresponding author: Liting Hu.)}	
	\IEEEcompsocthanksitem Cheng-Wei Ching and Liting Hu are with the Department of Computer Science and Engineering, University of California, Santa Cruz, Santa Cruz, CA 95064 USA (e-mail: cching1@ucsc.edu; liting@ucsc.edu). 
    \IEEEcompsocthanksitem Xin Chen is with the Department of Computer Science and Engineering, Georgia Institute of Technology, Atlanta, GA 30332 USA (e-mail: xchen384@gatech.edu).
    \IEEEcompsocthanksitem Taehwan Kim is with the Department of Computer Science, Virginia Tech, Blacksburg, VA, 24061 (e-mail: tk020@vt.edu).
    \IEEEcompsocthanksitem Jian-Jhih Kuo is with the Department of Computer Science and Information Engineering, National Chung Cheng University, Taiwan, and also with the Advanced Institute of Manufacturing with High-tech Innovations, National Chung Cheng University, Taiwan (e-mail: lajacky@cs.ccu.edu.tw).
    \IEEEcompsocthanksitem Dilma Da Silva is with Texas A\&M University, College Station, TX 77843 USA (e-mail: dilma@cse.tamu.edu).

}
}
 

\IEEEtitleabstractindextext{
\begin{abstract}
Federated Learning (FL) is an emerging distributed machine learning (ML) technique that enables in-situ model training and inference on decentralized edge devices. We propose Totoro$^+$, a novel scalable FL system that enables massive FL applications to run simultaneously on edge networks. The key insight is to explore a distributed hash table (DHT)-based peer-to-peer (P2P) model to re-architect the centralized FL system design into a fully decentralized one. In contrast to previous studies where many FL applications shared one centralized parameter server, Totoro$^+$ assigns a dedicated parameter server to each application. Any edge node can act as any application’s coordinator, aggregator, client selector, worker (participant device), or any combination of the above, thereby radically improving scalability and adaptivity. Totoro$^+$ introduces three innovations to realize its design: a locality-aware P2P multi-ring structure, a publish/subscribe-based forest abstraction, and a game-theoretic path planning model with a guarantee of an $\epsilon$-approximate Nash equilibrium. Real-world experiments on 500 Amazon EC2 servers show that Totoro$^+$ scales gracefully with the number of FL applications and $N$ edge nodes speeds up the total training time by $1.2\times-14.0\times$, achieves $\mathcal{O}(\log N)$ hops for model dissemination and gradient aggregation with millions of nodes, and efficiently adapts to the practical edge networks and churns.
\end{abstract}

\begin{IEEEkeywords}
Distributed and parallel systems for machine learning, federated learning, game theory, and edge computing.
\end{IEEEkeywords}

}

\maketitle

\IEEEdisplaynontitleabstractindextext

\IEEEpeerreviewmaketitle

\section{Introduction}

\IEEEPARstart{W}{ith} the rise of 5G and the growth of connected devices, federated learning (FL) enables local data processing and machine learning at the network edge. This approach reduces data transmission and enhances privacy by eliminating the need for centralized servers. FL has been applied in various domains, including human activity prediction~\cite{asynchronous_online_fl, communication_efficient_fl}, sentiment analysis~\cite{mocha}, language processing~\cite{fl_mobile_keyboard, applied_federated_learning}, and enterprise systems~\cite{ibm_fl_framework}.

\begin{figure}[t]
  \centering
  \includegraphics[clip,width=1.\columnwidth]{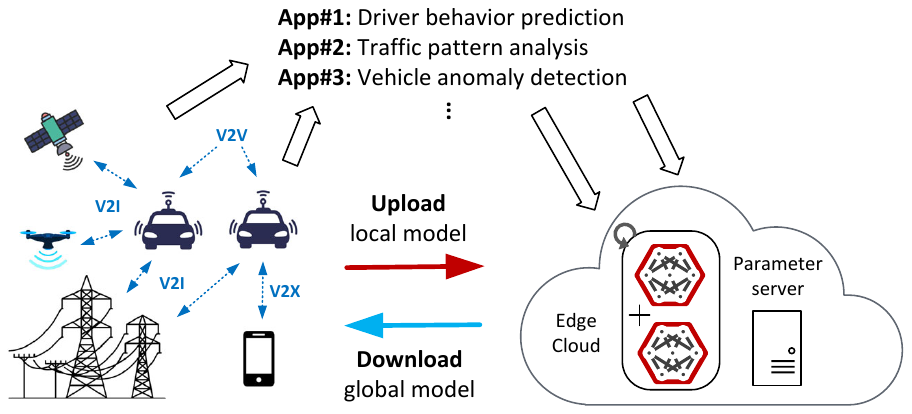}
  \caption{An edge federated learning use case based on autonomous vehicles.}
    \vspace{-0.05in}
  \label{fig:example}
\end{figure}


\noindentparagraph{The problem.} We consider a typical edge computing architecture where millions of edge devices (e.g., smart wearables, autonomous vehicle sensors) connect to the cloud through an edge layer. This layer comprises hundreds of thousands of server-grade machines, gateways, and routers—termed ``edge nodes'', managed by various providers across geographically distributed sites. Raw data is collected and stored on these edge nodes, enabling machine learning models to be trained locally without transferring raw data to a central server.


Federated learning (FL) tools and frameworks—such as TensorFlow Federated~\cite{tensorflow_federated}, LEAF~\cite{femnist_dataset}, FLOAT~\cite{float}, REFL~\cite{refl}, and PySyft~\cite{pysyft}—have gained traction. However, building an efficient FL system on practical edge networks remains challenging due to dynamic edge environments and the high scalability demands of emerging FL applications.


Figure~\ref{fig:example} illustrates a use-case scenario that highlights these challenges. In future intelligent transportation systems such as the efforts currently funded by the US Department of Transportation~\cite{transportation_research_plan}, autonomous vehicles are interconnected and equipped with wide-area network access
They continuously collect sensor and behavioral data such as speed, engine performance, driving patterns, and environmental conditions. On the back end, numerous FL applications will run concurrently on these vehicles or edge nodes, performing in-situ training using the collected data.
Examples of such FL applications include driver behavior prediction, which trains Long-Short Term Memory (LSTM) networks~\cite{lstm} to anticipate lane change actions~\cite{du2023driver}; traffic pattern analysis, which learns traffic patterns to redirect optimal routes~\cite{wang2023fedstream}; and anomaly detection, which employs federated learning to identify and respond to vehicle malfunctions or hazardous road conditions in real time~\cite{wang2022ai}.


\noindentparagraph{Our solution: Totoro$^+$.} We present Totoro$^+$, an adaptive and scalable federated learning (FL) system designed to support large-scale concurrently running FL applications on edge networks and adapt to edge dynamics. Rather than modifying existing FL systems, we adopt a fully decentralized architecture. Unlike prior approaches where multiple FL applications rely on shared centralized parameter servers, Totoro$^+$ assigns each application its own parameter server, preventing overload on any single edge node. Moreover, any edge node can dynamically serve as an application's coordinator, aggregator, client selector, worker, or any combination of these roles, significantly enhancing scalability and adaptivity.


The key idea is to leverage a distributed hash table (DHT)-based peer-to-peer (P2P) architecture to redesign FL systems. P2P models, widely used in file sharing (e.g., BitTorrent~\cite{bittorrent}, Storj~\cite{storj}, Freenet~\cite{freenet}), peer-assisted CDNs~\cite{netsession}, and blockchains~\cite{blockchain_paper}, treat all nodes equally, without a central server, as they collaborate to perform tasks or deliver services. For instance, in BitTorrent~\cite{bittorrent}, users both download and upload file pieces to/from peers. Inspired by this model, we enable all edge nodes to jointly handle FL training and testing for a large number of applications in parallel.


Totoro$^+$ introduces three innovations to realize its fully decentralized design: \emph{a locality-aware P2P multi-ring structure}, \emph{a publish/subscribe-based forest abstraction}, and \emph{a game-theoretic path planning model}. 

First, edge nodes self-organize into a scalable DHT-based P2P overlay. The \textit{locality-aware P2P multi-ring structure} enforces administrative boundaries between edge sites, enabling efficient local FL execution and preserving site-specific control flows.
Second, Totoro$^+$ builds \textit{a publish/subscribe-based forest abstraction} on top of the locality-aware P2P multi-ring structure for model broadcasting and gradient aggregation. This decouples the centralized FL architecture into a ``fully'' decentralized architecture, where each FL application operates on an independent dataflow tree, significantly enhancing system scalability. Besides individual dataflow trees for FL applications, we introduce an advertise-discover (AD) tree that enables edge nodes to publish and locate FL applications within the overlay. 
Third, to address edge dynamics (e.g., link failures, stragglers, and bandwidth constraints over edge networks) Totoro$^+$ employs \textit{a game-theoretic path planning model} that dynamically replans data transfer paths, ensuring robust and adaptive model broadcast and gradient aggregation.


\noindentparagraph{Summary of results.} We implement Totoro$^+$ using the open-source Pastry DHT~\cite{pastrycode} and PyTorch~\cite{pytorch} frameworks,%
~\footnote{\url{https://github.com/UCSC-ELVES-Lab/Totoro-EuroSys-2024}}
and evaluate it on various FL tasks with real-world datasets across 500 Amazon EC2 servers. Compared to state-of-the-art FL systems~\cite{openfl,fedscale}, Totoro$^+$ achieves superior scalability and load balancing, efficiently distributing masters across large-scale edge networks without overloading individual nodes. It accelerates total training time by $1.2\times$ to $14.0\times$, supports $\mathcal{O}(\log N)$ hops for model dissemination and gradient aggregation at million-node scales, and robustly handles network churn and unreliable edge conditions.


\begin{table}[t!]
\scriptsize
\centering
\begin{tabular}{p{2.5cm}|p{2.5cm}|p{2.5cm}}
    \Xhline{1.5\arrayrulewidth}
    \noalign{\vskip 2pt}
    \rule{0pt}{.8\normalbaselineskip}
    \textbf{Features} 
    & \textbf{Totoro~\cite{totoro}} 
    & \textbf{Totoro$^+$ (this work)} \\
    \noalign{\vskip 2pt}\hline
    \noalign{\vskip 2pt}

    \rule{0pt}{.8\normalbaselineskip}
    Application advertisement and discovery
    & Join-by-AppId
    & Advertise-discover tree  \\

    \rule{0pt}{.8\normalbaselineskip}
    Path planning model
    & Bandit model without traffic congestion
    & Game-theoretic model with traffic congestion \\

    \rule{0pt}{.8\normalbaselineskip}
    Algorithmic complexity
    & $\mathcal{O}(\log N\cdot I_{KL})$
    & $\mathcal{O}(\log N\cdot Matmul)$ \\

    \rule{0pt}{.8\normalbaselineskip}
    Theoretical guarantee
    & Heuristic
    & $\epsilon$-approximate Nash equilibrium \\

    \rule{0pt}{.8\normalbaselineskip}
    Failure recovery
    & Worker 
    & Worker and master \\

    \noalign{\vskip 2pt}
    \Xhline{1.5\arrayrulewidth}
\end{tabular}
\caption{Comparison between Totoro and Totoro$^+$, where $I_{KL}$ is the iteration complexity of solving the Kullback–Leibler divergence-based convex feasibility, and $Matmul$ is matrix multiplication.}
\vspace{-0.05in}
\label{tab:diff_summary}
\end{table}

\noindentparagraph{Contributions.} This paper makes the following contributions :
\begin{itemize}[leftmargin=*, topsep=0.5ex]
\item \emph{Problem:} We analyze the software architecture of current FL systems and identify key challenges in applying FL to practical edge networks. 
\item \emph{Key idea:} Totoro$^+$ explores the DHT-based P2P model to propose a novel, fully decentralized ``many masters/many workers'' architecture that significantly enhances scalability and adaptability. 
\item \emph{Totoro$^+$ design \& implementation:} Totoro$^+$ incorporates three key innovations: a locality-aware P2P multi-ring structure, a publish/subscribe-based forest abstraction, and a game-theoretic path planning model with theoretical guarantees of an $\epsilon$-approximate Nash equilibrium.
\item \emph{Results:} Our evaluation demonstrates Totoro$^+$'s substantial improvements in scalability and adaptability over state-of-the-art FL systems.
\end{itemize}

A preliminary conference paper version~\cite{totoro} has been published in Proceedings of the Nineteenth European Conference on Computer Systems (EuroSys’24). 
{\color{dv}
The summary of changes is presented in Table~\ref{tab:diff_summary}.
}
In this journal version, we provide the following three significant improvements: 
First, we introduce the advertise-discover (AD) tree to the publish/subscribe-based forest abstraction. The AD tree enables edge nodes to advertise and discover FL applications running over the DHT-based P2P overlay for FL application searching and crowdsourcing~\cite{fl_framework_crowdsourcing,fl_collusion_crowdsourcing} {\color{dv}
(see Appendix~\ref{app:advertise_discover} for a detailed comparison with the conference version).
}

Secondly, we replace the bandit-based path planning model with the game-theoretic path planning model, which formalizes the path planning problem as a congestion game with bandit feedback and introduces a game-theoretic distributed hop-by-hop algorithm with the guarantee of an $\epsilon$-approximate Nash equilibrium  
{\color{dv}
(see Appendix~\ref{app:math_model_comparison} for a detailed comparison with the conference version).
}

Thirdly, we provide respective failure recovery mechanisms for master and worker node failures. When a worker node fails, its child node sends a \texttt{JOIN} message using AppId as the key to find a new parent node to recover the dataflow tree. The master node in each communication round replicates the training state across $k$ nodes in its neighborhood set. When the master node fails, its immediate node sends a \texttt{JOIN} message using AppId as the key to find an alternative master node. The new master node retrieves a state replica to proceed with the training process. 

Finally, we demonstrate that the game-theoretic path planning model is a better solution for adapting to edge dynamics with experiments on packet latency, and Nash regret. Also, we provide more extensive experiment results to demonstrate further Totoro$^+$'s scalability to massive FL applications, fault tolerance to simultaneous node failures, and adaptivity to edge dynamics.

\section{Background}
\label{sec:challenge}
We start with a quick primer on the software architecture used in state-of-the-art federated learning (FL) frameworks.



State-of-the-art FL frameworks (e.g., Meta's PAPAYA~\cite{papaya}, LinkedIn's FLINT~\cite{flint},
Google's federated learning framework~\cite{google_fl_sys},
IBM Federated Learning framework~\cite{ibm_fl_framework}, and
Apple's federated task processing system~\cite{apple_fl_system}) commonly adopt a centralized or hierarchical ``single master/many workers'' architecture. In this architecture, the master is typically deployed on a parameter server, acting as a coordinating service provider without data. The workers, on the other hand, connect to numerous edge devices and serve as both the data owners and beneficiaries of federated learning. 

Figure~\ref{fig:fl_archi} 
illustrates the data pipeline between these components
in a typical FL framework.

The high-level design involves two parts: the parameter server that runs the ``master'', and the end-user devices that run the ``workers''. The parameter server comprises three main components: Coordinator, Selector, and Aggregator. While the number of Selectors and Aggregators can scale elastically based on the workload demand, there is only one Coordinator. We summarize the functions of these components as follows.

\noindentparagraph{Coordinator.}
The Coordinator performs three main functions:
\begin{itemize}[leftmargin=*, topsep=0.5ex]
    \item \textit{Task Assignment}. Whenever a new FL application is submitted to the system, the Coordinator assigns the application to a single Aggregator based on the workload among Aggregators and the estimated workload of the application such as application concurrency and model size.

    \item \textit{Client Assignment}. For each available client, the Coordinator constructs a list of eligible applications. The Coordinator assigns each client to an application.

    \item \textit{Task Migration}. The Coordinator moves applications between Aggregators when it detects failed or overloaded Aggregators.  
\end{itemize}

\noindentparagraph{Aggregator.} The Aggregators are \textit{persistent and stateful} to avoid a substantial cold start overhead for a new application. Each Aggregator is responsible for one single FL application and carries out three main responsibilities.
\begin{itemize}[leftmargin=*, topsep=0.5ex]
    \item \textit{Gradient Aggregation}. Once a client completes training, it uploads the trained serialized gradient update to the server. This update is then pushed into an in-memory queue on the Aggregator, which aggregates client gradient updates to produce new versions of the server model of an application. 

    \item \textit{Client Guidance}. The Aggregator guides clients into running the client protocol, such as downloading, uploading, and training configurations, by responding clients requests from the Selector.

    \item \textit{Client Tracking}. The Aggregator is responsible for tracking (i) if clients are satisfied with their assigned applications, and (ii) if clients are still eligible for their assigned applications. The tracking information will be used by the Coordinator to run client assignments.    
\end{itemize}

\noindentparagraph{Selector.} The Selector is the only component that directly communicates with clients and plays two roles:

\begin{itemize}[leftmargin=*, topsep=0.5ex]
    \item \textit{Task Advertisement}. The clients check in with the Selector and report their eligibility for available applications. The Selector summarizes client availability for the Coordinator.

    \item \textit{Request Forwarding}. When a client is assigned an eligible application. The Selector forwards the client to the Aggregator responsible for that application. The Selector also routes clients' requests to the corresponding Aggregator, such as model broadcasting, and reporting client status and gradient updates. 
\end{itemize}

\begin{figure}[t!]
  \centering
  \includegraphics[clip,width=0.7\columnwidth]{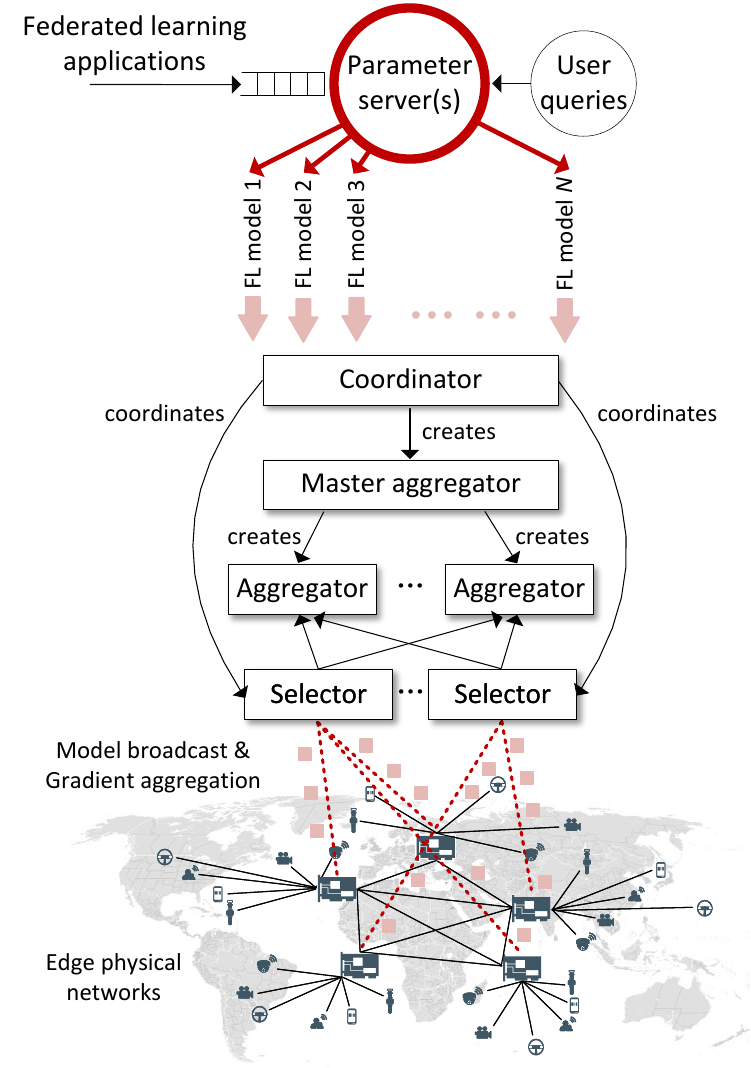}
  \caption{The data pipeline in a typical FL framework.}
    \vspace{-0.05in}
  \label{fig:fl_archi}
\end{figure}



\section{Challenges}
\label{sec:challenges}
We highlight the challenges that we face when applying FL to real-world edge networks in this section.


\subsection*{Challenge \#1: Scaling gracefully with the number of diverse FL applications and edge nodes.}
As emerging edge applications and edge devices grow in quantity and complexity, the number of FL applications submitted to the edge will likely become \emph{huge}. As shown in Figure~\ref{fig:example}, different FL applications may require training of various FL models for different driver profiles (e.g., speeds, distances, passengers), vehicle conditions, and environmental factors (e.g., intersections, pedestrians, moving objects) simultaneously based on the same raw data. This results in the generation of a vast number of FL tasks. Therefore, key considerations in addressing this challenge include:

\noindentparagraph{Distributed task management.} 
As shown in Figure~\ref{fig:fl_archi}, existing production FL systems typically rely on a single instance of Coordinator to direct Aggregators and Selectors across all FL training and testing tasks. While this hierarchical design scales well in datacenter environments, it struggles in edge systems with millions of nodes and many concurrent FL applications. The challenge is amplified by the presence of multiple edge providers~\cite{edge_multiple_providers}, each managing its own set of edge nodes, resulting in no global view of workloads and resources. Without this global view, existing FL systems cannot balance FL tasks effectively across the edge, hindering their ability to scale with new nodes and applications and increasing overall training time. 
Besides longer training time, existing FL systems may fail to advertise the existing/new FL tasks to new/existing clients to further strengthen model accuracy without a global view of running applications.

\noindentparagraph{Application-specific customization.} 
As FL applications diversify with emerging use cases, flexible designs for participant selection~\cite{tifl,oort,client_selection_fl}, compression~\cite{qsgd,signsgd}, and communication protocols—synchronous~\cite{fedavg}, semi-synchronous~\cite{semi_sync_fl}, or asynchronous~\cite{async_fed_opt}—are increasingly needed. However, existing FL systems typically share a single parameter server across applications and enforce fixed FL policies, limiting their ability to support diverse requirements. Achieving application-specific customization often requires using multiple FL frameworks, which compromises efficiency, maintainability, and simplicity.


\subsection*{Challenge \#2: Adapt to practical edge network conditions such as varying bandwidths, unreliable links, high churn, and workload surges.}

The second challenge arises from the first one. To scale with massive FL applications, cloud administrators usually partition all nodes into many sets and assign a parameter server per each set of nodes in a static manner (e.g., one parameter server per rack)~\cite{float,refl, fedscale}. While this assignment approach may work well in datacenters, it lacks the agility to adapt quickly to edge platforms that have millions of resource-constrained nodes. 

Edge environments are characterized by the following unique difficulties:
(1) edge network link delays are unpredictable and vary stochastically due to unreliable links and random access protocols (e.g., in wireless networks~\cite{adaptive_fl_in_edge, semi_sync_fl}), client mobility (e.g., in mobile ad-hoc networks~\cite{fl_for_edge,fl_framework_crowdsourcing}), and randomness of demand (e.g., workload surges~\cite{edge_vision, edge-demlearn}) in arbitrary (and dynamic) geographical edge locations, and 
(2) edge nodes fail or lag unexpectedly (e.g., due to signal attenuation, interference, and wireless channel contention~\cite{adaptive_fl_in_edge, semi_sync_fl}). 
However, unlike datacenter servers, edge nodes have limited computing resources (few-core processors, little memory, and little permanent storage~\cite{edge_vision}) and no backpressure~\cite{ching2025agiledart}. As such, there is little room to adapt to the edge dynamics or handle stragglers by over-provisioning resources or replicating links like previous studies. Moreover, edge nodes may belong to different owners who do not share operation or application information and have to compete for communication resources over edge networks~\cite{kalmia, congestion_games}. Therefore, efforts should be made to replan the data transfer paths \textit{dynamically} and \textit{autonomously} to adapt to the edge dynamics and constrained network bandwidth.

\section{Totoro$^+$ design}
Totoro$^+$ proposes a novel scalable FL system for practical edge networks. To achieve this, Totoro$^+$ incorporates several key techniques and components, which we describe in detail.

\begin{figure}[t]
  \centering
  \includegraphics[clip,width=.9\columnwidth]{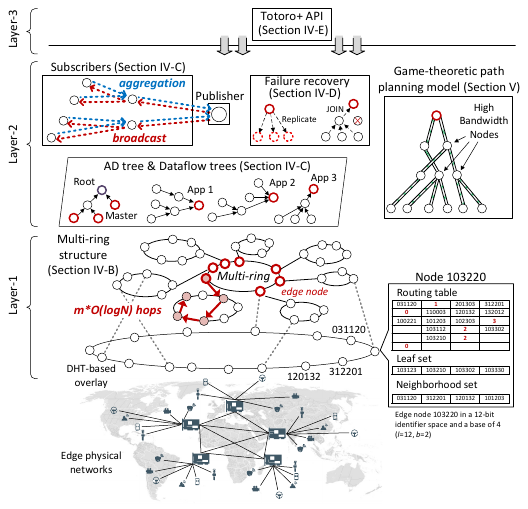}
  \caption{The Totoro$^+$ system overview.}
  \label{fig:overview}
  \vspace{-.05in}
\end{figure}

\subsection{Overview}
Totoro$^+$’s system goals are:

\begin{itemize}[leftmargin=*, topsep=0.5ex]

\item \emph{Scalability.} Totoro$^+$ can scale to process a vast number of FL applications' tasks simultaneously on millions of edge nodes without introducing any centralized bottleneck. 

\item \emph{Adaptivity.} Totoro$^+$ can quickly adapt to the practical edge networks characterized by varying bandwidths, unreliable links, high churn (nodes join and leave), and workload surges in arbitrary geographical edge locations.

\item \emph{Good FL performance.} When handling a vast number of FL applications, Totoro$^+$ can speed up the training process for each of them.

\end{itemize}

As shown in Figure~\ref{fig:overview}, Totoro$^+$ has three layers: a locality-aware P2P multi-ring structure, a publish/subscribe-based forest abstraction, and a high-level API.

\noindentparagraph{Layer 1: locality-aware P2P multi-ring structure.} All distributed edge nodes are self-organized into a DHT-based P2P overlay. Each node has a unique 128-bit NodeId in a very large circular NodeId space. In an edge network with $N$ nodes, the DHT-based P2P overlay guarantees that, no matter where the source node is, any FL data (e.g., model or gradient) can be routed to any destination node within $\mathcal{O}(\log N)$ hops. 
Compared to existing DHTs studies~\cite{pastry, chord, tapestry}, our innovations are\chengwei{:} (1) Totoro$^+$ divides the original single P2P ring structure into many smaller, more manageable locality-aware P2P multi-ring structures that enable locality-aware FL processing; and (2) Totoro$^+$ designs a new boundary-aware two-level routing table that ensures administrative isolation for privacy concerns.

\noindentparagraph{Layer 2: publish/subscribe-based forest abstraction.} 
Built upon Layer 1’s locality-ware P2P multi-ring structure, Totoro$^+$ introduces a new publish/subscribe-based forest abstraction that manages a vast number of FL applications in a scalable manner. 
Each FL application is assigned a dynamically structured dataflow tree that operates with maximum independence and is responsible for disseminating the model from the master to the workers and aggregating the gradients from the workers to the master. 
In addition to dataflow trees, the masters of all dataflow trees construct an advertise-discover (AD) tree to advertise and discover FL applications running over the overlay.
These trees together form a ``forest''. Our innovations are (1) a fully decentralized architecture---unlike other federated learning systems, our system does not have a static assignment of parameter servers. Instead, any edge node can be automatically promoted as a parameter server (master) when workload surges, which significantly improves load balancing and scalability. (2) Advertising and discovering FL applications---new edge nodes that just joined the overlay can easily locate and subscribe the FL applications running over the overlay.
(3) DHT-based routing---the time complexity of model/AppId broadcast, and gradient/AppId aggregation is limited to $\mathcal{O}(\log N)$ hops.

\noindentparagraph{Layer 3: high-level API.} 
We provide a high-level API to abstract away the complexities of P2P overlay construction, dynamic-structured dataflow tree construction, model/AppId dissemination, and gradient/AppId aggregation. Totoro$^+$ supports application-specific customization, allowing application owners to set their own FL policies. 

\subsection{Layer 1: locality-aware P2P multi-ring structure}
\label{subsec:layer1}
Many times, geographic diversity or location matters for training FL applications. For example, a road traffic detection application may require nodes with varying weather condition information in different geographic locations~\cite{du2023driver, wang2023fedstream, wang2022ai}. Training a model on a medical disease prevalent in a certain region may require information from a specific location~\cite{fedhealth,fedsens}. 

Therefore, we organize distributed edge nodes into a locality-aware P2P multi-ring structure to enable locality-aware FL processing.

First, we organize distributed edge nodes into a DHT-based P2P ring overlay, which is similar to the BitTorrent nodes that use the Kademila DHT~\cite{kademlia} for ``trackerless'' torrents. Each edge node is assigned a unique 128-$bit$ NodeId in a very large circular NodeId space (e.g., $0 \sim 2^{128}$). NodeIds are used to identify edge nodes and route FL data (e.g., model or gradient) over large-scale edge networks. 

\begin{figure*}[t] 
 \centering 
 \includegraphics[scale=0.9]
 {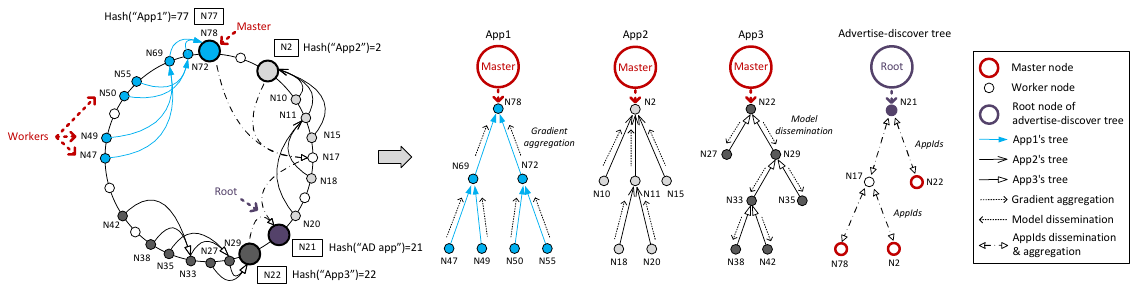}
 \caption{The workflow of Totoro$^+$'s publish/subscribe-based ``forest'' abstraction with three dataflow trees for three FL applications and an advertise-discover tree for application advertise and discovery.}
 \label{fig:forest}
\end{figure*}

To do that, each node needs to maintain three data structures: a routing table, a leaf set, and a neighborhood set.

\begin{itemize}[leftmargin=*, topsep=0.5ex]
\item \emph{Routing table} is used for routing FL data. It consists of node characteristics organized in rows by the
length of the common prefix. The routing works based on prefix-based matching. Every node knows $m$ other nodes in the ring and the distance of the nodes it knows increases exponentially. The routing jumps closer and closer to the destination, like a greedy algorithm, within $\lceil \log_{2^b} N-1\rceil$ hops, where $2^b-1$ is the routing table's entry size.

\item \emph{Leaf set} is used for rebuilding the routing tables upon failures.

\item \emph{Neighborhood set} contains a fixed number of nodes that are ``physically'' closest to that node for maintaining the locality properties. 
\end{itemize}

Second, we divide the original large P2P ring into $m$ smaller, more manageable, locality-aware rings, which we call ``multi-rings”, using Ratnasamy and Shenker’s distributed binning algorithm~\cite{distributed_binning} ($m$ is a configurable parameter). Each ring is an ``edge zone’’ and is characterized by a maximum desired network round-trip time (RTT), called \emph{diameter}. 

Third, we design a new routing table to enable administrative isolation. The challenge is, \emph{how to achieve path convergence to enable administrative isolation, i.e., data paths from different nodes in an edge site should converge at a node in that edge site?}
Existing DHTs~\cite{pastry, chord, kademlia, tapestry} do not guarantee path convergence as those systems try to optimize the search path to reduce response latency. To route a packet to an arbitrary destination key, the packet will be routed to the destination node in another site as long as it has a longer NodeId prefix matching the key. To address this challenge, we make the following changes to existing routing tables: (1) each NodeId now has $(m+n)$-bit, where the $m$-bit prefix presents the zone Id and the $n$-bit suffix represents the NodeId within a zone. Let $P$ denote the prefix of NodeId, i.e., $P_1...P_n$. Let $S$ denote the suffix of NodeId, i.e., $S_1...S_n$. Then the NodeId equals $D=P*2^n+S$; correspondingly (2) each node’s routing table will have two levels: the level 1 routing table and the level 2 routing table. The $i_{th}$ entry in the level 1 routing table with $m$ entries at peer $x$ equals to $(P_x+2^{i-1})$mod $2^m\ast2^n$. The $i_{th}$ entry in the level 2 routing table with $n$ entries at peer $y$ equals to $(S_y+2^{i-1})$mod $2^n$.

To achieve administration isolation, the administrator of an edge site can leverage the level 1 routing table to control
the data flow among different edge zones. For example, when an FL application should be running only within an edge site, any packet generated by that FL application should only travel within this edge site. Administrators can check the destination of packets. If the packet's destination shares a different prefix with the administrator's zone Id, the administrator can block the packet before routing it outside the edge zone.

\subsection{Layer 2: publish/subscribe-based forest abstraction}
Built upon Layer 1, Totoro$^+$ introduces a new publish/subscribe-based ``forest'' abstraction for managing a vast number of FL applications in a scalable manner. Specifically, our goal is to achieve a balanced distribution of masters for hundreds of thousands of FL applications, ensuring that they are not concentrated on a few overloaded nodes. The key innovation is leveraging DHTs to decompose the FL system architecture from \emph{1:n} to \emph{m:n}, where each FL application can be assigned a dynamic-structured dataflow tree that operates with maximum independence, thereby radically improving load balance and scalability. A DHT is a hash table that partitions the key space and distributes the parts across a set of nodes, providing a lookup service similar to a hash table. The trick with DHTs is that the node that gets to store a particular key is found by hashing that key, so in effect, the hash-table buckets are now independent nodes in a network. 

We utilize the fully decentralized nature of DHT to process FL applications' tasks at an extreme scale: unlike other FL systems, Totoro$^+$ does not have a static assignment of a parameter server. Instead, the parameter server is broken down into many components, such as the coordinator, client selector, and aggregator. Any edge node can act as any FL application's coordinator, aggregator, client selector, worker (participating edge device), or any combination of the above, thereby radically improving load balance and scalability.

Figure~\ref{fig:forest} lays out the construction of the publish/subscribe-based “forest”. It has the following steps. 

Built on top of Layer 1, all edge nodes are structured into a DHT-based P2P overlay. Here we use one ring as an example. The DHT-based P2P overlay guarantees that: \emph{given a message and a key, no matter where the source node is, the message can be reliably routed to the node whose NodeId is numerically closest to that key, within $\lceil \log_{2^b} N-1\rceil$ hops, where $2^b-1$ is the routing table’s entry size.}

The first step is constructing application-based logical ``trees’’ of nodes and ensuring that these trees are well balanced over the large-scale edge topologies (Figure~\ref{fig:forest} left). 
\begin{enumerate} [leftmargin=*, topsep=0.5ex, label=\alph*.]
    \item When any new FL application is launched, we calculate the application's AppId, which equals the cryptographic hash of the application's textual name, the creator’s public key, and a random salt, \emph{AppId=hash(``FL application'')}. The hash is computed using the collision resistant SHA-1 hash function, ensuring a uniform distribution of AppIds. 

    \item Then, the edge node processing the application’s data routes a \texttt{JOIN} message using AppId as the key. Since all nodes belonging to the same application use the same key, their \texttt{JOIN} messages eventually arrive at a rendezvous node, with NodeId numerically close to AppId. The rendezvous node is set as the root of this application's tree. 

    \item The unions of all \texttt{JOIN} messages' paths are registered to construct the tree, in which the internal node, as the forwarder, maintains a children table for the group containing an entry (IP address and AppId) for each child. All of the trees together form a ``forest'' abstraction.

    \item For each application's tree, we designate the root node as the master, the internal nodes as the coordinator, aggregator, and client selector components, and the leaf nodes as the workers (participating edge devices).
\end{enumerate}

\noindent\underline{\emph{Rationale:}} (1) Since different applications have different AppIds, the paths and the rendezvous nodes of their spanning trees will also differ, resulting in an even distribution of trees across all edge nodes. (2) Because all nodes are equal, each node can serve as a leaf, internal, or root node for different applications, thus removing the scalability bottleneck without overburdening any single node. 

The second step is implementing a topic-based publish/ subscribe messaging protocol within the dataflow tree. Each FL application has a ``topic''. The master is the ``publisher'', and the workers are the ``subscribers''. 

\begin{enumerate} [leftmargin=*, topsep=0.5ex, label=\alph*.]
    \item \emph{Model broadcast.} The master disseminates the FL model to the workers along the tree. Then each worker independently trains a local model and computes the model updates (e.g., gradients and weights) on the local data.
    
    \item \emph{Gradient aggregation.} Once the workers have completed the computation, the master aggregates model updates from the workers, in which each level of the tree progressively aggregates the updates from tree leaves to the root. To meet the diverse needs of different applications, owners can specify different aggregation functions in their trees. For instance, FedAvg~\cite{fedavg} works well in most situations, while FedProx~\cite{fedprox} demonstrates superior performance in highly heterogeneous settings for more stable and accurate convergence.
\end{enumerate}

\noindent\underline{\emph{Rationale:}} (1) Due to the loosely coupled interaction between the publisher and subscribers, Totoro$^+$ can simultaneously support a large number of dataflow trees with a wide range of tree sizes and a high rate of membership turnover. (2) The use of DHTs enables the efficient construction of aggregation trees and multicast services, as their converging properties guarantee model broadcast or gradient aggregation to be fulfilled within only $\mathcal{O}(\log N)$ hops, which places an upper bound on worst-case latency for data transfers. 

The third step is constructing an advertise-discover (AD) tree that enables edge nodes to advertise and discover FL applications running over the overlay (Figure~\ref{fig:forest} right). 
\begin{enumerate} [leftmargin=*, topsep=0.5ex, label=\alph*.]
    \item Similar to the first step, each master of various applications' trees routes a \texttt{JOIN} message using \emph{AppId=hash(``AD application'')} as the key. Since all masters use the same key, their \texttt{JOIN} messages eventually arrive at a rendezvous node whose NodeId is numerically closest to AppId. The rendezvous node becomes the root of the AD tree.
    
    \item The unions of all \texttt{JOIN} messages' paths are registered to construct the AD tree. In addition to children tables for FL applications, the internal node in the tree works as a forwarder that maintains a children table consisting of the IP addresses of each child and AppIds of FL applications each child is hosting. 

    \item The root of the AD tree is the ``publisher'', and other nodes are the ``subscribers''. The publisher disseminates AppIds of FL applications to each node along the tree, whereas the subscribers in different levels of the tree progressively aggregate AppIds of FL applications to the root.

    \item Whenever a new edge node that joins the DHT-based P2P overlay wants to discover FL applications running over the overlay or an existing edge node intends to switch to a different FL application running on the overlay, it can route a \texttt{JOIN} message using the same AppId to join the AD tree as a subscriber and receive the AppIds of interest. After that, it can route a \texttt{LEAVE} message using the same AppId to leave the AD tree.
    
\end{enumerate}

\noindent\underline{\emph{Rationale:}} 
(1) Since edge nodes may join and leave the overlay frequently and suddenly, joining nodes can subscribe to the AD tree first and receive the information of current FL applications instead of sending broadcast messages to all other nodes, making Totoro$^+$ adaptive to practical edge networks. 
(2) Because edge nodes can leave the AD tree immediately after they receive the necessary information, the tree may have $M$ nodes plus some intermediate nodes $N'$ most of the time and $N'<<N$, where $M$ denotes the number of current FL applications and $N$ denotes the number of nodes in the overlay. Therefore, with the use of DHTs, the AppId broadcast and aggregation can also be fulfilled within $\mathcal{O}(\log (M+N'))$ hops. 

{\color{dv}
A comparison of Totoro and Totoro$^+$ in terms of FL application discovery and advertisement is presented in Appendix~\ref{app:advertise_discover}.
}


\subsection{Failure recovery}
In the case of node failures, we use a parallel recovery approach to repairing the dynamic-structured dataflow trees. A dataflow tree has a master and several workers, and the parallel recovery approach leverages two respective ways to deal with the failure of a worker and the failure of a worker.

\noindentparagraph{Worker node fails.} Periodically, each internal node in the tree sends a keep-alive message to its child nodes. A child node suspects its parent fails when it cannot receive keep-alive messages from its parent node. In such a scenario, the child node routes a \texttt{JOIN} message using AppId as the key. The overlay network will route the message to a new parent, create an alternative route, and repair the dataflow tree.

\noindentparagraph{Master node fails.} In each training round, we replicate the state associated with each application's master across the $k$ nodes in its neighborhood set ($k=2$, by default), which consists of a fixed number of nodes that are ``physically'' closest to the master node. If the master node fails, its immediate child node will detect it, route a \texttt{JOIN} message using AppId as the key, and identify a new master node to take over the role of the failed master. According to Layer 2: publish/subscribe-based forest abstraction, the new master node's NodeId will be numerically closest to the AppId, and it will first contact the nodes that have state replicas to recover the state and inform workers to continue the training tasks like aggregation, global model update, and broadcast.

The tree repair process scales well: failure detection is done by sending messages to child nodes only. Failure recovery is also local. Only a small number of nodes $(\mathcal{O}(\log_{2^b}N))$ is involved, in which $2^b$ is the fanout of the tree. In addition, the replicas of states are typically transmitted over local networks with plentiful bandwidth; hence, the communication overhead is negligible.

\begin{table}[t!]
\centering
\scriptsize	
\setlength{\belowcaptionskip}{-10pt} 
\begin{tabular}{|p{6.5cm}|}
\hline
\textbf{Join(IP, port, site)} \\
Edge node joins the DHT-based P2P overlay network. \\
\hline
\textbf{CreateTree(app\_id)} \\
Application owner creates a dynamic-structured dataflow tree and configures the parameters (e.g., fanout).\\
\hline
\textbf{Subscribe(app\_id)} \\
Edge node sends a \texttt{JOIN} message to subscribe to a dynamically-structured dataflow tree
with the topic equal to app\_id. Application owner can specify her client selection function in the API. 
\\
\hline
\textbf{Unsubscribe(app\_id)} \\
Edge node sends a \texttt{LEAVE} message to unsubscribe to a dynamically-structured dataflow tree
with the topic equal to app\_id.\\
\hline
\textbf{Broadcast(app\_id, object)} \\
Application’s master disseminates the model or AppIds to workers along its dynamically-structured dataflow tree. Application owner can specify her compression function in the API.\\
\hline
\textbf{onBroadcast(app\_id, object)} \\
Callback. Invoked when the worker receives any model, updates, or AppIds from the application’s master. \\
\hline
\textbf{Aggregate(app\_id, object)} \\
Application’s master aggregates updates or AppIds from the workers to the root. Application owner can specify her aggregation function in the API. \\
\hline
\textbf{onAggregate(app\_id, object)} \\
Callback. Invoked when any internal node receives updates or AppIds from a child node.\\
\hline
\textbf{onTimer(app\_id)} \\
Callback. Invoked periodically. Application’s master uses it to get the information about the progress of training (e.g., round\_num, accuracy, straggler\_id) and inference. \\
\hline
\end{tabular}
\caption{Totoro$^+$ API.}
  \label{tab:api}
\vspace{-0.05in}
\end{table}

\subsection{Layer 3: high-level API}
We abstract key components of Totoro$^+$ to provide an easy-to-use API (see Table~\ref{tab:api}). Totoro$^+$ supports
application-specific customization, allowing application
owners to set their own FL policies. The Pastry DHT and Scribe multicast infrastructure are written in Java, with end-user functionality encapsulated in a Python API. This is done so that users do not have to deal with two libraries in separate languages. Thus Totoro$^+$ can be easily integrated with popular FL frameworks such as PyTorch~\cite{pytorch}, PySyft~\cite{pysyft}, and TensorFlow Federated~\cite{tensorflow_federated}.

\noindentparagraph{Application-level customization.}
Totoro$^+$'s APIs support application-level customization.
For example, to prevent potential leakage of model weights to other nodes, application owners can specify various privacy techniques in \texttt{Aggregate(app\_id, object)}, such as differential privacy~\cite{differentially_private_fl}, secure aggregation~\cite{papaya}, and homomorphic encryption~\cite{homomorphic_fl}. Nodes that subscribe to an FL application adhere to the privacy technique specified by that application during the FL process. Take differential privacy as an example, if an application owner launches an FL application and specifies the use of differential privacy with Gaussian noise to secure weights, the edge nodes, as the master, coordinator, aggregator, and client selector, will operate following the privacy technique. Similarly, the leaf nodes, serving as workers, will apply Gaussian noise to local training. 
Moreover, to realize client selection techniques, application owners can rule out edge nodes that do not meet specific requirements (e.g., data distributions, battery, network conditions, etc.) when receiving \texttt{JOIN} messages from edge nodes.


\noindentparagraph{Multi-rings.} Application owners can specify whether their applications span multiple zones. For example, a road traffic detection application may require nodes with different weather condition information in different geographical locations. If an application needs to ingest data sources across multiple zones, it will traverse multiple zones (at most $m$) to build the dataflow tree, resulting in $m\times\mathcal{O}(\log N)$ routing hops.


\section{Game-theoretic Path Planning Model}

Totoro$^+$ introduces a new game-theoretic path planning model that can replan the data transfer paths in dynamically structured dataflow trees to adapt to unreliable edge networks. 

In most real-world edge networks, link delays are unpredictable and vary stochastically~\cite{edge_network_bandwidth}. When a link becomes slow, it can disrupt communication with its child and parent nodes, thereby affecting the entire path from the leaves to the root passing through this link in dataflow trees.

The challenge is that, in many cases, we don't know the quality of the network links in advance, such as the probability of successfully transmitting packets in wireless sensor networks~\cite{edge_network_bandwidth}. This information is often obtained by actually sending packets and observing the outcomes. Furthermore, the network links have limited capacities~\cite{edge_network_bandwidth}. When a link carries too much traffic, its overall data rate dramatically decreases. 
Therefore, we face two dilemmas when planning the paths for model dissemination and gradient aggregation. 

One is \textit{whether to explore new or unknown links or exploit well-known ones}. If we only rely on the known links, we may miss out on finding a better one with a higher success rate or data rate. However, exploring too many new links may result in more packet losses and higher communication delays. The other is \textit{whether to prioritize link quality or link capacity.} Maximizing link quality avoids packet loss and errors but may overload these paths, leading to congestion and increased latencies. However, prioritizing link capacity may increase the risk of transmission failures instead.

This is where Multi-Armed Bandit (MAB) algorithms~\cite{stochastic_online_shortest_path,regret_analysis} and game theory~\cite{congestion_games,algorithmic_game_theory,nash_regret_avg} come into play. 
Imagine someone at a bustling amusement park, surrounded by various facilities like roller coasters, merry-go-rounds, and Ferris wheels. Each facility differs in its entertainment type and its queue length. Initially, she explores different facilities randomly, collecting data on two key aspects: the joy each ride brings and the length of the waiting time. Her goal is to maximize her day by finding the facilities that offer the best balance of fun and short waiting times. However, her decision-making process influences the decisions of others. The presence of other tourists, who also choose facilities based on their own experiences and preferences, affects the queues at each location. The collective decisions of all tourists shape these queues, each acting on their assessments of enjoyment and waiting times. As time passes, some facilities may become busier, while others may see reduced waiting times. She must constantly adapt to these changing conditions, balancing her preferences for fun with the evolving waiting times.


{\color{dv}
Therefore, the path-planning problem can be formulated as a general-sum congestion game with bandit feedback~\cite{gen_sum_game_icml_2005, gen_sum_game_jmlr_2023}: we follow a policy to explore different paths to learn about their rewards (e.g., link quality and link capacity) or exploit the paths that offer the highest rewards and update the policy based on the rewards. And the rewards are influenced by collective decisions on paths. 
We gradually find out the optimal policy to select paths over edge networks. 

The formal definitions and theoretical concepts underlying this formulation are
provided in Appendix~\ref{app:game_prelim}.
Specifically, we review general-sum matrix games, policy representations, Nash
equilibrium and Nash regret, as well as potential and congestion games, together
with different feedback models. 
In the following, we build upon these preliminaries to formulate a routing problem, define the corresponding optimization objective, and develop a game-theoretic
algorithm with theoretical performance guarantees. In addition, we present a comparison of the mathematical models underlying Totoro and Totoro$^+$ in Appendix~\ref{app:math_model_comparison}.
}

\subsection{Problem formulation}
A dataflow tree in Totoro$^+$ consists of at most $N$ nodes, and there are at most $P$ paths to reach the root of the tree. Since each path $p\in [P]$ has different success rate, and capacity, we let $\mathcal{R}^p(\cdot | k,\theta_p)\in [0,1]$ denote the reward distribution for path $p$ with mean $r^p(k,\theta_p)$ given that $k$ nodes select path $p$ with mean success rate being $\theta_p$.

The higher the mean success rates, the higher the rewards. In contrast, the more nodes select the same path, the less the rewards they receive as the success rate and data rate drop.

Suppose the joint paths chosen by all nodes are denoted by $\textbf{p}=[p_1, p_2, \cdots, p_N]$, where $p_n$ denotes the path selected by node $n$, then we let $n^p(\textbf{p})$ denote the number of nodes selecting path $p$ given the joint action $\textbf{p}$. A node $n$ selects path $p$ and gets the reward $r_n(p)$ drawn from the reward distribution $\mathcal{R}^p(\cdot | n^p(\textbf{p}),\theta_p)$ with mean $r^p(n^p(\textbf{p}),\theta_p)$.

\noindentparagraph{Optimization objective.} 
We aim to efficiently route $T$ packets (such as gradients in many rounds) from a worker node to a master node, minimizing the overall time taken. In other words, our goal is to maximize total rewards. Also, since collective actions taken by all nodes influence the rewards, we resort to see how those actions affect others' decisions. 

First, we define policy $\pi_n$ followed by node $n$ to select paths, where $\pi_n$ is from the probability simplex over node $n$'s action space $P_n$, denoted by $\Delta(P_n)$, where $P_n\subseteq [P]$ is the set of paths available for node $n$'s selection. Similarly, we can define a general policy $\pi=[\pi_1, \pi_2, \cdots, \pi_N]$ as a tuple in the joint space $\Delta(P_1)\times\cdots \times\Delta(P_N)$. Hence, we can have $\textbf{p}=(p_1, p_2, \cdots, p_N)\sim \pi$ and $p_n\stackrel{i.i.d}{\sim} \pi_n$.

Then, since the policy influences the rewards the nodes receive, we use the rewards to represent \textit{the value of a policy}. We define the value of policy $\pi$ for node $n$ as $V^\pi_n = \mathbb{E}_{\textbf{p}\sim\pi}[r_n(\textbf{p})]$, where $r_n(\textbf{p}) = r^{p_n}(n^{p_n}(\textbf{p}),\theta_{p_n})$. Let $\pi_{-n}$ be the marginal joint policy of nodes $1,\cdots, n-1, n+1,\cdots, N$. We refer to~\cite{congestion_games,nash_regret_avg} to  define the \textit{Nash Regret} after $T$ packets as
\begin{align}
    \text{Nash-Regret}(T)=\sum_{t=1}^T \max_{n\in [N]}(V^{\pi^+_n,\pi^t_{-n}}_n - V^{\pi^t}_n),
\end{align}
where $\pi^t$ is the policy for packet $t$, 
$V^{\pi^t}_n$ is the value of policy $\pi^t$ for node $n$, 
$\pi^+_n=\argmax_{\mu \in \Delta(P_n)} V_n^{\mu, \pi^t_{-n}}$ represents the best response of node $n$ under policy $\pi$, and $V^{\pi^+_n,\pi^t_{-n}}_n$ is the value of the policy $(\pi^+_n,\pi^t_{-n})$. The Nash Regret indicates the value difference between when a node unilaterally switches its policy and when it follows the same policy. The optimization objective is to learn a general policy $\pi^*$ over $T$ packets such that 
\begin{align}
    \text{Nash-Regret}(T) \leq \epsilon,
\end{align}
where $\epsilon >0$ is a given tolerance threshold.

\subsection{Our algorithm}
\label{subsec: our solution}

\begin{algorithm}[t]
    \small
    \caption{Game-theoretic distributed hop-by-hop routing algorithm}
    \label{pseudo:our_algorithm}
    \begin{algorithmic}[1]
        \INPUT{mixture weights $\alpha,\beta \in [0,1]$; initial policy $\pi^1_n$ for all $n\in N$.}
        \For{episode $k=1, 2, \cdots$}
            \For{packet $t=1,\cdots, \tau$}
                \State \parbox[t]{200pt}{Each node $n$ in parallel selects hop $p^{k,t}_n\sim \pi^k_n$ to send a packet, observes reward $r^{k,t}_n$. \strut}
                \label{pseudocode:collect_info}
            \EndFor
            \For{node $n=1,\cdots, N$} \textbf{in parallel}                
                \State \parbox[t]{200pt}{$\rho_n^k \leftarrow \argmin_{\lambda\in \Delta(P_n)} \text{det}(M(\lambda))$.\strut}
                \label{pseudocode:d_optimal_design}
                                
                \State \parbox[t]{200pt}{$\widehat{\nabla}^k_n\Phi(p) \leftarrow \frac{1}{\tau}\sum_{t=1}^\tau \psi(p)^\top M(\pi^k_n)^{-1}\psi(p^{k,t}_n)r^{k,t}_n$, for all $p\in P_n$.\strut}
                \label{pseudocode:estimate_gradient}
                \State \parbox[t]{200pt}{$\Tilde{\pi}^{k+1}_n\leftarrow \argmax_{\lambda\in \Delta(P_n)}\langle \lambda, \widehat{\nabla}^k_n\Phi \rangle$.\strut}  \label{pseudocode:get_optimal_policy_by_gradient}
                \State \parbox[t]{200pt}{$\pi^{k+1}_n \leftarrow \alpha\big[\underbrace{\pi^k_n+\beta(\Tilde{\pi}^{k+1}_n-\pi^k_n)}_{\text{Frank-Wolfe update}}\big]$ 
                $+ \underbrace{(1-\alpha)\rho^k_n}_{\text{Exploration}}$.\strut}\label{pseudocode:update_policy_with_exploration}
            \EndFor
        \EndFor
    \end{algorithmic}
\end{algorithm}

We propose a game-theoretic distributed hop-by-hop routing algorithm based on the bandit feedback model~\cite{stochastic_online_shortest_path, bandit_algo_book,online_influence,congestion_games,algorithmic_game_theory}. The pseudo-code of the game-theoretic distributed hop selection algorithm is presented in Algorithm~\ref{pseudo:our_algorithm}.
{\color{dv}
The
notations used in the algorithm are summarized in Appendix~\ref{app:notation_table}.
}

When node $n$ receives a packet $t$ in episode $k$ from a previous node, it follows the current policy $\pi^k_n$ to select the next hop $p^{k,t}_n$, 
sends the packet, and observes the corresponding reward $r^{k,t}_n$ (line~\ref{pseudocode:collect_info}). Whenever node $n$ has sent $\tau$ packets and collected $\tau$ rewards, it updates the policy $\pi^k_n$ based on the $\tau$ rewards collected.

First, we identify an exploratory policy with its policy correlation matrix:
\begin{align}
    M(\lambda) = \sum_{p\in \lambda} \lambda(p)\psi(p)\psi(p)^\top,
    \label{eq:policy_correlation_matrix}
\end{align}
which has the minimum determinant among all possible policies in the probability simplex $\Delta(P_n)$ over node $n$'s action space $P_n$ (line~\ref{pseudocode:d_optimal_design}), where $\lambda(p)$ denotes the probability of selecting the next hop $p$, and $\psi(p)$ represents the vector-valued function mapping hop $p$ to a one-hot vector. 

Next, we use linear regression with $\tau$ rewards to estimate the gradient of the mapping from policy $\pi^{k}_n$ to potential rewards for node $n$ (line~\ref{pseudocode:estimate_gradient}), where $M(\pi^k_n)^{-1}$ represents the inverse of $M(\pi^k_n)$. We yield $|P_n|$ gradient estimators, which form a $|P_n|$-dimension vector $\widehat{\nabla}^k_n\Phi = [\widehat{\nabla}^k_n\Phi(p_1), \cdots, \widehat{\nabla}^k_n\Phi(p_{|P_n|})]$. With the gradient estimator for all possible hops, we calculate the dot product of the estimator and all possible policies $\lambda \in \Delta(P_n)$ and obtain the optimal policy $\Tilde{\pi}^{k+1}_n$ that yields the maximum dot product (line~\ref{pseudocode:get_optimal_policy_by_gradient}).

Lastly, we leverage the Frank-Wolfe method~\cite{frank_wolfe_algo} to linear combine current policy $\pi^k_n$ with optimal policy $\Tilde{\pi}^{k+1}_n$:
\begin{align}
    \pi^k_n+\beta(\Tilde{\pi}^{k+1}_n-\pi^k_n),
    \label{eq:frank_wolfe_update}
\end{align}
where $\beta$ is in the range $[0,1]$. The Frank-Wolfe update in Eq.~(\ref{eq:frank_wolfe_update}) guarantees that the results of the linear combination fall in the probability simplex $\Delta(P_n)$. In addition, we add exploratory policy $\rho^k_n$ to the Frank-Wolfe update (line~\ref{pseudocode:update_policy_with_exploration}):
\begin{align}
    \alpha\big[\pi^k_n+\beta(\Tilde{\pi}^{k+1}_n-\pi^k_n)\big] + (1-\alpha)\rho^k_n,
    \label{eq:frank_wolfe_update_with_exploration}
\end{align}
where $\alpha$ is also in the range $[0,1]$ and determines the extent to explore different policies. Node $n$ follows the final policy $\pi^{k+1}_n$ by Eq.~(\ref{eq:frank_wolfe_update_with_exploration}) to route packet $\tau+1$ to $\tau+\tau$ in episode $k+1$, and hence each episode $k$ has individual policy $\pi^{k}_n$ to route $\tau$ packets. 
{\color{dv}
We provide a numerical example to illustrate the workflow of Algorithm~\ref{pseudo:our_algorithm} in Appendix~\ref{sec:numerical_example}, discuss how local congestion is observed and inferred at the application level in Appendix~\ref{app:local_congestion}, and analyze the implementation and adaptivity of Algorithm~\ref{pseudo:our_algorithm} in Appendix~\ref{sec:impl_algo}.
}

\subsection{Theoretical analysis.}
\label{subsec:theor_anal}
We conduct a theoretical analysis on Algorithm~\ref{pseudo:our_algorithm}
in two aspects: the \textit{Nash regret bound} and \textit{time complexity}. 

\begin{theorem}
    Let $T=k\tau$ and assume that each policy in $\Delta(P_n)$ has no zero element for all $n$. By running Algorithm~\ref{pseudo:our_algorithm} with gradient estimator $\widehat{\nabla}^k_n\Phi(p)$ defined in line~\ref{pseudocode:estimate_gradient} and exploratory policy $\rho^k_n$ defined in line~\ref{pseudocode:d_optimal_design}, if $k\geq \frac{2}{N}$ , then with probability $1-\delta$, we have    
    \begin{align}
        \text{Nash-Regret}(T) \leq \Tilde{\mathcal{O}}(N^2 T^{5/6}\log N).\nonumber
    \end{align}
    \label{theo:nash_regret_bound}
\end{theorem}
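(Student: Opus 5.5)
The plan follows the Frank--Wolfe-with-exploration analysis for potential games under bandit feedback in~\cite{congestion_games,nash_regret_avg}, adapted to the per-node estimator of line~\ref{pseudocode:estimate_gradient}. A congestion game is a potential game. We therefore introduce the Rosenthal potential $\Phi(\pi)=\mathbb{E}_{\textbf{p}\sim\pi}\big[\sum_{p}\sum_{j=1}^{n^p(\textbf{p})} r^p(j,\theta_p)\big]$. It satisfies $V_n^{\mu,\pi_{-n}}-V_n^{\pi}=\Phi(\mu,\pi_{-n})-\Phi(\pi)$ for every unilateral deviation $\mu$. Moreover, $\Phi$ is multilinear in the product policy, bounded by $N$, and $L$-smooth with $L=\mathcal{O}(N)$ in each block $\pi_n$.

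\textbf{Step 1: reduce Nash regret to Frank--Wolfe gaps.} Linearity of $V_n$ in $\mu$ gives
\begin{align}
\max_n\big(V_n^{\pi^+_n,\pi^t_{-n}}-V^{\pi^t}_n\big)\le \max_n \max_{\mu\in\Delta(P_n)}\langle \mu-\pi^k_n,\nabla_n\Phi(\pi^k)\rangle=:G_k. \nonumber
\end{align}
Within episode $k$ the policy is fixed, so $\text{Nash-Regret}(T)\le \tau\sum_{k} G_k$.

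\textbf{Step 2: bound the estimator error.} Because $\psi$ is one-hot, $M(\lambda)$ is diagonal, and by the no-zero-element assumption it is invertible. The estimator in line~\ref{pseudocode:estimate_gradient} is then an importance-weighted average, unbiased for $\nabla_n\Phi(\pi^k)(p)$, since $\mathbb{E}[r_n\mid p_n=p]$ is exactly the partial derivative of $\Phi$ by the potential property. The $(1-\alpha)\rho^k_n$ mixing keeps $\min_p \pi^k_n(p)\ge (1-\alpha)/|P_n|$, using the fact that the determinant-optimal design over one-hot features is uniform. Bernstein's inequality, with a union bound over $n,p,k$ that contributes the $\log N$ and $\log(1/\delta)$ factors, then gives with probability $1-\delta$
\begin{align}
\|\widehat{\nabla}^k_n\Phi-\nabla_n\Phi(\pi^k)\|_\infty\le \varepsilon_{\rm est}=\tilde{\mathcal{O}}\Big(\sqrt{\tfrac{|P_n|}{(1-\alpha)\tau}}\Big).\nonumber
\end{align}

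\textbf{Step 3: ascent lemma.} All nodes update simultaneously. Smoothness of $\Phi$ across $N$ blocks yields
\begin{align}
\Phi(\pi^{k+1})\ge \Phi(\pi^k)+\alpha\beta\sum_n\langle\tilde\pi^{k+1}_n-\pi^k_n,\nabla_n\Phi\rangle - 2N\alpha\beta\varepsilon_{\rm est}-\mathcal{O}(N(1-\alpha))-\mathcal{O}(LN^2\beta^2).\nonumber
\end{align}
The $(1-\alpha)$ term is the loss from the exploration mixture. Each summand in the second term is at least $G_k^{(n)}-2\varepsilon_{\rm est}$, so the sum dominates $G_k$.

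\textbf{Step 4: telescope and tune.} Summing over $K=T/\tau$ episodes and using $\Phi\le N$ gives
\begin{align}
\tfrac{1}{K}\sum_k G_k\lesssim \tfrac{N}{\alpha\beta K}+N\varepsilon_{\rm est}+\tfrac{N(1-\alpha)}{\beta}+N^2\beta.\nonumber
\end{align}
Multiplying by $T$ and choosing $\beta\asymp T^{-1/6}$, $1-\alpha\asymp T^{-1/3}$, and $\tau\asymp T^{2/3}$ balances the terms at $\tilde{\mathcal{O}}(N^2T^{5/6}\log N)$. The condition $k\ge 2/N$ makes the step-size choices admissible, i.e.\ $\beta\le 1$ relative to the number of episodes.

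The main obstacle is Step 3. Simultaneous updates by all $N$ players create cross-terms, which produce the $N^2$ factor. In addition, $G_k$ is defined via the maximum over players while the ascent controls only the sum; I will use $\max\le\sum$, which is lossy but sufficient for the stated bound. Verifying that the parameter schedule yields exactly the $T^{5/6}$ exponent is the remaining delicate bookkeeping.
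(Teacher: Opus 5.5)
Your route is the paper's route. The paper also:
\begin{itemize}
\item bounds the importance-weighted estimator and its second moment through the exploration mixture (Lemmas~\ref{lemma:bound_of_estimator} and~\ref{lemma:bound_of_estimator_second_order});
\item applies Bernstein with a union bound (Lemma~\ref{lemma:estimator_bound});
\item uses $N$-smoothness of $\Phi$ in $\|\cdot\|_1$ (Lemma~\ref{lemma:lip_and_smooth}) for a Frank--Wolfe ascent with the $2\|\nabla\Phi-\widehat{\nabla}\Phi\|_\infty$ gap correction;
\item telescopes and writes $\text{Nash-Regret}(T)=\tau\sum_k G(\pi^k)$ with the gap summed over nodes (your $\max\le\sum$);
\item uses the same rates, $\tau\asymp T^{2/3}$, $1-\alpha\asymp T^{-1/3}$, $\beta\asymp T^{-1/6}$.
\end{itemize}

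\textbf{The step that fails.} Your exploration floor in Step 2 does not hold as written. Line~\ref{pseudocode:d_optimal_design} takes the $\argmin$ of $\det M(\lambda)$, not the $\argmax$. The worked example in Appendix~\ref{sec:numerical_example} confirms this: it picks $\rho^1_{l_1}=[0.1,0.9]$, the candidate with the smallest determinant.

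For one-hot $\psi$ we have $\det M(\lambda)=\prod_p\lambda(p)$. The maximizer is uniform, but the minimizer is the most skewed admissible policy; over the closed simplex it lies on the boundary, where $M$ is singular. So the fact you invoke does not apply, and $\min_p\pi^k_n(p)\ge(1-\alpha)/|P_n|$ is false. In that same example, with $\alpha=0.5$ one gets $\pi^2_{l_1}=[0.2,0.8]$, below your claimed floor of $0.25$.

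The floor must come from the theorem's hypothesis that every policy in the finite candidate set $\Delta(P_n)$ has no zero entry. With $\lambda_{\min}:=\min_{\lambda\in\Delta(P_n)}\min_p\lambda(p)>0$, you get $\|\psi(p)\|^2_{M(\pi^k_n)^{-1}}=1/\pi^k_n(p)\le 1/((1-\alpha)\lambda_{\min})$. This is exactly the constant $c_1$ of Lemma~\ref{lemma:bound_of_estimator}, and $1/\lambda_{\min}$ replaces $|P_n|$ in your $\varepsilon_{\mathrm{est}}$. You use that hypothesis only for invertibility, but this floor is its real role.

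\textbf{Bookkeeping: the $N$-dependence.} Your quadratic term $\mathcal{O}(LN^2\beta^2)$ with $L=\mathcal{O}(N)$ becomes $N^3\beta$ after dividing by $\alpha\beta$, not $N^2\beta$. So $N$-free step sizes only give $N^3T^{5/6}$. The paper's schedule is $\tau=K^2$, $1-\alpha=1/(NK)$, $\beta=1/(N\sqrt{K})$, with $K=T^{1/3}$. This puts a $1/N$ into both step sizes and keeps every term of $\sum_k G(\pi^k)$ at $\tilde{\mathcal{O}}(N^2\sqrt{K})$. Your form of that term, order $N^3\beta^2$ plus a dominated $(1-\alpha)^2$ part, is what this tuning needs. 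The paper's displayed $2N^3(\alpha^2+\beta^2)$ is too loose as $\alpha\to 1$.

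\textbf{Bookkeeping: the role of $k\ge 2/N$.} This hypothesis is what makes $1-\alpha=1/(NK)\le 1/2$, i.e.\ $\alpha\ge 1/2$, so the $1/\alpha$ factors are at most $2$. It has nothing to do with $\beta\le 1$.

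\textbf{Bookkeeping: unbiasedness.} $\mathbb{E}[r_n\mid p_n=p]$ equals $\partial\Phi/\partial\pi_n(p)$ only up to the $p$-independent constant $\mathbb{E}[\Phi_{-n}(\mathbf{p}_{-n})]$, the potential of the other players. This is harmless, because Frank--Wolfe only pairs the gradient with differences of policies, but ``exactly'' should be weakened.
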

The proof details of Theorem~\ref{theo:nash_regret_bound} are given in Appendix~\ref{sec:proof_nash_regret}.
The results in Theorem~\ref{theo:nash_regret_bound} demonstrate a \textit{sublinear} Nash regret with polynomial dependence on $N$, the number of nodes. Following Definition 1 in~\cite{congestion_games} and Section 3 in~\cite{nash_regret_avg}, we obtain the following corollary.
\begin{corollary}
    Algorithm~\ref{pseudo:our_algorithm} reaches an $\epsilon$-approximate Nash equilibrium. 
    \label{coro:epsilon_approximate}
\end{corollary}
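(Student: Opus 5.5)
The corollary follows from Theorem~\ref{theo:nash_regret_bound} by the standard "average regret to approximate equilibrium" reduction of Definition 1 in~\cite{congestion_games} and Section 3 in~\cite{nash_regret_avg}. An $\epsilon$-approximate Nash equilibrium is a policy $\pi$ such that
\begin{align}
\max_{n\in[N]}\big(V_n^{\pi_n^+,\pi_{-n}}-V_n^{\pi}\big)\le\epsilon .\nonumber
\end{align}
No single node can gain more than $\epsilon$ by deviating unilaterally. The plan is to show that some policy produced by Algorithm~\ref{pseudo:our_algorithm} satisfies this, with $\epsilon\to 0$ as $T$ grows.

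\textbf{Step 1: use the regret bound.} By Theorem~\ref{theo:nash_regret_bound}, with probability $1-\delta$,
\begin{align}
\text{Nash-Regret}(T)=\sum_{t=1}^T g(\pi^t)\le C N^2 T^{5/6}\log N ,\nonumber
\end{align}
where $g(\pi)=\max_n\big(V_n^{\pi_n^+,\pi_{-n}}-V_n^{\pi}\big)$ is the Nash gap and $C$ absorbs the logarithmic factors hidden in $\Tilde{\mathcal{O}}$. Every summand is nonnegative, because $\pi_n^+$ is a best response and so $V_n^{\pi_n^+,\pi_{-n}}\ge V_n^{\pi}$.

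\textbf{Step 2: pick out a good policy.} Since the summands are nonnegative, their minimum is at most their average. So there is some $t^\star\le T$ with
\begin{align}
g(\pi^{t^\star})\le \tfrac{1}{T}\,\text{Nash-Regret}(T)\le C N^2 T^{-1/6}\log N .\nonumber
\end{align}
Equivalently, a policy drawn uniformly from $\{\pi^1,\dots,\pi^T\}$ has expected gap below this bound. Policies are constant within each episode, so it is enough to select one of the $k$ episodes.

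\textbf{Step 3: choose $T$.} For any target $\epsilon>0$, take $T\ge (C N^2\log N/\epsilon)^6$ (up to log factors). Then $\pi^{t^\star}$ is an $\epsilon$-approximate Nash equilibrium with probability $1-\delta$. The hypotheses of the theorem also have to hold. The condition $k\ge 2/N$ is immediate, since $k\ge1$. The no-zero-element assumption is maintained by the exploration mixture with weight $1-\alpha$.

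\textbf{Main obstacle.} The weak point is that the output is the best iterate, or equivalently a uniformly random iterate, not the last iterate. Identifying $t^\star$ would require estimating the gaps $g(\pi^t)$, which in turn needs best-response values. I would handle this the way~\cite{nash_regret_avg} does: define the algorithm's output as the uniformly sampled iterate and state the guarantee in expectation or with high probability. Alternatively, one can appeal to the potential-game structure: the Frank--Wolfe ascent on $\Phi$ makes the potential nearly monotone, and that can be used to argue that the final iterates also have a small gap.
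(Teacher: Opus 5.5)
Your proof is correct and follows the paper's approach. The paper only notes that Theorem~\ref{theo:nash_regret_bound} gives sublinear Nash regret and appeals to Definition~1 of \cite{congestion_games} and Section~3 of \cite{nash_regret_avg}; your step that the minimum gap is at most the average gap is exactly that reduction, made explicit, and it produces a best or uniformly sampled iterate with Nash gap $\tilde{\mathcal{O}}(N^2T^{-1/6}\log N)$ (the last-iterate alternative you sketch is not needed, and near-monotonicity of $\Phi$ alone would not establish it).
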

Corollary~\ref{coro:epsilon_approximate} tells us that nodes that run Algorithm~\ref{pseudo:our_algorithm} to update their policies can find a general policy such that a node receives at most $\epsilon$ more rewards by unilaterally changing its policy, implying reaching an $\epsilon$-approximate Nash equilibrium. In addition, we provide the Nash regret guarantee under bounded asynchrony in the following corollary.

{\color{dv}
\begin{corollary}
\label{coro:async}
Under bounded asynchrony, the Nash regret guarantee in Theorem~\ref{theo:nash_regret_bound} continues to hold, up to an additional sublinear regret term induced by asynchrony.
\end{corollary}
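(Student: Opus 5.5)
We plan to prove Corollary~\ref{coro:async} by reducing the asynchronous execution of Algorithm~\ref{pseudo:our_algorithm} to the synchronous one analyzed in Theorem~\ref{theo:nash_regret_bound}, with each node's policy update treated as a perturbed (delayed) Frank--Wolfe step.

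\textbf{Asynchrony model.} We formalize bounded asynchrony by a delay parameter $D$. When node $n$ computes $\pi^{k+1}_n$ in episode $k$, the rewards it collected during that episode were generated under joint policies $\pi^{k'}_{-n}$ of the other nodes, with $k-D\le k'\le k$.

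\textbf{Step 1: gradient estimator.} We rerun the argument behind the gradient-estimation step of Theorem~\ref{theo:nash_regret_bound} under this model.
\begin{itemize}
\item The estimator $\widehat{\nabla}^k_n\Phi$ in line~\ref{pseudocode:estimate_gradient} remains unbiased, but for a mixture of the potential gradients $\nabla_n\Phi$ evaluated at the stale policies $\pi^{k-d}$, $0\le d\le D$.
\item The concentration part carries over unchanged. The $(1-\alpha)\rho^k_n$ exploration component still lower-bounds the eigenvalues of $M(\pi^k_n)$, and the estimator is still an average of $\tau$ bounded terms, so the same matrix-concentration bound with probability $1-\delta$ applies.
\end{itemize}

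\textbf{Step 2: staleness bias.} We bound the extra bias term by smoothness of the potential in the policies. Since $\Phi$ for congestion games is multilinear in the product policy, $\nabla_n\Phi$ is Lipschitz in $\pi$ with constant $\mathrm{poly}(N)$. Each Frank--Wolfe step moves $\pi_n$ by at most $\beta+(1-\alpha)$ in $\ell_1$. Over $D$ episodes, and summing over nodes, the bias is therefore at most $\mathcal{O}\big(\mathrm{poly}(N)\,D(\beta+1-\alpha)\big)$.

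\textbf{Step 3: potential ascent with an error term.} We insert this bias into the ascent inequality for $\Phi$ used in the proof of Theorem~\ref{theo:nash_regret_bound}, namely the Frank--Wolfe gap argument: $\Phi(\pi^{k+1})-\Phi(\pi^k)\ge \beta\cdot\mathrm{gap}_k - \mathcal{O}(\beta^2)-\text{estimation error}$. The effect is an additive term $\beta\cdot \mathrm{poly}(N) D(\beta+1-\alpha)$ per episode. Telescoping over the $K=T/\tau$ episodes, and using that $\Phi$ is bounded by $\mathcal{O}(N)$, gives a bound on $\sum_k \mathrm{gap}_k$.

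\textbf{Step 4: back to Nash regret.} We convert the gap bound to Nash regret exactly as in the synchronous proof, using that for potential games the maximum unilateral improvement is controlled by the Frank--Wolfe gap.

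\textbf{Step 5: tuning.} With the same tuning of $\alpha,\beta,\tau$ that yields $T^{5/6}$, namely $\beta, 1-\alpha$ decaying polynomially in $T$, the asynchrony contribution is of order $\mathrm{poly}(N)\,D\,T\cdot\beta(\beta+1-\alpha)$. This is sublinear in $T$ because $\beta(\beta+1-\alpha)=o(1)$. The total bound is therefore $\Tilde{\mathcal{O}}(N^2T^{5/6}\log N)$ plus this extra sublinear term, which is the claim.

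\textbf{Main obstacle.} We expect Step~2 to be the hardest: showing that the delayed-policy bias is genuinely second order, i.e.\ proportional to $\beta$ times the per-step policy movement, rather than first order. Our first attempt would be to write $\nabla\Phi(\pi^{k})-\nabla\Phi(\pi^{k-d})$ as a telescoping sum of one-episode differences and bound each difference by the multilinearity Lipschitz constant.
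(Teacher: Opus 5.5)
Your route is essentially the paper's: bounded delay $D$, a per-update policy movement $\eta$ (which you pin down as $\mathcal{O}(\beta+1-\alpha)$ from line~\ref{pseudocode:update_policy_with_exploration}), a Lipschitz bound giving a staleness bias of order $\mathrm{poly}(N)D\eta$ in $\widehat{\nabla}^k_n\Phi$, and then plugging that bias into the synchronous Frank--Wolfe analysis. The paper assumes the expected reward is $L$-Lipschitz in the other nodes' mixed policies rather than deriving this from multilinearity of $\Phi$. It concludes $\text{Nash-Regret}_{\mathrm{async}}(T)\le \text{Nash-Regret}_{\mathrm{sync}}(T)+\widetilde{\mathcal{O}}(LND\eta T)$, which is sublinear when $D\eta=o(1)$.

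Your Step~5 accounting is off by a factor of $\beta$. In the ascent inequality the bias $b$ does enter as $\alpha\beta\cdot 2Nb$. But to isolate $\sum_k G(\pi^k)$ you divide by $\alpha\beta$, so the $\beta$ cancels. The bias therefore contributes $2NKb$ to $\sum_k G(\pi^k)$ and $2Nb\,\tau K=2NbT$ to the Nash regret. The extra term is of order $\mathrm{poly}(N)\,D\,(\beta+1-\alpha)\,T$, not $\mathrm{poly}(N)\,D\,T\,\beta(\beta+1-\alpha)$.

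With $1-\alpha=\frac{1}{NK}$, $\beta=\frac{1}{N\sqrt{K}}$, $\tau=K^2$ (so $K=T^{1/3}$), this is $\mathcal{O}(\mathrm{poly}(N)\,D\,T^{5/6})$. That is the same $T$-exponent as the main term, not the $T^{2/3}$ your expression would give. The corollary still follows. However, the condition that makes the extra term sublinear is $D(\beta+1-\alpha)=o(1)$, i.e.\ the paper's $D\eta=o(1)$, so you cannot silently treat $D$ as a constant.

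Your ``main obstacle'' is a red herring, and trying to resolve it would fail. The staleness bias of the gradient is genuinely first order. $\nabla_{\pi_n}\Phi$ is multilinear in $\pi_{-n}$, so a $\Theta(\eta)$ move by another node generically changes it by $\Theta(\eta)$. The telescoping you propose gives exactly the first-order bound $\mathcal{O}(\mathrm{poly}(N)D\eta)$ and nothing better. The factor $\beta$ in your Step~3 is only the Frank--Wolfe step size multiplying the gradient error, and it cancels as explained above.

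You do not need second order. The first-order bias already yields a sublinear extra term, because $\beta$ and $1-\alpha$ vanish under the tuning. This is precisely the paper's argument.
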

The proof details of Corollay~\ref{coro:async} are given in Appendix~\ref{app:proof_corollary2}.} Finally, we analyze the time complexity of Algorithm~\ref{pseudo:our_algorithm} in the following theorem.

\begin{theorem}
    The time complexity of Algorithm~\ref{pseudo:our_algorithm} in each episode is $\mathcal{O}(\tau \log^3N + |\Delta (P_n)| \log^3N)$, where $\tau$ represents the number of packets sent in each episode, and $|\Delta (P_n)|$ denotes the number of policies node $n$ can adopt.
    \label{theo:time_complexity}
\end{theorem}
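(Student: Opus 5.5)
The proof goes line by line through Algorithm~\ref{pseudo:our_algorithm}, costing each step per episode for a fixed node $n$. Nodes run in parallel, so the per-episode cost is the cost at one node. The one structural fact I would use is that the available next hops $P_n$ come from the node's DHT routing state: the routing table has $\lceil \log_{2^b} N\rceil$ rows, so $|P_n| = d = \mathcal{O}(\log N)$. Hence every vector $\psi(p)$ lives in $\mathbb{R}^{d}$, and every matrix $M(\lambda)$ is $d\times d$. With this, linear-algebra operations cost $\mathcal{O}(d^3)=\mathcal{O}(\log^3 N)$ (the ``Matmul'' cost in Table~\ref{tab:diff_summary}), and a single matrix--vector product costs $\mathcal{O}(\log^2 N)$.

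\textbf{Step 1: packet loop.} Line~\ref{pseudocode:collect_info} samples from $\pi^k_n$ over $d$ hops, which takes $\mathcal{O}(\log N)$ time, and records a reward. Over $\tau$ packets this is $\mathcal{O}(\tau\log N)$.

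\textbf{Step 2: gradient estimator.} In line~\ref{pseudocode:estimate_gradient} I compute $M(\pi^k_n)$ in $\mathcal{O}(d^2)$, since the $\psi(p)$ are one-hot, or in $\mathcal{O}(d^3)$ in general. Inverting it costs $\mathcal{O}(d^3)$; it is invertible by the assumption that policies have no zero entries. I then form $u=\sum_t M^{-1}\psi(p^{k,t}_n)r^{k,t}_n$. Each term is a matrix--vector product, so the sum costs $\mathcal{O}(\tau d^2)$. Evaluating $\psi(p)^\top u$ for all $p$ adds $\mathcal{O}(d^2)$. Bounding each per-packet product crudely by a matmul gives $\mathcal{O}(\tau\log^3N)$, which matches the first term of the claim.

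\textbf{Step 3: D-optimal design and linear maximization.} Lines~\ref{pseudocode:d_optimal_design} and~\ref{pseudocode:get_optimal_policy_by_gradient} both optimize over $\Delta(P_n)$. I follow the theorem's accounting, in which the optimizer ranges over the $|\Delta(P_n)|$ candidate policies. For each candidate $\lambda$, evaluating $\det M(\lambda)$ needs building $M(\lambda)$ plus an LU factorization, i.e.\ $\mathcal{O}(d^3)$. Evaluating $\langle\lambda,\widehat{\nabla}^k_n\Phi\rangle$ needs only $\mathcal{O}(d)$. Together these give $\mathcal{O}(|\Delta(P_n)|\log^3N)$. The Frank--Wolfe mixing in line~\ref{pseudocode:update_policy_with_exploration} is $\mathcal{O}(d)$. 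Summing Steps 1--3 yields the claimed bound.

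\textbf{Main obstacle.} The main obstacle is justifying $|P_n|=\mathcal{O}(\log N)$ and treating the simplex optimizations as enumeration over a finite policy set. The first I would take from Layer~1's routing-table bound $\lceil\log_{2^b}N-1\rceil$. The second I would state explicitly as the discretization implicit in $|\Delta(P_n)|$, noting that the linear maximization is in fact attained at a vertex.
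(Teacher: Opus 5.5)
Your proposal is correct and follows essentially the same line-by-line accounting as the paper. The paper charges $\mathcal{O}(\tau\log N)$ for hop sampling and $\mathcal{O}(|\Delta(P_n)|\log^3 N)$ for building and taking determinants of the $\mathcal{O}(\log N)\times\mathcal{O}(\log N)$ matrices in the D-optimal design, $\mathcal{O}(\tau\log^3 N)$ for gradient estimation after a single inversion, and lower-order costs for the linear maximization and the Frank--Wolfe step. Your precomputation of $u$ gives the tighter $\mathcal{O}(\log^3 N+\tau\log^2 N)$ for the estimator before you relax it. One nit: with $b$ fixed, $|P_n|=\mathcal{O}(\log N)$ comes from the $(2^b-1)\lceil\log_{2^b}N\rceil$ routing-table entries (plus the constant-size leaf set), not from the row count alone.
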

The proof details of Theorem~\ref{theo:time_complexity} are given in Appendix~\ref{sec:proof_time_complexity}.

\section{Implementation}
\label{sec: Implementation}

{\color{dv}
We implement Totoro$^+$ on top of the Pastry (v.2.1)~\cite{pastrycode} and PyTorch (v.2.10.0)~\cite{pytorch_software} software stacks. Such implementation
choice is motivated by the following considerations: (1) Pastry~\cite{pastry} is a widely used overlay and routing network for the implementation of a DHT similar to Chord~\cite{chord}. Instead of implementing another distributed system core, we can leverage Pastry’s excellent routing substrate (e.g., $\mathcal{O}(\log N)$ node lookup), self-repairing routing table, message transportation layer, and scalable application-level multicast infrastructure (Scribe~\cite{scribe}). These features greatly simplify the development process. (2) PyTorch~\cite{pytorch} provides a flexible and intuitive API for building neural networks. It also offers a rich ecosystem of libraries (e.g., TorchVision~\cite{torchvision} and HuggingFace Transformers~\cite{huggingface_transformers}) that support many modern neural network models.
}


We made the following major modifications: 
(1) We changed the original single P2P ring structure to a new locality-aware P2P multi-ring structure.  
(2) We implemented a fully decentralized ``many masters/many workers''
architecture by utilizing DHTs and adding several data
structures: a list of operations for tracking routing paths, selecting masters and workers, and constructing dynamically structured training trees. 
(3) We implemented a publish/subscribe messaging pattern for scalable model propagation and gradient aggregation. We introduced a serialization mechanism to convert trained models into binary arrays for low-cost communication over edge networks. 
(4) We introduced an advertise-discover mechanism to enable nodes to advertise and discover FL applications running over the overlay. We implemented a parallel recovery approach to dealing with the failure of masters and the failure of workers.
(5) We implemented a game-theoretic path planning model to ``autonomously'' replan or repair the dynamically structured dataflow tree by collecting feedback, detecting node stragglers or failures, balancing network traffic, and creating alternative routes. 
{\color{dv} We discuss the adaptivity of Totoro$^+$ to edge heterogeneity in Appendix~\ref{app:adap_edge_hetero}.
}

\begin{figure*}[t!]
  \captionsetup[subfigure]{width=4.1cm}
  \subfloat[Real-world edge topologies generated from the EUA dataset~\cite{edge_network_dataset}.]{\includegraphics[width=0.24\textwidth]{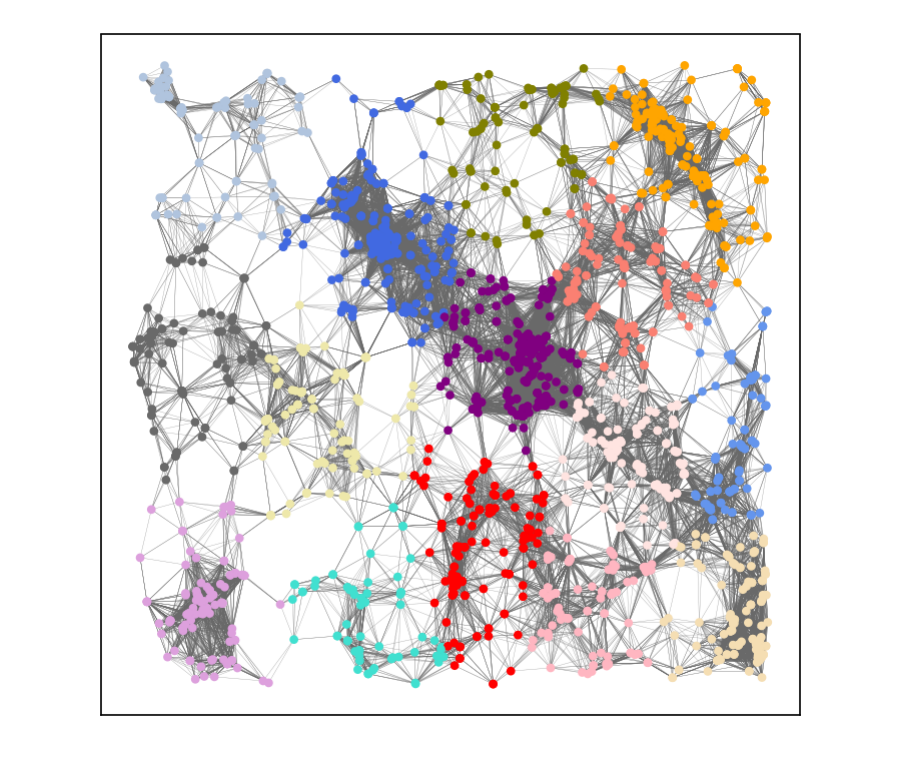}
  \label{fig:edge_topology}}
\hspace{-0.1in}
\hfill
  \subfloat[Normal probability plot of the
number of masters mapped per node.]{\includegraphics[width=0.245\textwidth]{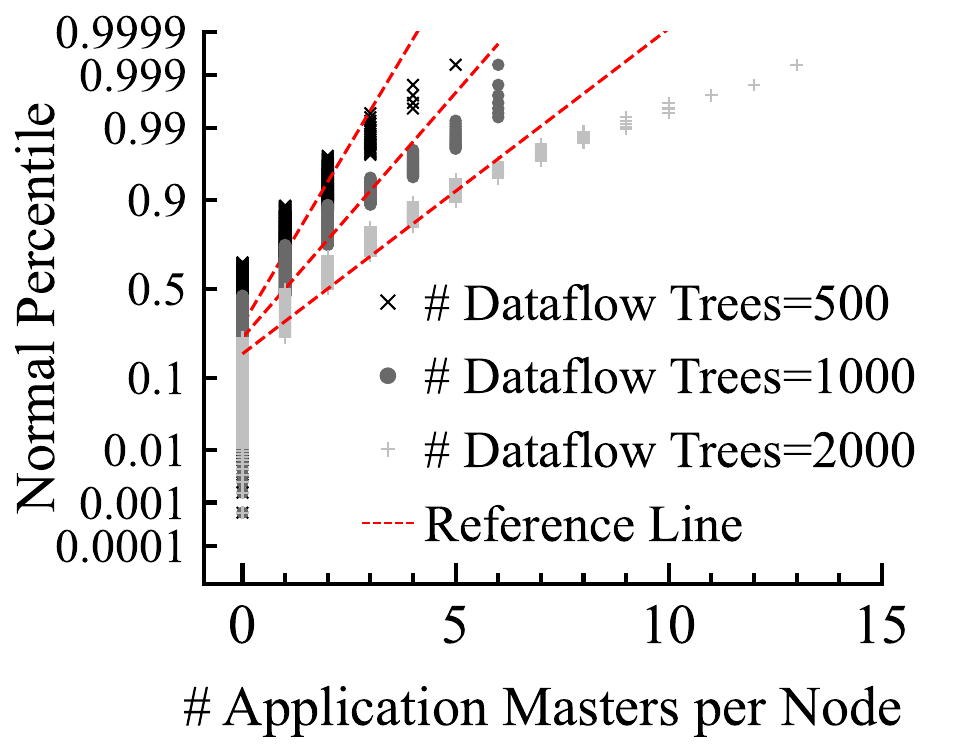}
  \label{fig:cdf}}
\hspace{-0.1in}     
 \hfill
   \subfloat[Totoro$^+$ scales with \#masters to handle the varying workload.]{\includegraphics[width=0.24\textwidth]{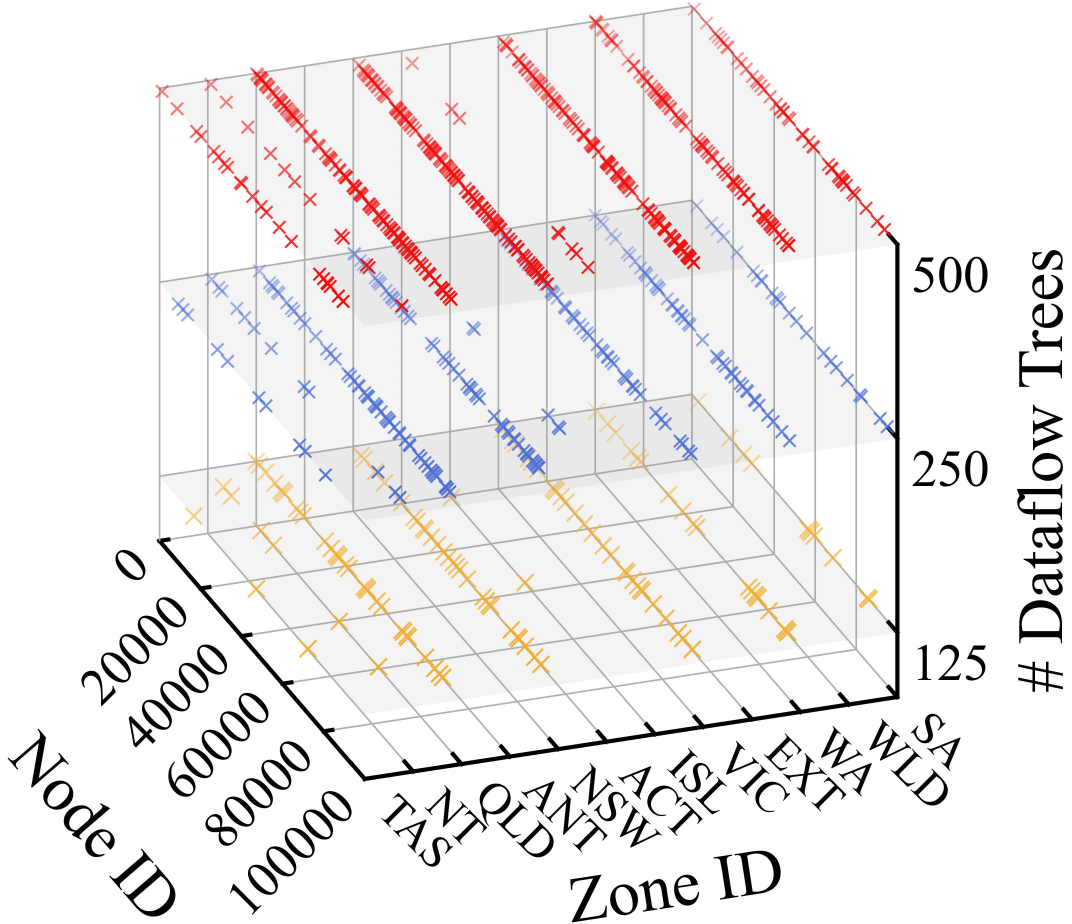}
   \label{fig:3d_plot}}
 \hfill
     \subfloat[The distribution of Totoro’s
dataflow trees over edge topologies.]{\includegraphics[width=0.245\textwidth]{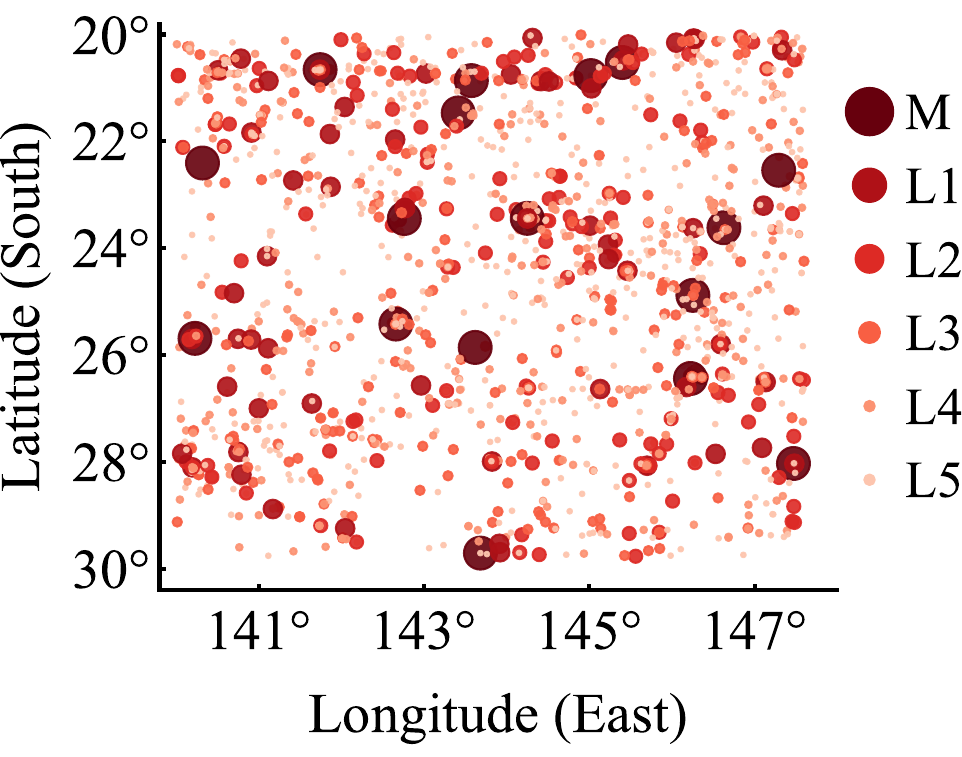}
   \label{fig:dot}}
     \vspace{0.1in}
  \caption{Totoro$^+$ excels at scalability by fairly distributing many dynamic-structured dataflow trees across large-scale edge topologies.}
  \vspace{0.01in}
\end{figure*}

\begin{figure*}[t!]
  \centering  
  \captionsetup[subfigure]{width=0.23\textwidth}
  \subfloat[Model dissemination time.]{\includegraphics[width=0.24\textwidth]{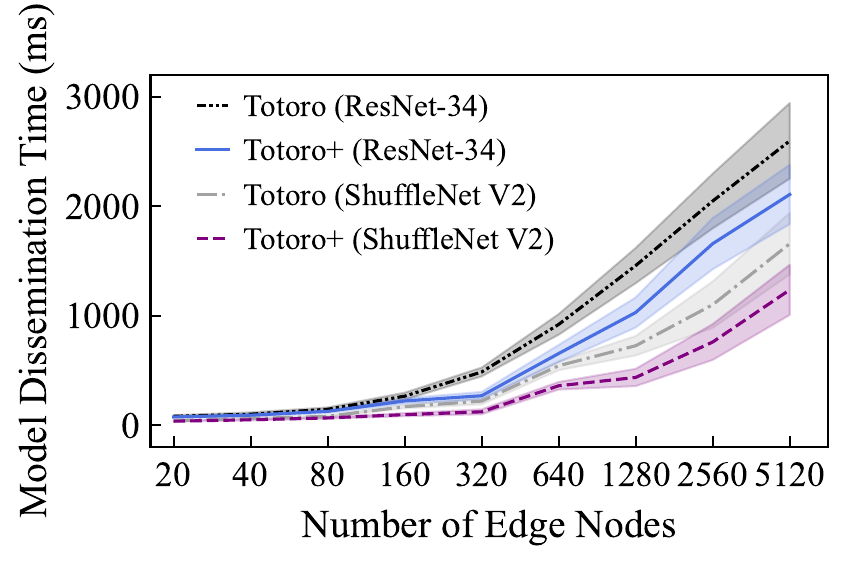}
  \label{fig:model_dissemination}}
 \hfill
  \subfloat[Gradient aggregation time.]{\includegraphics[width=0.24\textwidth]{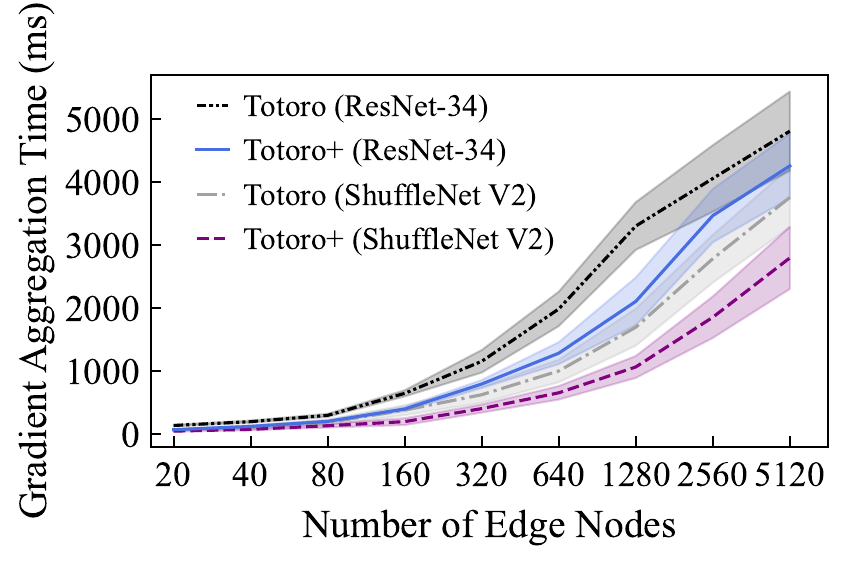}
  \label{fig:gradient_aggregation}}
 \hfill
 \subfloat[Model dissemination time of different fanouts.]{\includegraphics[width=0.24\textwidth]{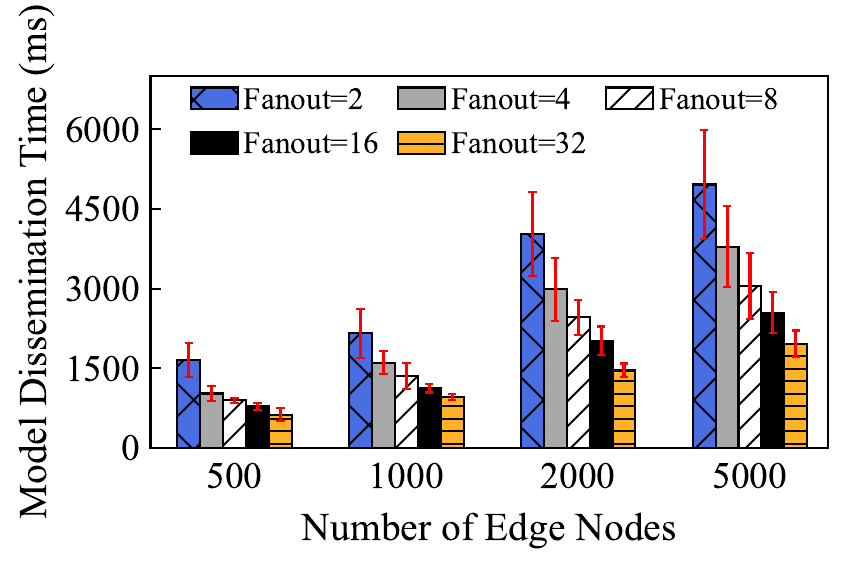}
  \label{fig:fanout_dissemination}}
  \hfill
     \subfloat[Gradient Aggregation time of different fanouts.]{\includegraphics[width=0.24\textwidth]{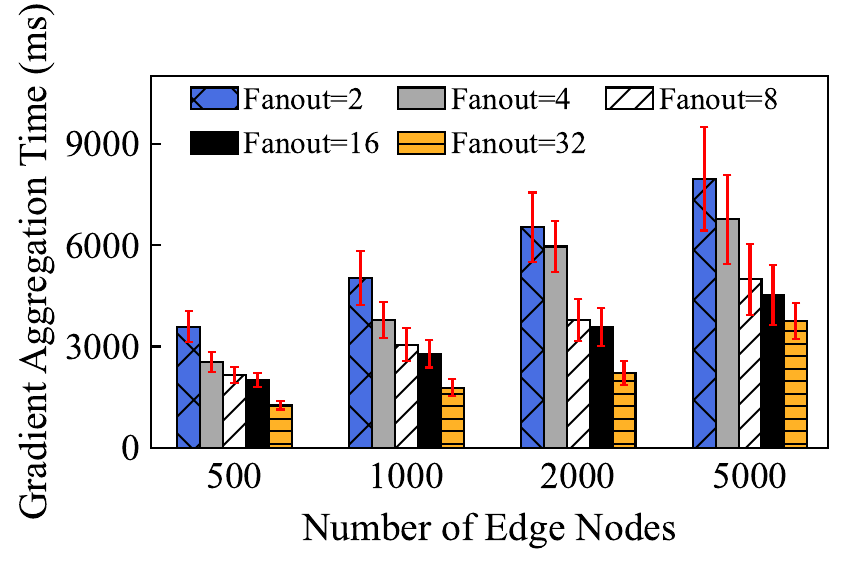}
  \label{fig:fanout_aggregation}}
    \vspace{0.12in}
  \caption{Totoro$^+$ scales with an exponentially increasing number of edge nodes for model dissemination and gradient aggregation.}
  \vspace{-0.05in}
\end{figure*}

\begin{figure}[t!]
  \centering
    \subfloat[TCP traffic cost.]{\includegraphics[width=0.235\textwidth]{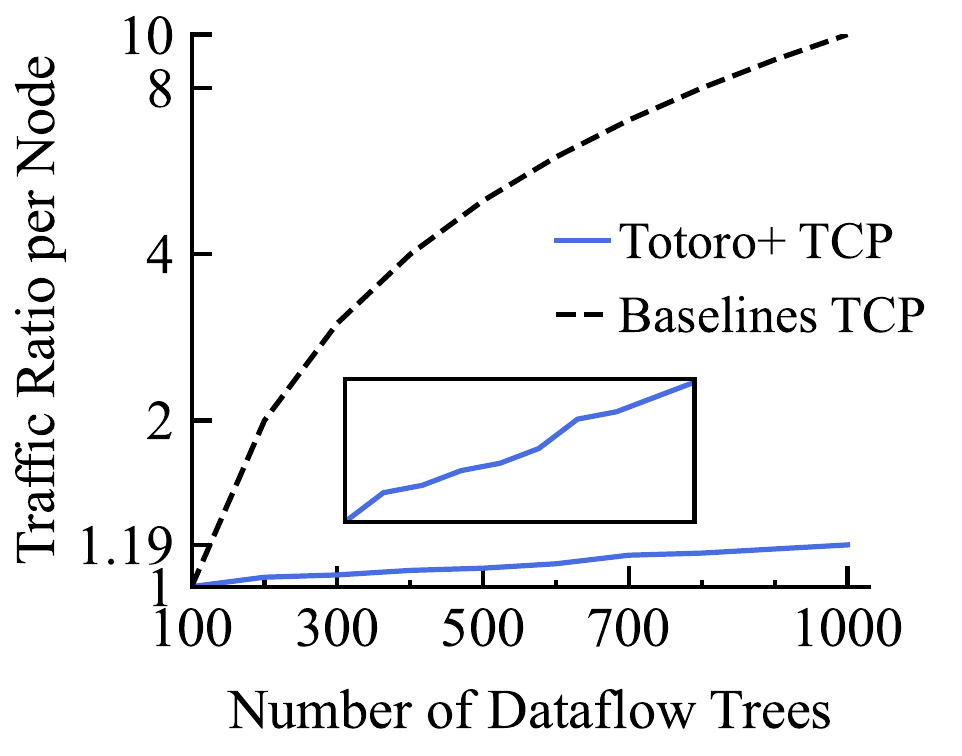}
    \label{fig:tcp}}
    \subfloat[UDP traffic cost.]{\includegraphics[width=0.235\textwidth]{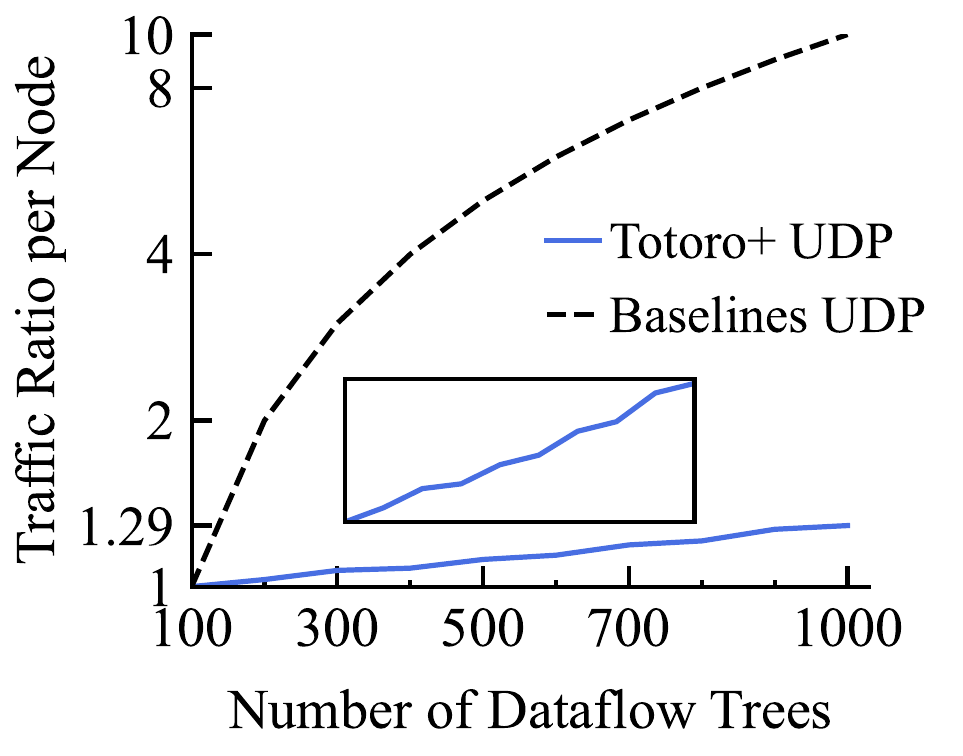}
    \label{fig:udp}}
    \vspace{0.1in}
    \caption{Totoro$^+$ reduces communication cost.}
    \vspace{-0.05in}
  \label{fig:eval_traffic}
\end{figure}

\section{Evaluation}
\label{sec:eval}

We evaluate Totoro$^+$’s performance in a real-world distributed environment that includes 500 Amazon EC2 nodes and uses real-world computer vision (CV) and natural language processing (NLP) datasets at different scales. We have the following key results.

\begin{itemize}[leftmargin=*, topsep=0.5ex]

\item Totoro$^+$ scales the number of masters to handle a varying workload (from 125 to 2000 concurrently running FL applications) in real-world edge topologies (Section~\ref{subsec:eval_scalability}).

\item Totoro$^+$ achieves $\mathcal{O}(\log N)$ hops for model dissemination and gradient aggregation (Section~\ref{subsec:eval_broadcast}).

\item Totoro$^+$ outperforms state-of-the-art FL systems (OpenFL~\cite{openfl} and FedScale~\cite{fedscale}) by speeding up the training time 1.2$\times$-14.0$\times$ to reach the equivalent model accuracy (Section~\ref{subsec:eval_accuracy}).

\item Totoro$^+$ efficiently adapts to edge networks with constrained bandwidth capacity (Section~\ref{subsec:adaptivity_analysis}).

\item Totoro$^+$ recovers from node failures to adapt to node churns (Section~\ref{subsec:failure_recovery}).

\end{itemize}

\subsection{Methodology}

\noindentparagraph{Experimental setup.} Totoro$^+$ is designed to operate in large deployments with millions of edge nodes. However, such a deployment is prohibitively expensive and impractical in an academic environment. As such, we resort to emulating a real-world edge setting on 500 AWS EC2 \texttt{t2.medium} nodes, each of which has 2 vCPUs and 4GB of RAM, and 100 GB of disk: (1) we use one JVM to represent one edge node and emulate up to $100k$ edge nodes on the testbed; (2) we divide the $100k$ edge nodes into geo-distributed zones based on the real-world EUA dataset~\cite{edge_network_dataset}, which consists of 95,271 edge nodes distributed across 12 Australian states and regions. To emphasize the system-level benefits of Totoro$^+$, we conduct experiments using homogeneous nodes in the main paper. 
{\color{dv}
Experimental results under heterogeneous node settings are reported in Appendix~\ref{app:adap_edge_hetero_resource}.
}

\noindentparagraph{Baselines.} We use OpenFL (v.1.3)~\cite{openfl} and FedScale (v.0.5)~\cite{fedscale} as the baseline. FedScale is a scalable and extensible open-source FL system and benchmarking suite developed by Symbiotic Lab~\cite{SymbioticLab}, which provides high-level and flexible API to implement, evaluate, and deploy FL algorithms easily in both standalone (single CPU/GPU) and distributed (multiple machines) settings. 
OpenFL is an open-source framework developed by Intel that runs FL in a single-machine setting.

\noindentparagraph{Parameters.} Totoro$^+$’s Pastry DHT is configured with a leaf set of 24, max open sockets of 5000, and a transport buffer size of 6 MB. We configure different fanouts 8 ($2^3$), 16 ($2^4$), and 32 ($2^5$) for Totoro$^+$'s dataflow trees by changing the DHT routing table base bit values to 3, 4, and 5, respectively. The minibatch size of each node is 20 in image classification and speech recognition tasks. The initial learning rate for the ShuffleNet V2 model is 0.05 and 0.1 for the ResNet-34 model. 

\noindentparagraph{Metrics.}
We focus on Totoro$^+$’s scalability, adaptivity, and FL effectiveness. To evaluate scalability, we measure how numerous applications’ dataflow trees are distributed over large-scale edge topologies. We also measure how Totoro$^+$ scales with the number of nodes in terms of \emph{model broadcast time} and \emph{gradient aggregation time}. To evaluate the FL effectiveness, we measure \emph{time-to-accuracy} performance, which is the duration of model training tasks on the testing set to achieve the target accuracy. 
To evaluate adaptivity, we measure 
\emph{cumulative packet latency},
\emph{Nash regret}, 
\emph{node selection frequencies}, 
and \emph{failure recovery time}. We also measure Totoro$^+$'s \emph{runtime overhead}.

\subsection{Scalability analysis}
\label{subsec:eval_scalability}

Figure~\ref{fig:edge_topology} shows the real-world edge zones generated from the EUA dataset~\cite{edge_network_dataset}. Australian Communications and Media Authority publishes the EUA dataset~\cite{edge_network_dataset}, which contains the geographical locations of 95,271 cellular base stations in 12 states of Australia (\texttt{ACT}: 931, \texttt{ANT}: 15, \texttt{EXT}: 8, \texttt{ISL}: 36, \texttt{NSW}: 24574, \texttt{NT}: 3137, \texttt{QLD}: 21576, \texttt{SA}: 7682, \texttt{TAS}: 3213, \texttt{VIC}: 18163, \texttt{WA}: 15933, \texttt{WLD}: 3). We estimate the maximum round-trip time based on the distances between nodes in the dataset, and then use Ratnasamy and Shenker's distributed binning algorithm~\cite{distributed_binning} to divide them into different zones. 

\noindentparagraph{Totoro$^+$ fairly distributes masters.} Figure~\ref{fig:cdf} shows the normal probability plot of the number of masters mapped on each node in a 1000-node edge zone under stress testing. 
The results show that when we create a large number of dataflow trees like 500, 99.5\% of the nodes are the roots of 3 trees or less. The results illustrate a good load balance among participating devices when performing a large number of FL applications' tasks simultaneously.
Figure~\ref{fig:3d_plot} shows the distribution of Totoro$^+$’s masters over different edge zones that have different workloads. We assume densely populated topologies have heavy workloads and sparsely populated topologies have light workloads. The results show that Totoro$^+$ automatically scales the number of masters to handle the varying workload.

\begin{table*}[t!]
\footnotesize
\setlength{\belowcaptionskip}{-5pt} 
\renewcommand\arraystretch{1.25}
\begin{tabular}{ccccccccccc}
\hline
\multirow{2}{*}{Task} & \multirow{2}{*}{Dataset} & \multirow{2}{*}{\begin{tabular}[c]{@{}c@{}}Accuracy\\ Target\end{tabular}} & \multirow{2}{*}{Model} & \multirow{2}{*}{\begin{tabular}[c]{@{}c@{}}Number of \\ Applications\end{tabular}} & \multicolumn{3}{c}{Speedup for OpenFL \cite{openfl}} & \multicolumn{3}{c}{Speedup for FedScale \cite{fedscale}} \\ \cline{6-11}
 &  &  &  &  & Fan.=8 & Fan.=16 & Fan.=32 & Fan.=8 & Fan.=16 & Fan.=32 \\ \hline
\multirow{3}{*}{\begin{tabular}[c]{@{}c@{}}Speech\\ Recognition\end{tabular}} & \multirow{3}{*}{\begin{tabular}[c]{@{}c@{}}Google\\ Speech~\cite{google_speech_dataset}\end{tabular}} & \multirow{3}{*}{53.0\%} & \multirow{3}{*}{ResNet-34~\cite{residualnet}} & 5 & 3.7$\times$ & 3.1$\times$ & 3.5$\times$ & 3.6$\times$ & 3.0$\times$ & 3.4$\times$ \\ \cline{5-11} 
 &  &  &  & 10 & 6.4$\times$ & 6.2$\times$ & 6.9$\times$ & 6.2$\times$ & 6.0$\times$ & 6.7$\times$ \\ \cline{5-11} 
 &  &  &  & 20 & 13.0$\times$ & 11.8$\times$ & 14.0$\times$ & 12.4$\times$ & 11.3$\times$ & 13.5$\times$ \\ \hline
\multirow{3}{*}{\begin{tabular}[c]{@{}c@{}}Image\\ Classification\end{tabular}} & \multirow{3}{*}{FEMNIST~\cite{femnist_dataset}} & \multirow{3}{*}{75.5\%} & \multirow{3}{*}{ShuffleNet V2~\cite{shufflenet_v2}} & 5 & 1.2$\times$ & 1.4$\times$ & 3.1$\times$ & 1.4$\times$ & 1.6$\times$ & 3.4$\times$ \\ \cline{5-11} 
 &  &  &  & 10 & 2.4$\times$ & 3.2$\times$ & 5.6$\times$ & 2.7$\times$ & 3.5$\times$ & 6.1$\times$ \\ \cline{5-11} 
 &  &  &  & 20 & 5.0$\times$ & 5.5$\times$ & 10.3$\times$ & 5.6$\times$ & 6.1$\times$ & 11.5$\times$ \\ \hline
\multicolumn{1}{l}{} & \multicolumn{1}{l}{} & \multicolumn{1}{l}{} & \multicolumn{1}{l}{} & \multicolumn{1}{l}{} & \multicolumn{1}{l}{} & \multicolumn{1}{l}{} & \multicolumn{1}{l}{} & \multicolumn{1}{l}{} & \multicolumn{1}{l}{} & \multicolumn{1}{l}{}
\end{tabular}
\vspace{-0.15in}
\caption{Summary of time-to-accuracy comparison of Totoro$^+$, OpenFL, and FedScale. We set three different fanouts for Totoro$^+$’s dataflow tree: 8, 16, and 32. Totoro$^+$ outperforms state-of-the-art FL systems by speeding up the training time at different scales. The speedup gap increases as the number of concurrently running applications’ tasks increases. }
  \label{tab:time2accuracy}
\end{table*}

\begin{figure*}[t!]
  \centering
  \captionsetup[subfigure]{width=4.1cm}
  \subfloat[1 application's model]{\includegraphics[width=0.24\textwidth]{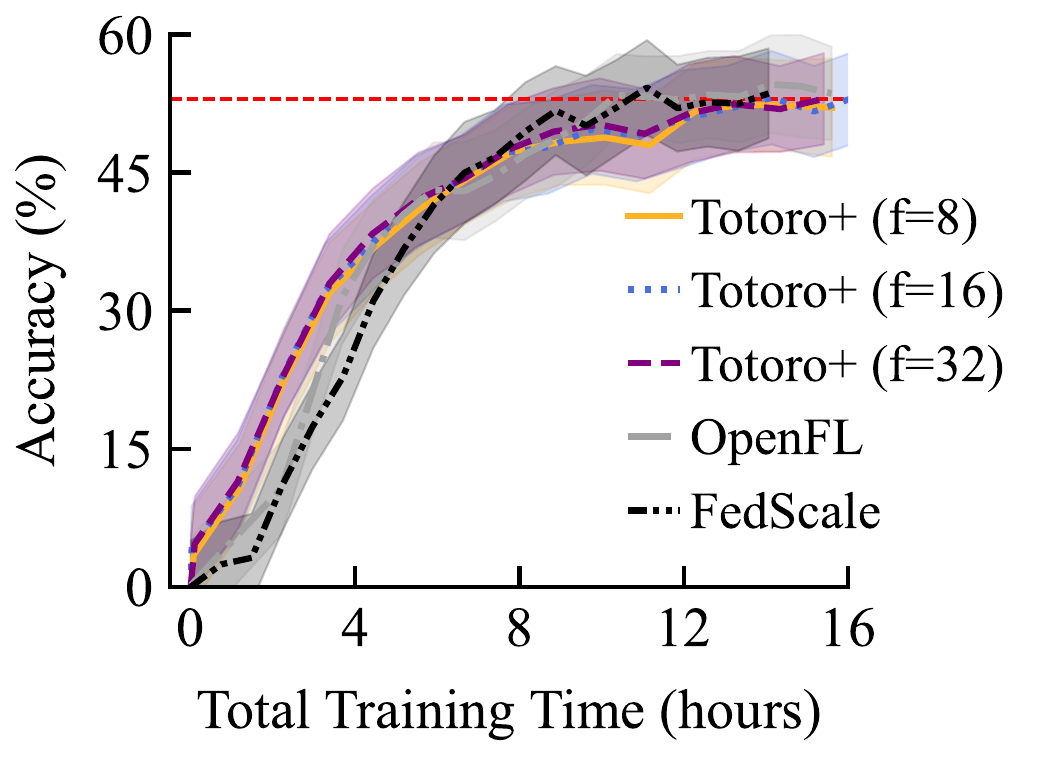}
  \label{fig:speech_1}}
\hfill
  \subfloat[5 models trained simultaneously]{\includegraphics[width=0.24\textwidth]{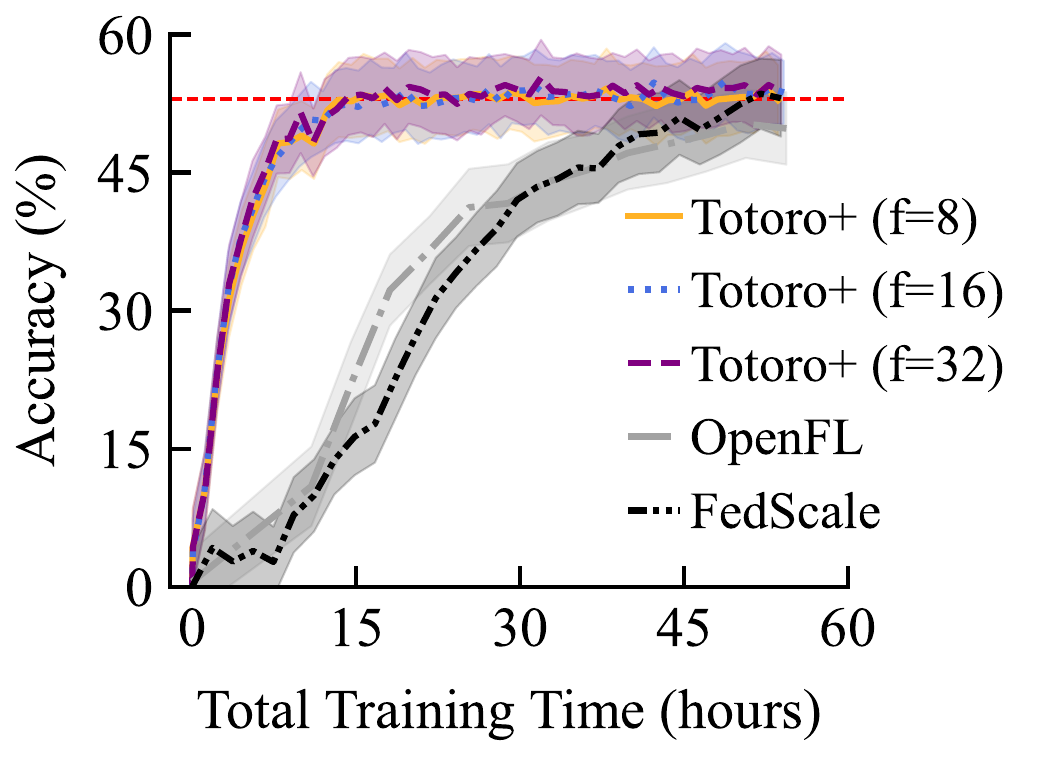}
  \label{fig:speech_5}}
 \hfill
   \subfloat[10 models trained simultaneously]{\includegraphics[width=0.24\textwidth]{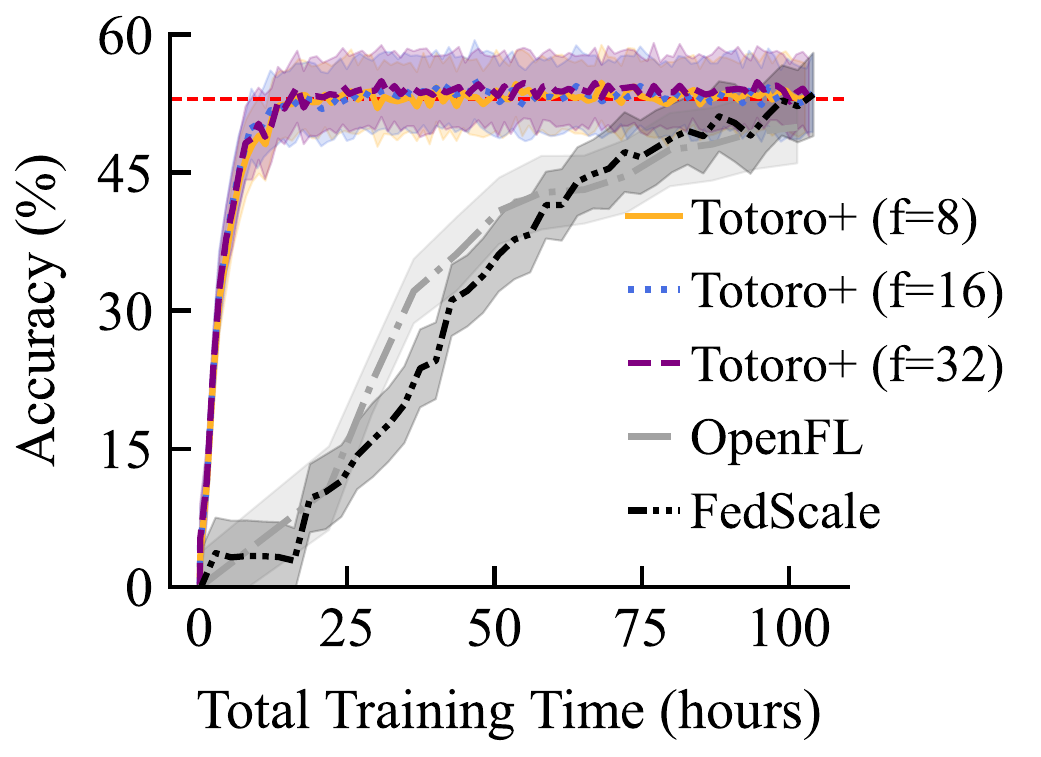}
   \label{fig:speech_10}}
 \hfill
     \subfloat[20 models trained simultaneously]{\includegraphics[width=0.24\textwidth]{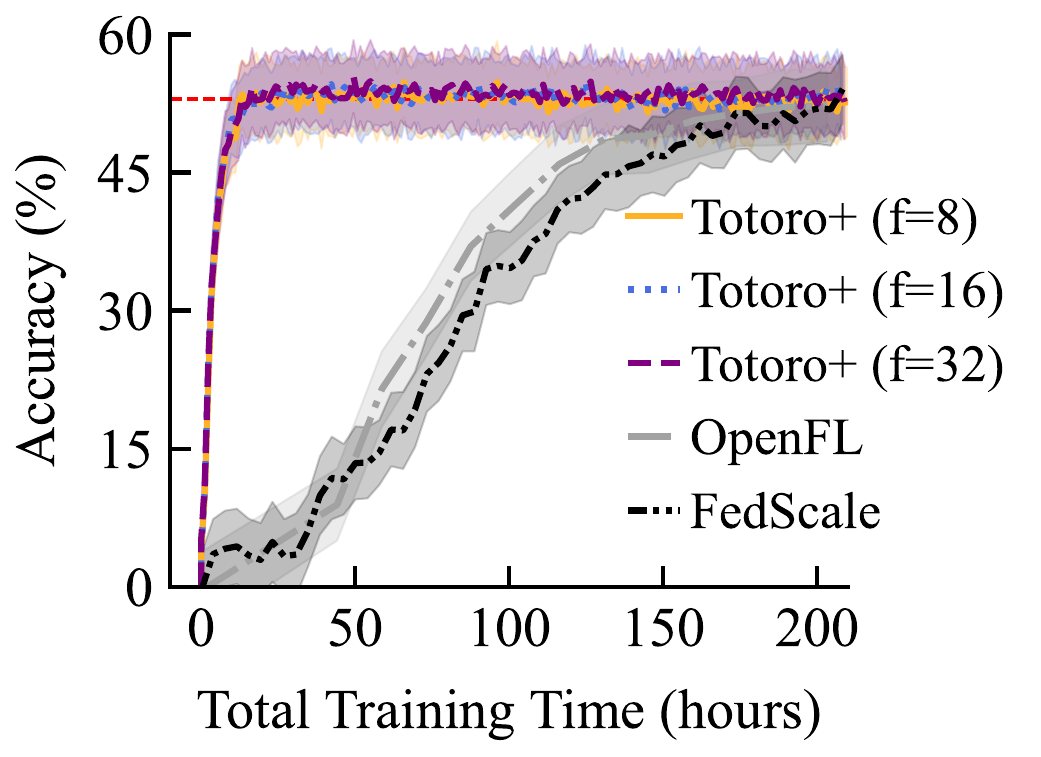}
   \label{fig:speech_20}}
     \vspace{0.1in}
  \caption{Time-to-accuracy comparison of Totoro$^+$, OpenFL, and FedScale. When 5$\sim$20 applications' models are simultaneously trained, Totoro$^+$ speeds up the total training time 3.0$\times$-14.0$\times$ to reach the equivalent model accuracy on the middle-scale Google Speech dataset.}
  \vspace{-0.1in}
  \label{fig:GoogleSpeech}
\end{figure*}

\begin{figure*}[t!]
  \centering
  \vspace{0.15in}
  \captionsetup[subfigure]{width=4.1cm}
  \subfloat[1 application's model]{\includegraphics[width=0.24\textwidth]{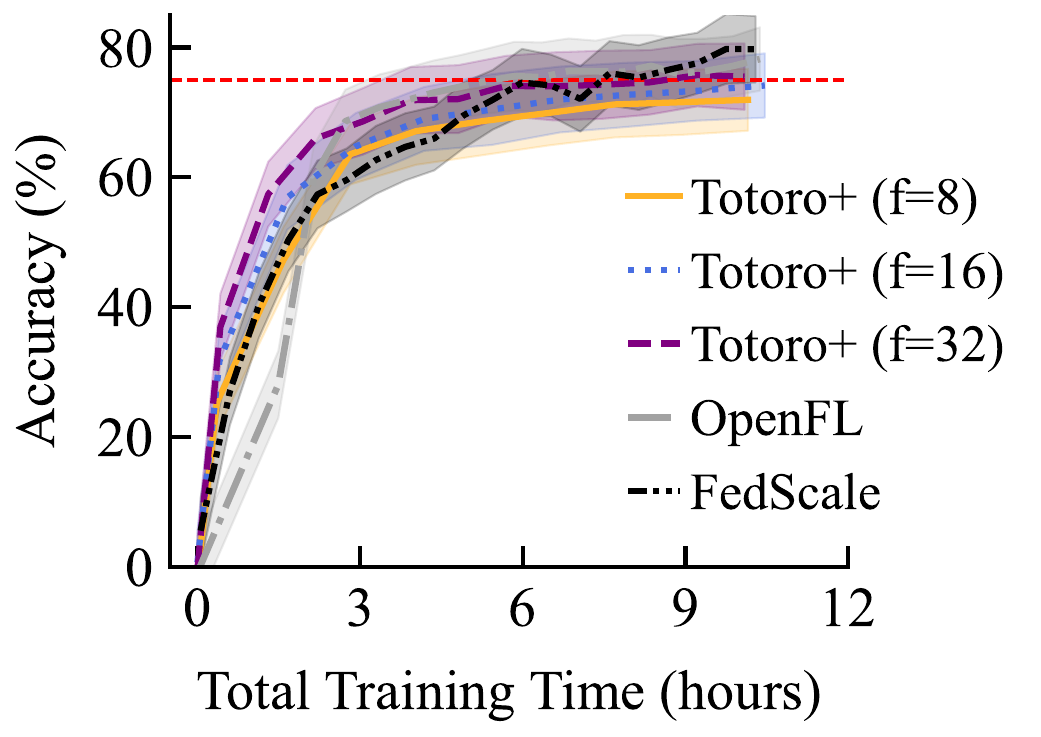}
  \label{fig:femnist_1}}
\hfill
  \subfloat[5 models trained simultaneously]{\includegraphics[width=0.24\textwidth]{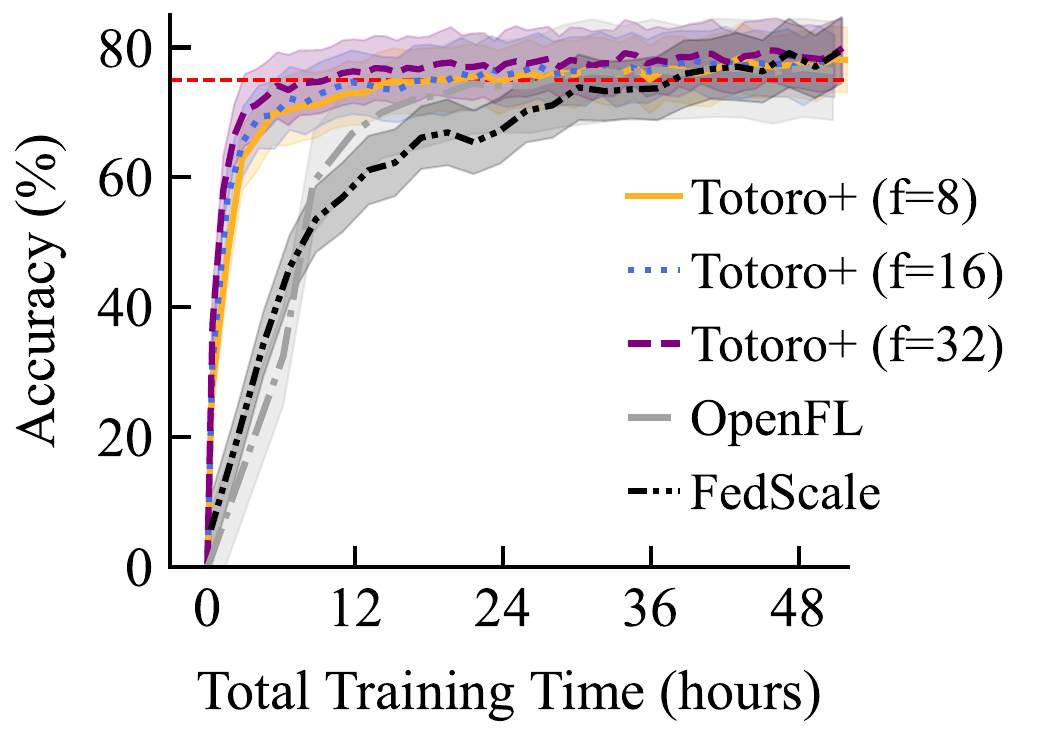}
  \label{fig:femnist_5}}
 \hfill
   \subfloat[10 models trained simultaneously]{\includegraphics[width=0.24\textwidth]{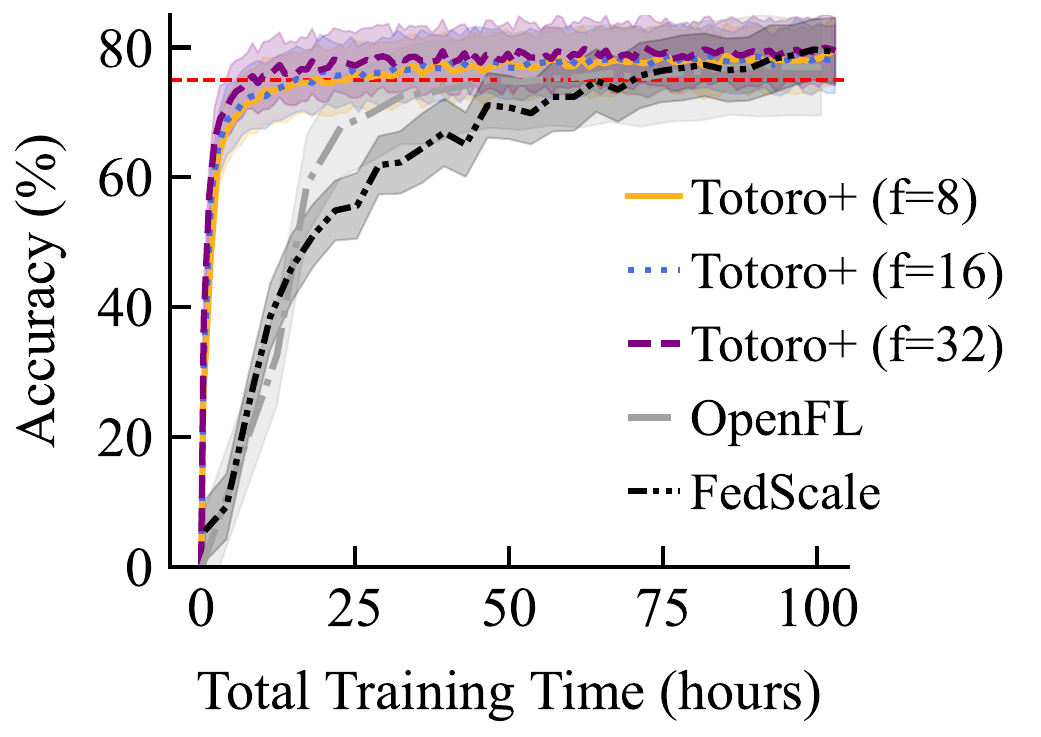}
   \label{fig:femnist_10}}
 \hfill
     \subfloat[20 models trained simultaneously]{\includegraphics[width=0.24\textwidth]{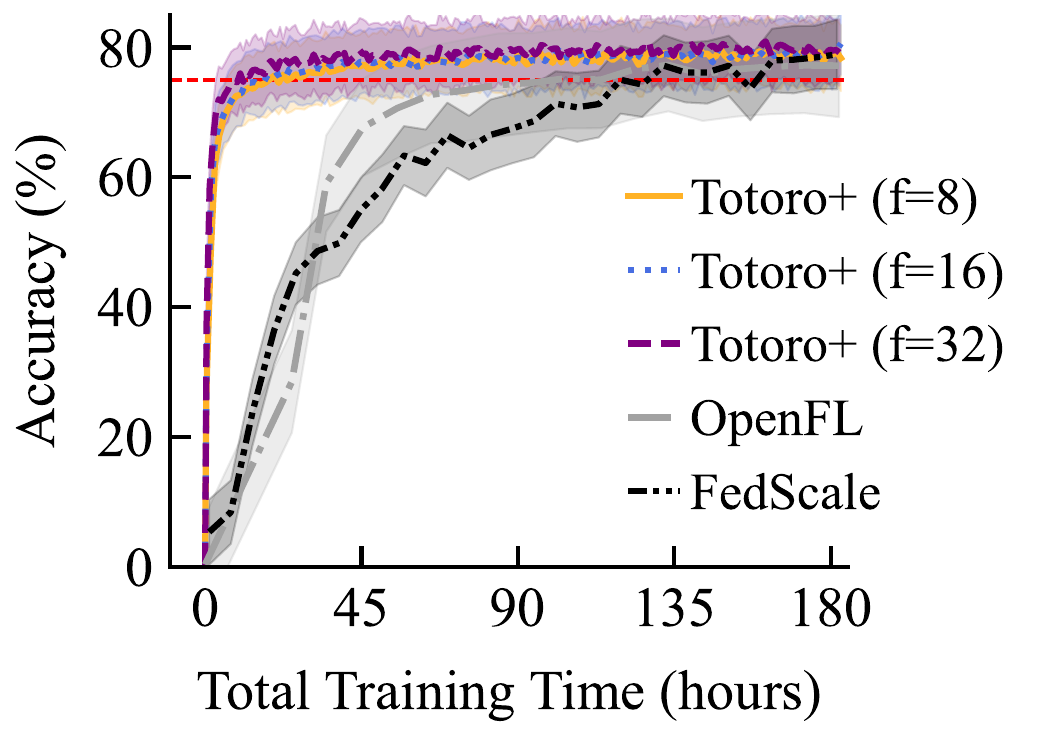}
   \label{fig:femnist_20}}
     \vspace{0.1in}
  \caption{Time-to-accuracy comparison of Totoro$^+$, OpenFL, and FedScale. When 5$\sim$20 applications' models are simultaneously trained, Totoro$^+$ speeds up the total training time 1.2$\times$-11.5$\times$ to reach the equivalent model accuracy on the large-scale FEMNIST dataset.}
  \label{fig:FEMNIST}
\end{figure*}

\noindentparagraph{Totoro$^+$ excels at load balancing.} Figure~\ref{fig:dot} shows the distribution of all branches in 17 Totoro$^+$ dataflow trees on 1946 edge nodes over the 3 most popular topologies. Each tree has a fanout of 8 ($2^3$) and a random number of depths (from level 1 to level 6). Different colors represent different levels of nodes in a tree, with the darkest color representing the root node. The results show that these dataflow trees are well balanced across different topologies, not only the root nodes but also the forwarder nodes and the leaf nodes, demonstrating Totoro$^+$’s attractive scalability and load balancing properties.

\subsection{Model dissemination and gradient aggregation}
\label{subsec:eval_broadcast}

\noindentparagraph{Totoro$^+$ scales with \#nodes.}
Figure \ref{fig:model_dissemination} and Figure~\ref{fig:gradient_aggregation} show Totoro$^+$'s model dissemination and gradient aggregation times for an exponentially increasing number of edge nodes in a single training tree.  
When the number of nodes grows \emph{exponentially} (from 20 to 5120), the dissemination time and aggregation time only increase \emph{linearly}. This is because the dissemination time and aggregation time are limited by the dataflow tree depth ($\mathcal{O}(\log N)$) by using the DHT-based P2P overlay. Therefore, when the number of edge devices grows to the scale of millions or even billions, Totoro$^+$ guarantees that only a few extra hops are needed for them to receive updated FL models or send updated gradients.

\noindentparagraph{Totoro$^+$ offers flexible tree fanouts.}
Figure~\ref{fig:fanout_dissemination} and Figure~\ref{fig:fanout_aggregation} show the model dissemination time and gradient aggregation time of different tree fanouts for the ResNet-34 model. We can observe that a larger fanout leads to less model dissemination and gradient aggregation times because the dataflow trees become less deep. However, if an internal node fails or leaves, the new one needs to rebuild many connections between the child nodes and the parent nodes to rebuild the tree, and the new node may become an I/O or communication bottleneck.

\noindentparagraph{Totoro$^+$ reduces communication cost.}
Figure~\ref{fig:eval_traffic} shows the comparison of the traffic per node of Totoro$^+$ and the baseline FL systems. Communication is a critical bottleneck in FL systems. 
We can observe that the additional network traffic imposed by Totoro$^+$ is small. 
The network traffic is increased by only 1.19$\times$ for TCP and 1.29$\times$ for UDP when the number of dataflow trees is increased by 10$\times$. This is because when a new training tree is created, it merely routes \texttt{JOIN} messages toward the root of the tree using the overlay, adding overlay links to reconstruct the tree without establishing a new connection. 
Then the overhead for creating new dataflow trees can be amortized over the overhead of the overlay.

\subsection{Federated learning effectiveness}
\label{subsec:eval_accuracy}
\noindent In this section, we evaluate Totoro$^+$’s performance on model training when an increasing number of models (1 to 20) are simultaneously trained on large-scale edge topologies, and compare it with OpenFL~\cite{openfl} and FedScale~\cite{fedscale}. 
Both FedScale and OpenFL rely on a server-client architecture and leverage a logically central coordinator to spawn aggregators and orchestrate many distributed clients to collaboratively train a model. For both the Google Speech~\cite{google_speech_dataset} and FEMNIST~\cite{femnist_dataset} datasets, we evenly partitioned the data across the edge nodes such that each node contains samples from all classes. This IID setting allows us to isolate and compare pure system-level performance differences among Totoro$^+$, OpenFL, and FedScale under the same controlled conditions. All compared systems follow the same data partitioning strategy.

\noindentparagraph{Totoro$^+$ reduces the total model training time.} Table~\ref{tab:time2accuracy} summarizes the results of time-to-accuracy comparison of Totoro$^+$, OpenFL and FedScale. When 5$\sim$20 models are simultaneously trained on the same platform, we notice that Totoro$^+$ speeds up the total training time 3.0$\times$-14.0$\times$ to reach the equivalent model accuracy on the middle-scale Google Speech dataset; speedup on the large-scale FEMNIST dataset is 1.2$\times$-11.5$\times$. The speedup gap increases as the number of concurrently running applications' tasks increases. This is because both OpenFL and FedScale rely on a server-client architecture where a logically central coordinator orchestrates many distributed clients to collaboratively train a model. When there are a massive number of models that need to be trained simultaneously, the central coordinator needs to handle them one by one on a first-come, first-served basis, which causes large queuing delays. By contrast, Totoro$^+$’s distributed masters can train many models in parallel, thus precluding them from being stuck in a central coordinator. {\color{dv}To further evaluate the effectiveness of Totoro$^+$, we compare it with decentralized federated learning baselines on additional models and datasets. The results are presented in Appendix~\ref{app:decsys_comparison}.}



\begin{figure}
    \centering
    \includegraphics[width=.6\linewidth]{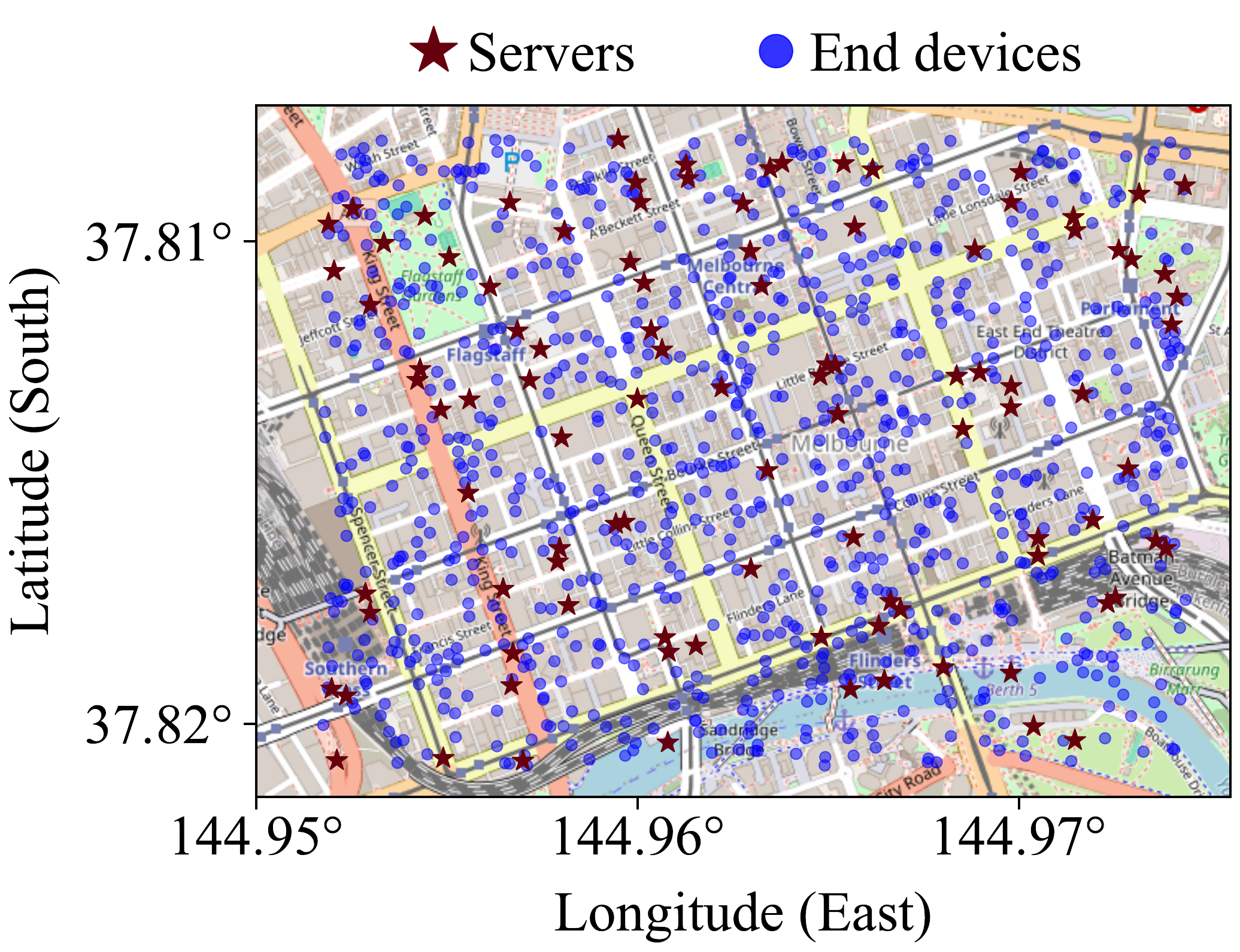} 
    \caption{100 edge servers and 900 end devices in Melbourne, Australia, from the EUA dataset~\cite{edge_network_dataset}}
    \label{fig:node_distribution_melbourne}
    \vspace{-0.05in}
\end{figure}

\begin{figure*}[t!]    
    \includegraphics[width=0.5\linewidth]{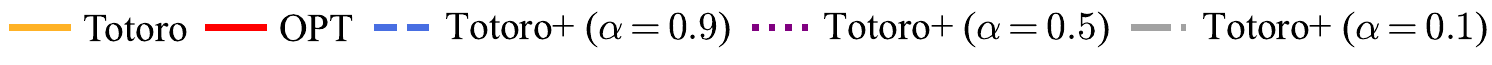}\hspace{-20.in}\\    
    \centering
    \begin{minipage}{.23\textwidth}    
        \centering
        \captionsetup{width=3.8cm} 
        \includegraphics[width=\linewidth]{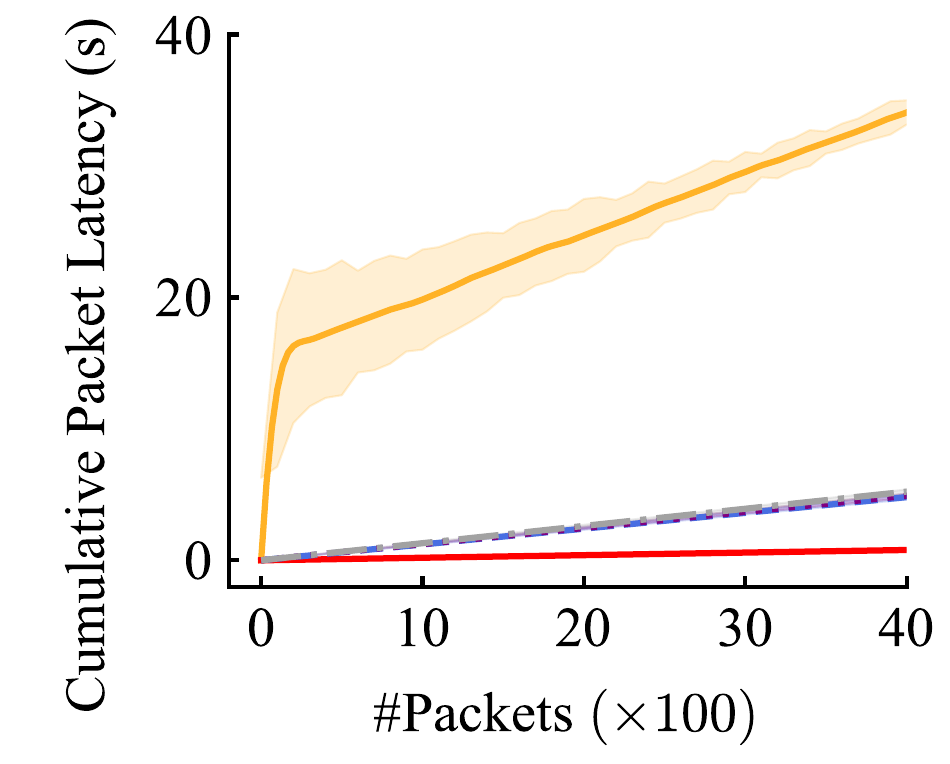}
        \caption{Cumulative packet latency of different algorithms.}
        \label{fig:cum_packet_delays}
    \end{minipage}\hfill
    \begin{minipage}{.23\textwidth}
        \centering
        \captionsetup{width=3.8cm} 
        \includegraphics[width=\linewidth]{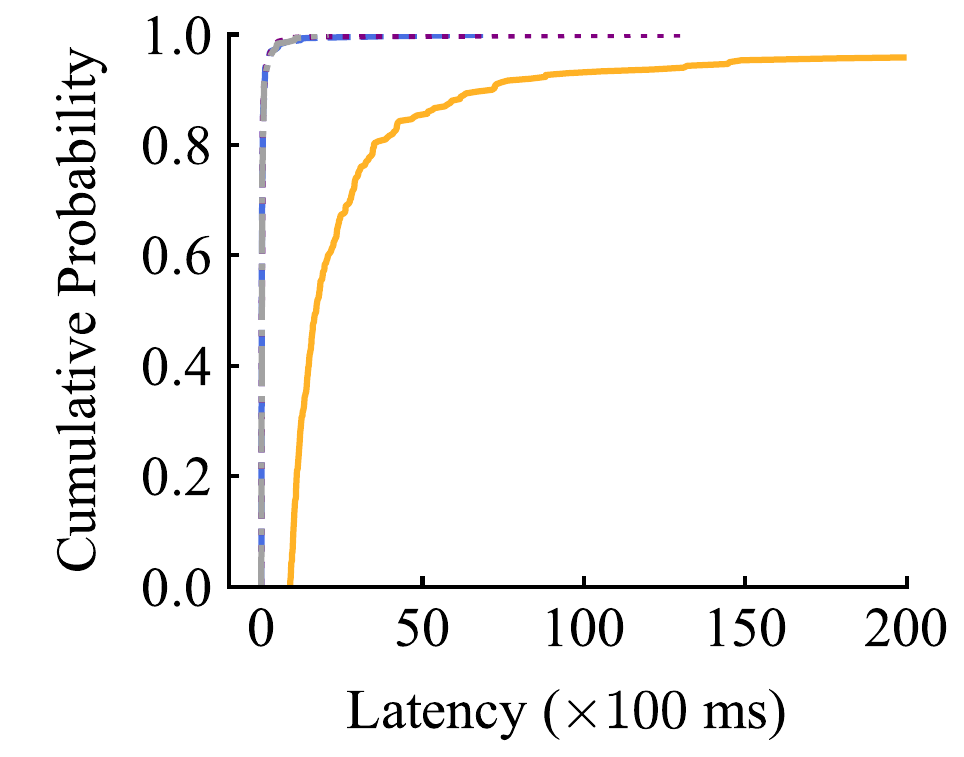} 
        \caption{Cumulative probabilities of different algorithms over latency.}
        \label{fig:latency_cdf}
    \end{minipage}\hfill    
    \begin{minipage}{.23\textwidth}
    \vspace{-0.15in}
        \centering
        \captionsetup{width=3.8cm} 
        \includegraphics[width=\linewidth]{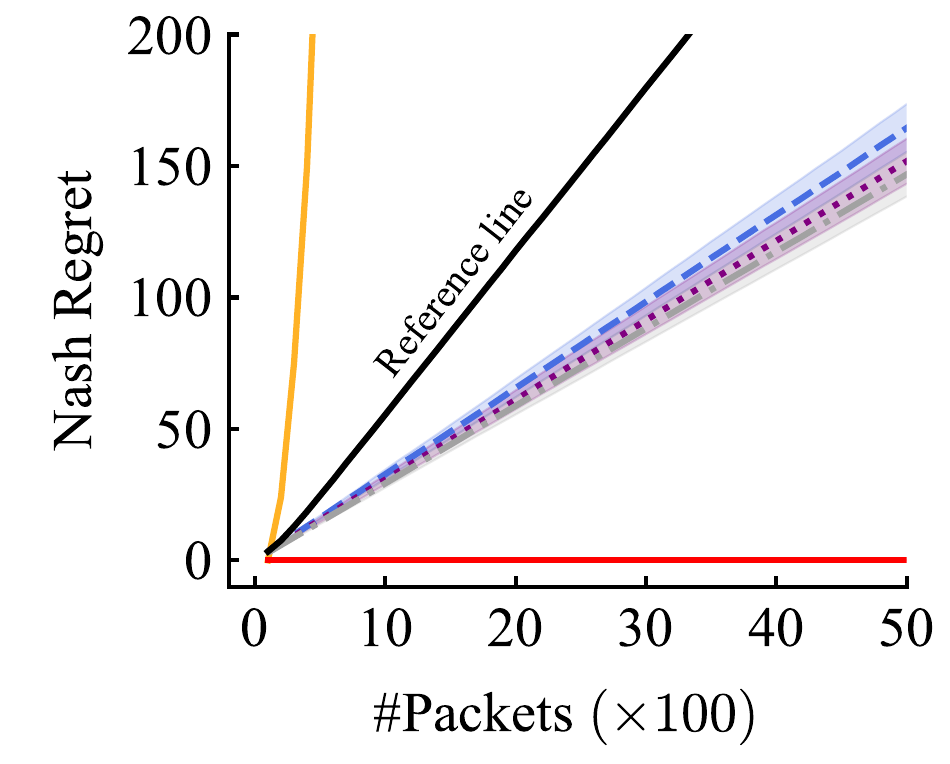}
        \caption{Nash regret of different algorithms.}
        \label{fig:nash_regret}
    \end{minipage}
    \begin{minipage}{.25\textwidth}
        \centering
        \captionsetup{width=3.8cm} 
        \includegraphics[width=\linewidth]{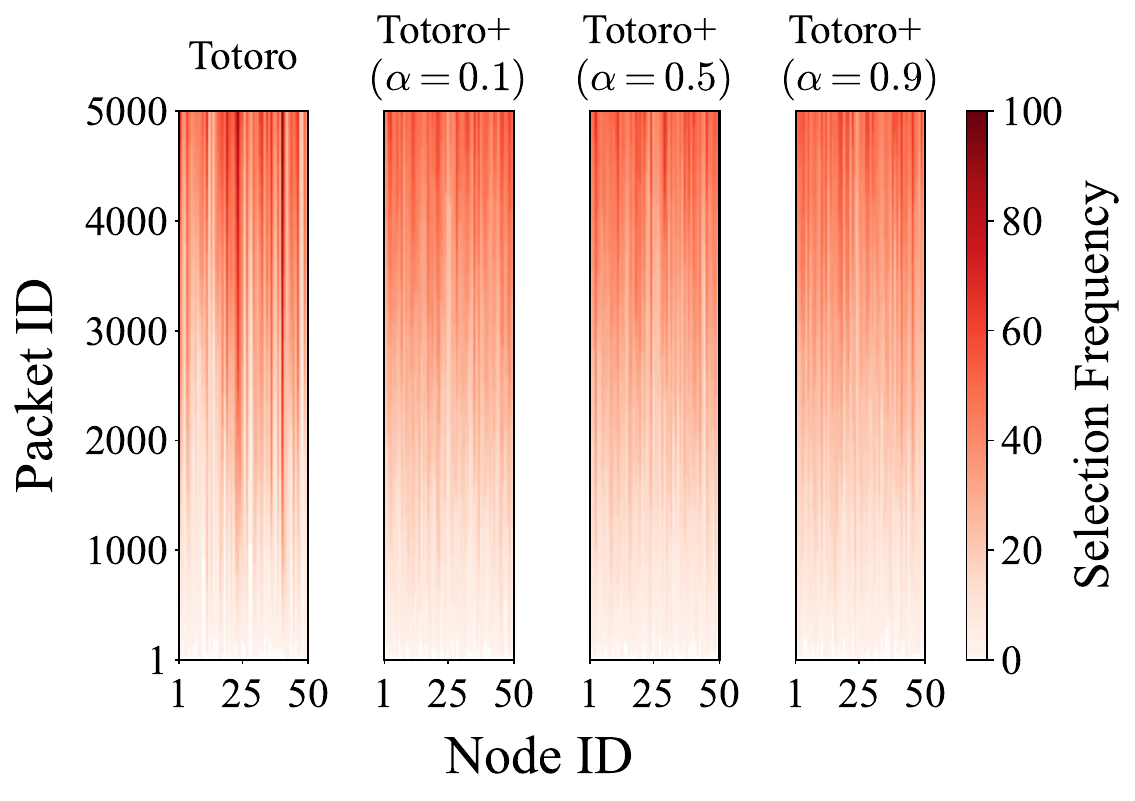} 
        \caption{Node selection frequencies generated by different algorithms.}
        \label{fig:selection_frequency}
    \end{minipage}
    \vspace{-0.05in}
\end{figure*}

\noindentparagraph{Totoro$^+$ scales with \#applications. } 
Figure~\ref{fig:GoogleSpeech} and Figure~\ref{fig:FEMNIST} show the time-to-accuracy performance comparison of Totoro$^+$, OpenFL, and FedScale.
The shaded regions represent the standard deviations, and 
the red dotted lines represent the target accuracies for Google Speech (53.0\%) and FEMNIST (75.5\%). The results show that Totoro$^+$ scales well with a large number of applications’ training tasks. Totoro$^+$ takes almost the same total training time to reach model convergence for training 1 model (15.41 hours), 5 models (15.43 hours), 10 models (15.44 hours), and 20 models (15.47 hours) when fanout is equal to 32. The rationale behind the results lies in that (1) Totoro$^+$ decomposes the conventional “\emph{1:n}” architecture into an “\emph{m:n}” architecture, which ensures that every peer can participate in the process of model training, gradients calculation, and aggregation; and (2) ideally, any peer in the system can act as a worker, a master, a forwarder, or any combination of the above. This helps avoid the scalability bottleneck caused by any single node, auto-scale as needed, balance the workload, and improve the scalability.

\subsection{Adaptivity analysis}
\label{subsec:adaptivity_analysis}


We evaluate Totoro$^+$'s game-theoretic distributed hop-by-hop routing algorithm using a real-world distribution of edge servers and end devices in the EUA dataset~\cite{edge_network_dataset}. We randomly extracted 900 end devices and 100 edge servers in Melbourne, Australia, from coordinate $-37.820$ to $-37.808$ in latitude and $144.955$ to $144.975$ in longitude, as shown in Figure~\ref{fig:node_distribution_melbourne}. 

These 1,000 nodes establish a dataflow tree, where the 900 end devices are worker nodes, an edge server serves as a master, and the remaining 99 edge servers are internal nodes. The worker and internal nodes run 
three algorithms to select the next hops for model updates.
\begin{itemize}
    \item Totoro$^+$: The game-theoretic distributed hop-by-hop routing algorithm in Algorithm~\ref{pseudo:our_algorithm}.
    \item Totoro~\cite{totoro}: The bandit-based distributed hop-by-hop algorithm that does not consider bandwidth capacities.
    \item OPT: The optimal algorithm that always selects the optimal next hop and considers bandwidth capacities.
\end{itemize}


We refer to~\cite{awstream, edge_network_bandwidth} to set the constrained bandwidth of each node in the range between 20Mbps and 100Mbps. Each worker and internal node has constrained bandwidth. The more packets it forwards simultaneously, the slower its data rate becomes. For example, if a node with 100Mbps bandwidth needs to forward model updates from four nodes, the data rate is $100/4 = 25$. Also, the higher the data rate, the higher the corresponding reward. 

\noindentparagraph{Totoro$^+$ adapts to edge networks with constrained bandwidth capacity.} Figure~\ref{fig:cum_packet_delays} shows the cumulative packet latency of three algorithms. We can see that Totoro$^+$ achieves a cumulative packet latency that grows slowly. In contrast, Bandit suffers a higher cumulative packet latency. This is because Totoro$^+$ considers that nodes' bandwidth decreases proportionally if they have to forward many packets simultaneously. Hence, Totoro$^+$ tends to distribute network traffic more evenly over all nodes.

Figure~\ref{fig:latency_cdf} shows cumulative probabilities of different algorithms over latency. 
Totoro$^+$ consistently outperforms Totoro in latency, regardless of the choice of $\alpha$, delivering nearly all responses under 2000 ms. The results explain why Totoro$^+$ can achieve lower cumulative packet latency in Figure~\ref{fig:cum_packet_delays}.


\begin{figure}[t!]
    \begin{minipage}{.23\textwidth}
        \centering
        \captionsetup{width=3.8cm} 
        \includegraphics[width=\linewidth]{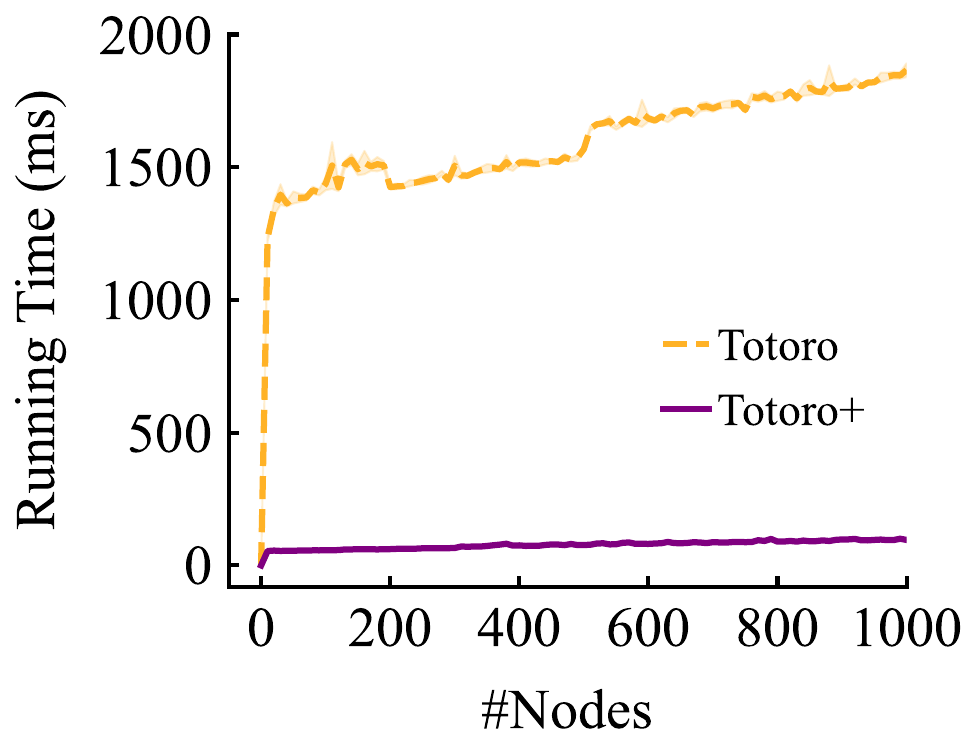}
        \caption{Running time comparison.}
        \label{fig:running_time}
    \end{minipage}\hfill
    \begin{minipage}{.225\textwidth}
        \centering
        \captionsetup{width=\textwidth} 
        \includegraphics[width=\linewidth]{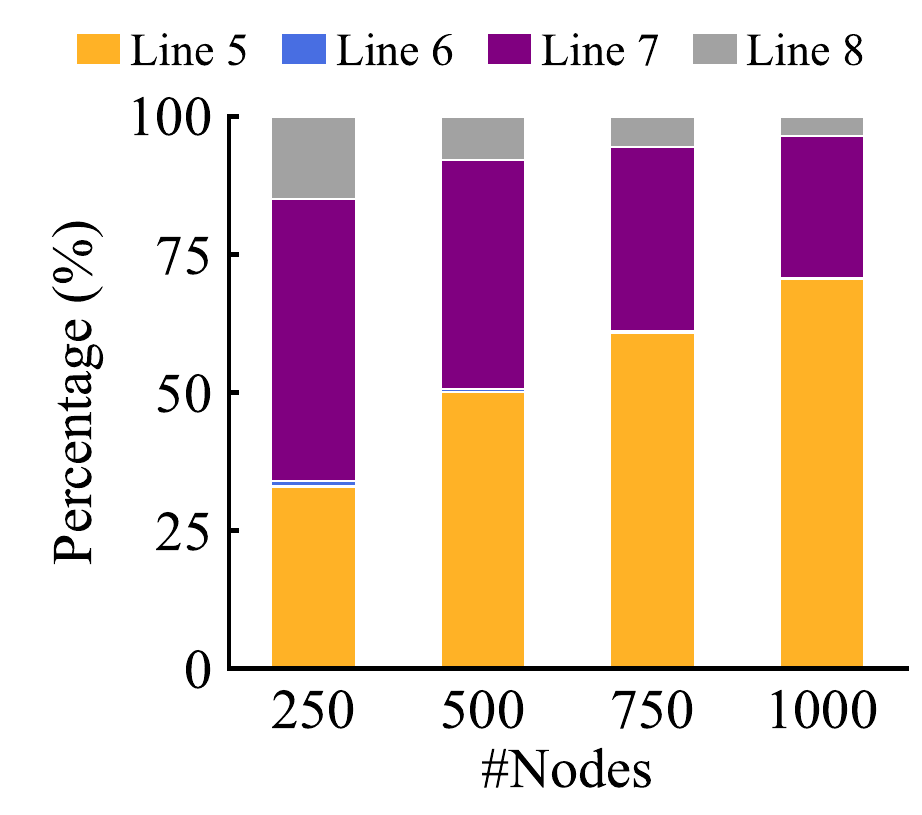} 
        \caption{Totoro$+$'s running time breakdown.}
        \label{fig:running_time_breakdown}
    \end{minipage}
   \vspace{-0.05in}
\end{figure}


\noindentparagraph{Totoro$^+$ achieves an $\epsilon$-approximate Nash equilibrium.}
Figure~\ref{fig:nash_regret} shows the Nash regret comparison of Totoro$^+$, OPT, and a reference line. We refer to the solution from~\cite{congestion_games} as the reference line, where the solution is proved to achieve sublinear Nash regret with the G-optimal design. The results show that Totoro$^+$ achieves a sublinear Nash regret closer to OPT. Therefore, according to Corollary~\ref{coro:epsilon_approximate}, Totoro$^+$ achieves an $\epsilon$-approximate Nash equilibrium. In contrast, Totoro does not consider bandwidth capacities but selects the optimal next hop so far. Therefore, when too many nodes select the same next hop, the overall communication time increases, which significantly increases Nash regret.

Figure~\ref{fig:selection_frequency} shows the node selection frequencies generated by different algorithms, in which each color grid shows the
frequencies of selecting the $x_{th}$ node for each packet. The $X$-axis represents the sequential order of next-hop nodes. The $Y$-axis represents the sequential order of packets. 
Although Totoro relies on a distributed hop-by-hop routing algorithm to send packets, it does not distribute packets evenly across all possible next hops. Instead, Totoro$^+$'s improved game-theoretic routing algorithm considers bandwidth capacities across nodes and distributes packets more evenly across all possible next hops, thereby leading to lower Nash regret.

\noindentparagraph{Totoro$^+$ forwards packets rapidly.}
Figure~\ref{fig:running_time} shows the running time comparison of Totoro and Totoro$^+$. We can see that Totoro$^+$ maintains consistently low running time ($\approx$50 ms) regardless of the number of nodes, whereas Totoro's running time sharply increases up to around 1500 ms when the number of nodes increases from 0 to ~100. This is because a node has to run the convex optimization to evaluate the empirical transmission cost for each possible hop. Instead, Totoro$^+$'s operations can be implemented by parallel matrix multiplications.

Figure~\ref{fig:running_time_breakdown} shows the Totoro$^+$'s running time breakdown. Lines 5, 6, 7, and 8 represent 4 lines in Algorithm~\ref{pseudo:our_algorithm}. We can see that the proportion of Line 5 increases significantly when the number of nodes increases. This is because Line 5 requires calculating determinant among all possible policies, which is consistent with the results in Theorem~\ref{theo:time_complexity}. 

\begin{figure}[t!]
    \centering    
    \begin{minipage}{0.235\textwidth}
        \vspace{-0.16in}
        \centering
        \captionsetup{width=\textwidth} 
        \includegraphics[width=\textwidth]{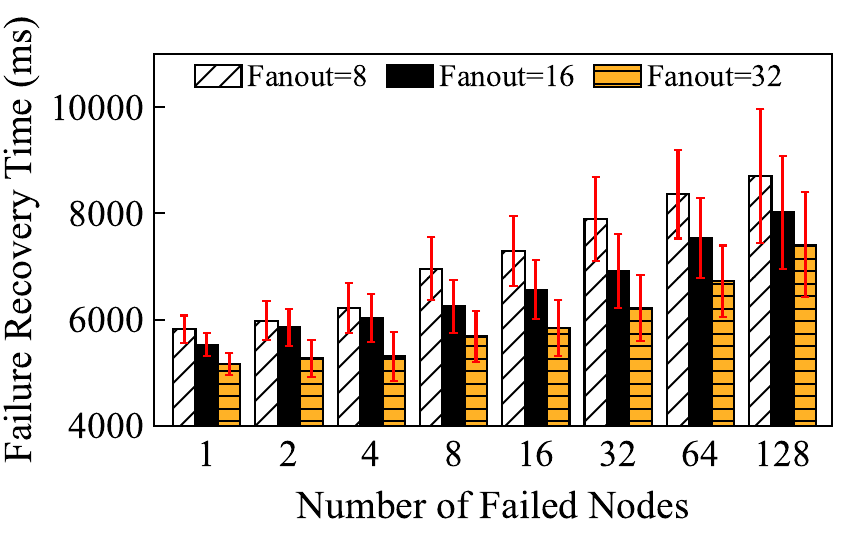} 
        \caption{Totoro$^+$ tolerates many simultaneous node failures in a tree.}
        \label{fig:failure_nodes}
    \end{minipage}\hfill
    \begin{minipage}{0.23\textwidth}
        \centering
        \includegraphics[width=\textwidth]{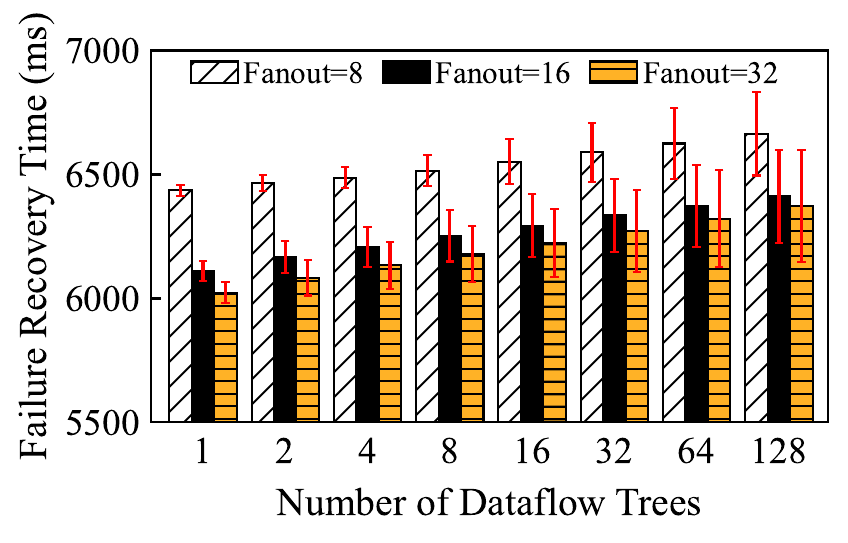}
        \caption{Totoro$^+$ achieves a stable recovery time for many simultaneous node failures in many trees.}
        \label{fig:failure_tree}
    \end{minipage}
    \vspace{-0.03in}
\end{figure}

\subsection{Failure recovery analysis}
\label{subsec:failure_recovery}
In this section, we evaluate how long it takes Totoro$^+$ to recover from node failures in a single and many dataflow trees.

\noindentparagraph{Totoro$^+$ tolerates simultaneous node failures in a dataflow tree.}
Figure~\ref{fig:failure_nodes} shows the failure recovery time for recovering a single dataflow tree with an exponentially increasing number of failed nodes. The tree has 1,000 nodes at the beginning, and then a varying number of random nodes fail or leave the tree simultaneously. The results show that the failure recovery time increases linearly when the number of failed nodes grows exponentially (from 1 to 128).
This is because only $\mathcal{O}(\log_{2^b} N)$ nodes are involved in recovering the tree. Even though the root fails, its immediate child nodes can locate a new root, and it can recover the state from the replicas.

\noindentparagraph{Totoro$^+$ achieves a stable recovery time for many simultaneous trees' failures.} Figure~\ref{fig:failure_tree} shows the failure recovery time for an exponentially increasing number of dataflow trees. Each tree has at most 1,000 nodes, and 5\% of nodes fail or leave simultaneously. The results show that Totoro$^+$ achieves a stable recovery time for many simultaneous trees' failures. This is because each failed node can be quickly detected and recovered by its neighbors through keep-alive messages without having to talk to a central coordinator, so many simultaneous failures can be repaired in parallel.

\begin{figure}[t!]
    \centering
  \includegraphics[width=0.6\linewidth]{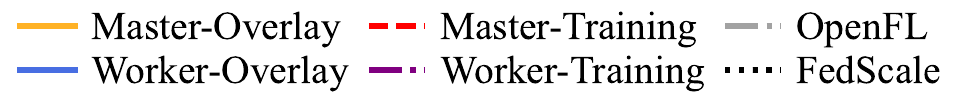}\\
  \centering
    \subfloat[The CPU overhead.]{\includegraphics[width=0.24\textwidth]{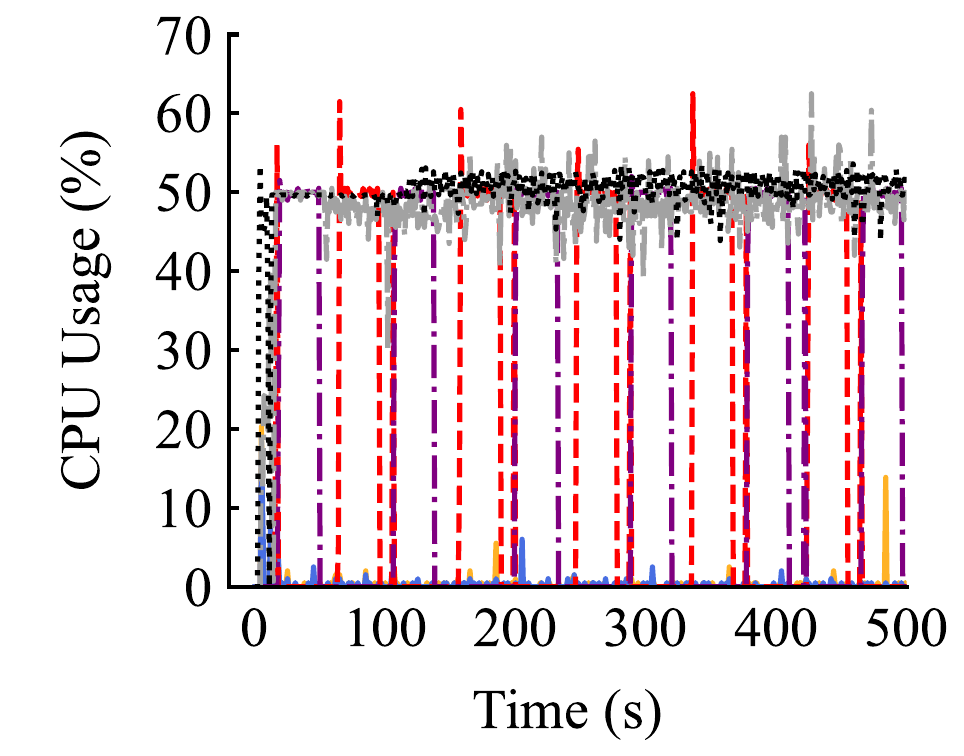}
    \label{fig:cpu_overhead}}
    \subfloat[The memory overhead.]{\includegraphics[width=0.24\textwidth]{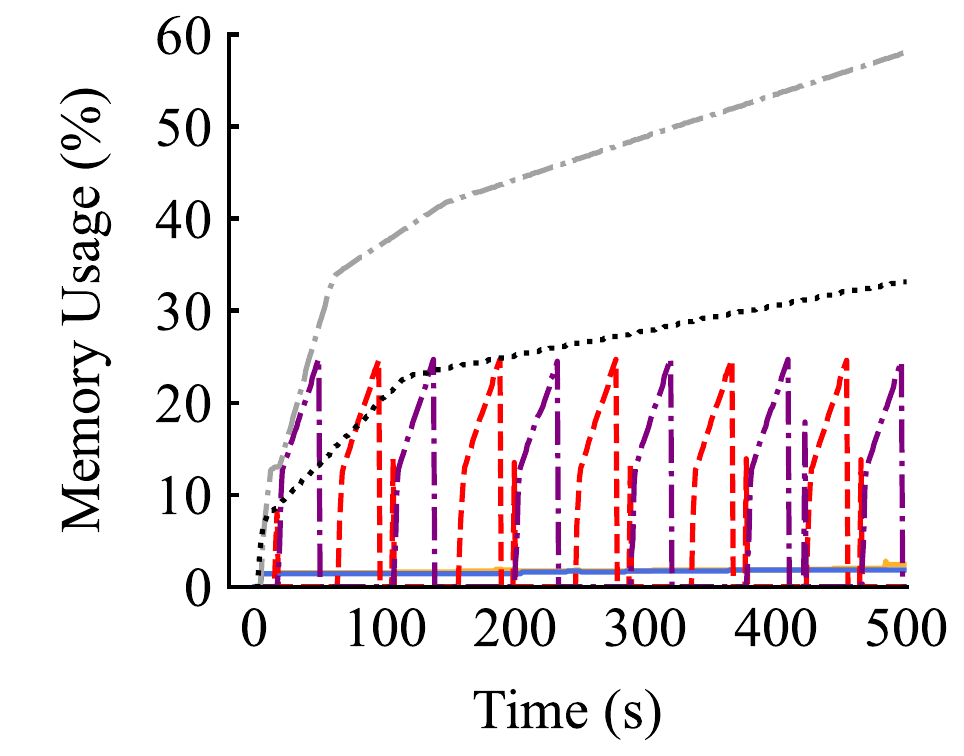}
    \label{fig:memory_overhead}}
    \vspace{0.1in}
 \caption{Overhead comparison of Totoro$^+$, OpenFL, and FedScale.}
  \label{fig:overhead}
  \vspace{-0.05in}
\end{figure}

\subsection{Overhead analysis}
\label{subsec:eval_overhead}
We train a feedforward model for text classification with a single Totoro$^+$’s dataflow tree of 10 nodes and compare the overhead with OpenFL and FedScale.

\begin{table*}[t]
\footnotesize
\resizebox{\textwidth}{!}{%
\begin{tabular}{llclllcc}
\textbf{Federated learning systems} &
  \textbf{Architecture} &
  \textbf{\begin{tabular}[c]{@{}l@{}}Central\\parameter\\server\end{tabular}} &
  \textbf{\begin{tabular}[c]{@{}l@{}}Computing\\ environment\end{tabular}} &
  \textbf{\begin{tabular}[c]{@{}l@{}}Master/worker\\ communication\\structure\end{tabular}} &
  \textbf{\begin{tabular}[c]{@{}l@{}}Type of\\learned\\ model\end{tabular}} &
  \textbf{\begin{tabular}[c]{@{}l@{}}Scale to\\massive diverse\\ applications\end{tabular}} &
  \textbf{\begin{tabular}[c]{@{}l@{}}Adapt to\\edge \\networks\end{tabular}} \\ \hline \\[-2.0ex]

FedAT~\cite{fedat}, TiFL~\cite{tifl}, Oort~\cite{oort}, PyramidFL~\cite{pyramidfl} &
  Centralized &
  \ding{51} &
  Datacenter &
  Hub-and-spoke &
  Global consensus &
 \ding{55} &
  \ding{55} \\
FLOAT~\cite{float}, REFL~\cite{refl} &
  Centralized &
  \ding{51} &
  Cloud &
  Hub-and-spoke &
  Global consensus &
 \ding{55} &
  \ding{51} \\
AdaFL~\cite{adafl}, FeS~\cite{fes}, DoFed-SPP~\cite{dofed} &
  Centralized &
  \ding{51} &
  Edge &
  Hub-and-spoke &
  Global consensus &
 \ding{55} &
  \ding{51} \\
Client-edge-cloud~\cite{client-edge-cloud}, Edge-DemLearn~\cite{edge-demlearn} &
  Hierarchical &
  \ding{51} &
  Edge &
  Hub-and-spoke &
  Global consensus &
  \ding{55} &
  \ding{55} \\
  \begin{tabular}[c]{@{}l@{}}HiFlash~\cite{hiflash}, Hwamei~\cite{hwamei}, ShapeFL~\cite{shapefl},\\FedUC~\cite{feduc}, 
AHFL~\cite{ahfl}\end{tabular}
  &
  Hierarchical &
  \ding{51} &
  Edge &
  Dynamically structured tree &
  Global consensus &
  \ding{55} &
  \ding{51} \\
$D^2$~\cite{d2}, 
BrainTorrent~\cite{braintorrent}, Lalitha et al.~\cite{fully_dfl} &
  Decentralized &
  \ding{55} &
  Datacenter &
  P2P (one-hop neighbor) &
  Global consensus &
  \ding{55} &
  \ding{55} \\
ByRDiE~\cite{byrdie}, BRIDGE~\cite{bridge}, Gupta et al.~\cite{byzantine_fault_tolerance} &
  Decentralized &
  \ding{55} &
  Datacenter &
  P2P (one-hop neighbor) &
  Global consensus &
  \ding{55} &
  \ding{55} \\
Bellet et al.~\cite{personalized_private_p2p_ml}, Vanhaesebrouck et al.~\cite{decentralized_collaborative_learning_personalized_models} &
  Decentralized &
  \ding{55} &
  Datacenter &
  P2P (one-hop neighbor) &
  Personalized &
  \ding{55} &
  \ding{55} \\
  \begin{tabular}[c]{@{}l@{}}AsyDFL~\cite{asydfl}, YOGA~\cite{yoga}, FRACTAL~\cite{fractal},\\DFLStar~\cite{dflstar}, FedHP~\cite{fedhp}\end{tabular}
&
  Decentralized &
  \ding{55} &
  Edge &
  P2P (one-hop neighbor) &
  Personalized &
  \ding{55} &
  \ding{51} \\ \\[-2.0ex] \hline & \\[-2.0ex]
\textbf{Totoro$^+$} &
  Decentralized &
  \ding{55} &
  Edge &
  Dynamically structured tree &
  Global consensus &
  \ding{51} &
  \ding{51}
\end{tabular}%
}
\vspace{0.05in}
\caption{Overview of state-of-the-art federated learning system designs compared to Totoro$^+$.}
\label{table:compare}
\end{table*}

\noindentparagraph{CPU overhead.} 
Figure~\ref{fig:cpu_overhead} shows the CPU overhead comparison of Totoro$^+$, OpenFL, and FedScale. For a fair comparison, we divide the CPU overhead into two parts: ``Overlay'' and ``Training'', where Overlay represents DHT-related tasks(including constructing P2P overlay, overlay maintenance, building dataflow trees, replanning paths, etc.), whereas Training includes any FL-related tasks (e.g., model training, model validation, model aggregation, etc.). Also, ``Master'' and ``Worker'' represent resource usage on the master node and worker node in a Totoro$^+$'s dataflow tree, respectively. The results show that the master and worker nodes collaboratively run the FL-related tasks, so the CPU usage on these two nodes takes turns reaching the peak. In contrast, OpenFL and FedScale maintain high CPU usage all the time. For DHT-related tasks, Totoro$^+$ only adds negligible CPU overhead, demonstrating Totoro$^+$’s portability to resource-constrained edge nodes.

\noindentparagraph{Memory overhead.} 
Figure~\ref{fig:memory_overhead} shows the memory overhead comparison of Totoro$^+$, OpenFL, and FedScale. The initial increase in the memory overhead of Overlay is because of the construction of the P2P overlay, routing tables, neighborhood sets, leaf sets, and dataflow trees. After the DHT starts functioning, no additional memory overhead is added in either the master (root node) or the leaf nodes. For FL-related tasks, the memory usage of Training on the master and worker nodes matches the trend of CPU usage. OpenFL and FedScale demand more and more memory as they allocate and cache models, training data, and other necessary metadata in memory all the time.

\section{Related Work}
This section discusses existing FL system designs and their shortcomings for deployment in edge platforms (Table~\ref{table:compare}).

\noindentparagraph{Centralized FL systems.} 
The centralized server-client architecture is widely used in state-of-the-art FL systems such as Oort~\cite{oort}, FedAT~\cite{fedat}, PyramidFL~\cite{pyramidfl}, TiFL~\cite{tifl}, and FEDRECON~\cite{FEDRECON}. In these systems, a powerful centralized parameter server and sufficient node-to-server bandwidth are provided to maintain the communications between the parameter server and clients. These systems focus on optimizing participant selection strategies~\cite{tifl, fedat, oort, pyramidfl}, improving communication efficiency~\cite{acgd, fedavg}, tackling data heterogeneity~\cite{fedprox, fedopt,dofed}, or ensuring privacy~\cite{differentially_private_fl, shuffle_model_dp_fl}.
For example, many algorithms have been proposed to reduce the communication overhead concentrated on the central server by reducing communication rounds with local updates allowance~\cite{dont_use_large, adacomm}, employing compression techniques to reduce the transmitted bits~\cite{qsparse, fetchsgd}, and sampling a subset of clients~\cite{tifl, fedat, oort}. 
Some systems harness workers' resource strengths with feedback integration~\cite{float, refl, adafl, fes}. FLOAT~\cite{float} adopts a multi-objective reinforcement learning agent that autonomously selects optimization techniques and configurations to meet individual client resource conditions. AdaFL~\cite{adafl} aims at NLP federated learning and 
introduces an online configurator that automatically adjusts the tuning depth and width based on training accuracy, per-round training time, and network time for each communication round.
While these systems work well in resource-rich datacenters, the centralized control plane with a static assignment of parameter servers may not adapt well to resource-constrained edge settings with millions of nodes, numerous FL applications, and unreliable network links.

\noindentparagraph{Hierarchical FL systems.} In hierarchical FL systems, an intermediate layer (e.g., edge servers) is inserted between the central server and client devices to perform partial aggregations on distributed local models before the global aggregation. 
The intermediate layer helps mitigate the non-independent and identically distributed (non-IID) effects of client data~\cite{edge-demlearn}, control the staleness of model updates~\cite{hiflash} from client devices with heterogeneous resources, synchronize edge and cloud aggregation frequencies~\cite{hwamei}, associate client devices with edge servers for training and communication efficiency~\cite{shapefl,feduc}, 
reduce the communication burden on the central server~\cite{hier_fl, client-edge-cloud}, and offload the computation burden from client devices to edge servers~\cite{ahfl}.
Abad et al.~\cite{hier_fl} employ small cell-base stations to orchestrate federated learning among mobile users and periodically exchange model updates with the macro-base station for global model learning. 
AHFL~\cite{ahfl} proposes to offload the training overhead from end devices to edge servers and inject local noise to secure data privacy while the cloud is responsible for model aggregation only. 
Although this structure enhances scalability with additional edge aggregators, they may become points of failure, causing disconnection between the central server and client devices and interrupting the training process. Some studies like  HiFlash~\cite{hiflash}, FedUC~\cite{feduc}, and AHFL~\cite{ahfl} leverage asynchronous aggregation between edge aggregators and cloud aggregators to avoid points of failure, but additional efforts to balance cloud aggregation and edge aggregation frequencies.

\noindentparagraph{Peer-to-peer FL (P2PFL) systems.}
P2PFL is a distributed learning protocol without a central parameter server. The key idea is to leverage P2P communication between individual clients to exchange model updates. Model aggregation and updates are performed locally on client nodes with the acquired gradients from their one-hop neighbors. Previous P2PFL research considered design from different angles: adversarial models (non-Byzantine vs. Byzantine settings) and learned models (global consensus vs. personalized). Non-Byzantine algorithms~\cite{d2, braintorrent, fully_dfl} focus on improving convergence speed. Conversely, Byzantine algorithms~\cite{byrdie, bridge,byzantine_fault_tolerance} aim to ensure that the trained model remains close to the model learned without adversaries. Global consensus learning~\cite{d2, braintorrent, fully_dfl, byrdie, bridge, byzantine_fault_tolerance} aims to generate a single global model at the end of training, while personalized learning~\cite{personalized_private_p2p_ml,decentralized_collaborative_learning_personalized_models} allows each peer to train a unique model. 
Some studies aim to improve the training efficiency of P2PFL~\cite{fedhp, yoga, asydfl, fractal}. AsyDFL~\cite{asydfl} proposes to push a subset of gradient updates to peer to trade-off communication cost and training accuracy and leverage asynchronous model aggregation to mitigate the straggler effect.
FedHP~\cite{fedhp} integrates adaptive control of both local updating frequency and network topology to speed up convergence and improve model accuracy.
These efforts mostly focus on novel algorithm development for P2P model aggregation and communication topology construction without considering cross-layer support of implementation. 
Totoro$^+$ backs up the P2PFL work with a back-end architecture for task management, application-specific customization, and failure recovery.

\section{Conclusion}
We present Totoro$^+$, a novel scalable federated learning system for edge networks. It presents three design innovations: a locality-aware P2P multi-ring structure, a publish/subscribe-based forest abstraction, 
and a game-theoretic path planning model. 
We evaluate Totoro$^+$ on 500 Amazon EC2 nodes using real-world CV and NLP datasets at different scales. 
We compare Totoro$^+$ with the state-of-the-art and demonstrate substantial advancements in scalability and load balancing while significantly speeding up the total training time for many concurrently running applications, reducing the communication overhead, and efficiently adapting to unreliable edge networks and churns. 
We provide a theoretical guarantee that the game-theoretic path planning model achieves an $\epsilon$-approximate Nash equilibrium.
We present the discussions and future work in Appendix~\ref{sec:discussions}.

\section*{Acknowledgment}


The authors would like to thank the Editor-in-Chief, Prof. Xian-He Sun; the Associate Editor, Prof. Amelie Chi Zhou; and the anonymous reviewers for their valuable comments and constructive suggestions, which significantly improved the quality of this article. The authors also thank Yue Lin and Shih-Jui Liang for their assistance with the experimental evaluation and validation of the proposed routing algorithm.

\bibliographystyle{IEEEtran}
\bibliography{References}

\vspace{-12mm}

\begin{IEEEbiography}[{\includegraphics[width=1in,height=1.in,clip,keepaspectratio]{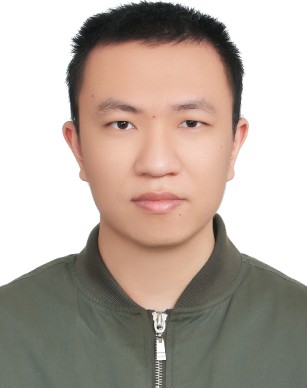}}]
{Cheng-Wei Ching}
received the MS degree in Computer Science and Information Engineering from National Chung Cheng University, Taiwan, in 2021. He is pursuing a PhD in Computer Science and Engineering at the University of California Santa Cruz, USA.
His research interests include distributed systems for machine learning, 
approximation algorithms and their applications, and mobile edge computing. 
\end{IEEEbiography}

\vspace{-12mm}

\begin{IEEEbiography}[{\includegraphics[width=1in,height=.9in,clip,keepaspectratio]{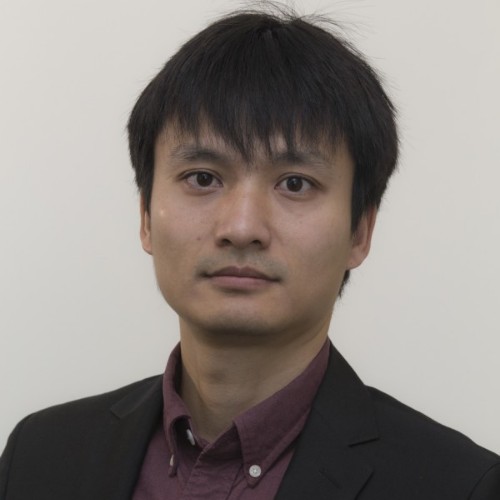}}]
{Xin Chen} received the BS degree in Computer Science from Shandong University, China, in 2009, the MS degree in Software Engineering from Tsinghua University, China, in 2012, and the PhD degree in Computer Science from the Georgia Institute of Technology, USA, in 2020. His research focuses on computer systems and machine learning systems, including solving path planning and collision detection problems.
\end{IEEEbiography}

\vspace{-12mm}

\begin{IEEEbiography}[{\includegraphics[width=1in,height=1.05in,clip,keepaspectratio]{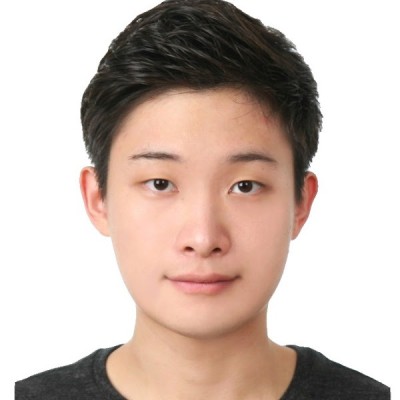}}]
{Taehwan Kim} received his BS from Bucknell University in 2020 and his MS from Virginia Tech in 2022. He is now a software engineer at Google, where he focuses on Gemini post-training infrastructure.
\end{IEEEbiography}

\vspace{-12mm}

\begin{IEEEbiography}[{\includegraphics[width=1.in,height=1.05in,clip,keepaspectratio]{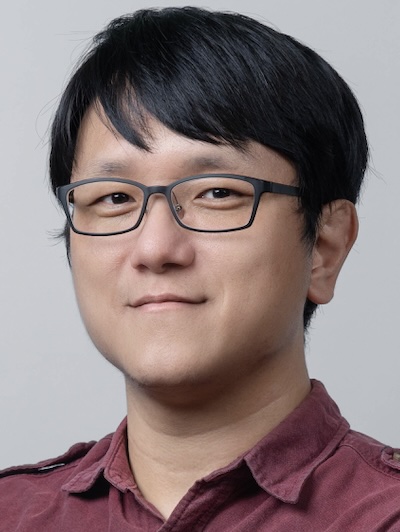}}]
{Jian-Jhih Kuo} (Member, IEEE) received the B.S. and Ph.D. degrees in computer science from National Chung Cheng University (CCU), Taiwan, in 2008, and National Tsing Hua University, Taiwan, in 2014, respectively. He was a Postdoctoral Fellow in the Institute of Information Science, Academia Sinica, Taiwan. He joined the Department of Computer Science and Information Engineering, CCU, Taiwan, in 2018, where he is currently an associate professor. His research interests include computer networking, quantum networking, traffic engineering, and cloud and edge computing. He was a recipient of the Ta-You Wu Memorial Award from National Science and Technology Council (NSTC), Taiwan, in 2025, NSTC Project for Excellent Junior Research Investigators in 2022 and 2025, and CCU Young Faculty Award in 2021.
\end{IEEEbiography}

\vspace{-110mm}

\begin{IEEEbiography}[{\includegraphics[width=1in,height=1.1in,clip,keepaspectratio]{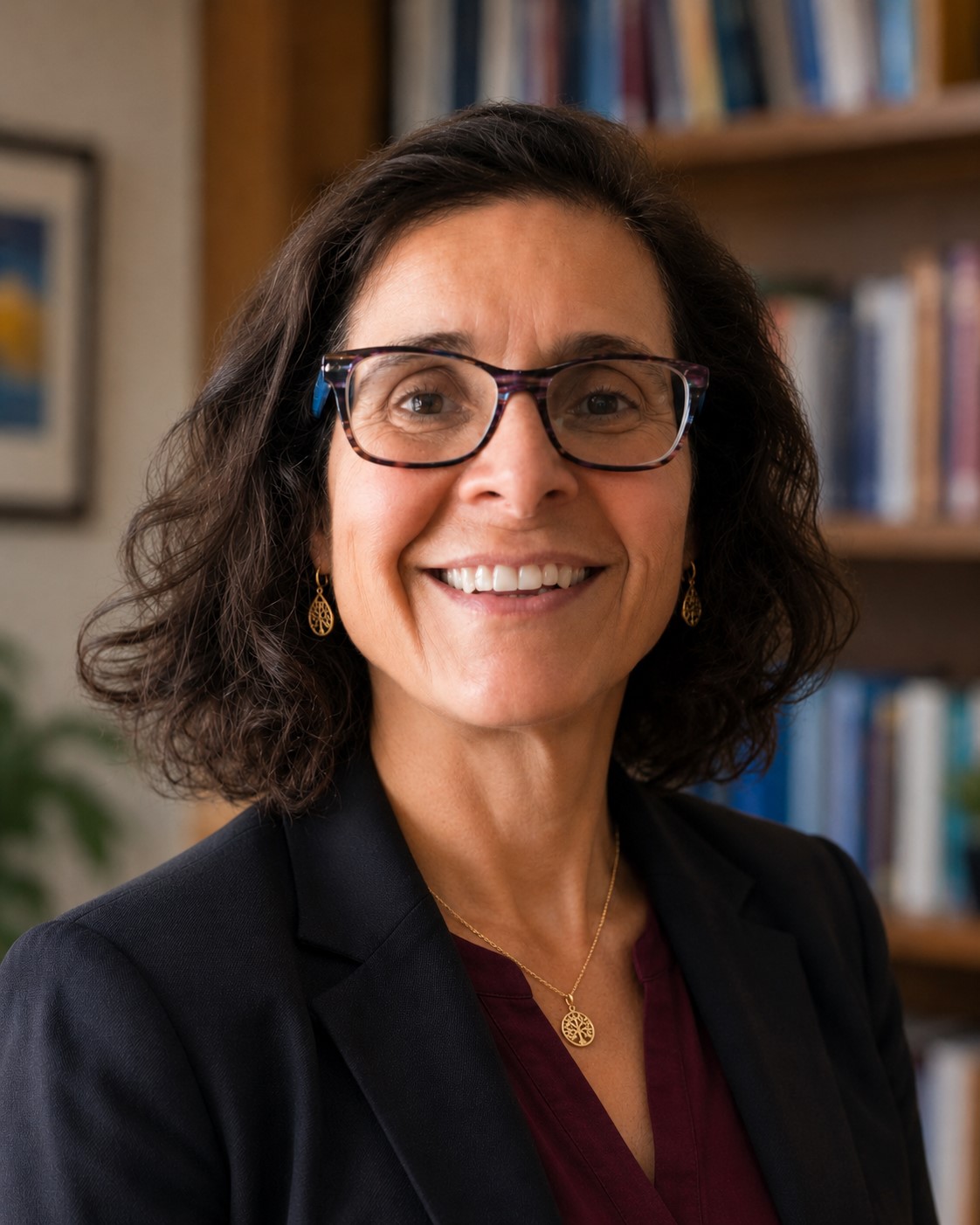}}]
{Dilma Da Silva} is a Regents Professor and holder of the Ford Design Professorship II in the Department of Computer Science and Engineering at Texas A\&M University. From 2022 to 2026, she served in several leadership roles at the U.S. National Science Foundation. Her primary research interests include distributed systems, high-performance computing, operating systems, computer science education, and quantum computing. Dilma received her Ph.D. in computer science from the Georgia Institute of Technology in 1997.
\end{IEEEbiography}

\vspace{-110mm}

\begin{IEEEbiography}[{\includegraphics[width=1in,height=1.1in,clip,keepaspectratio]{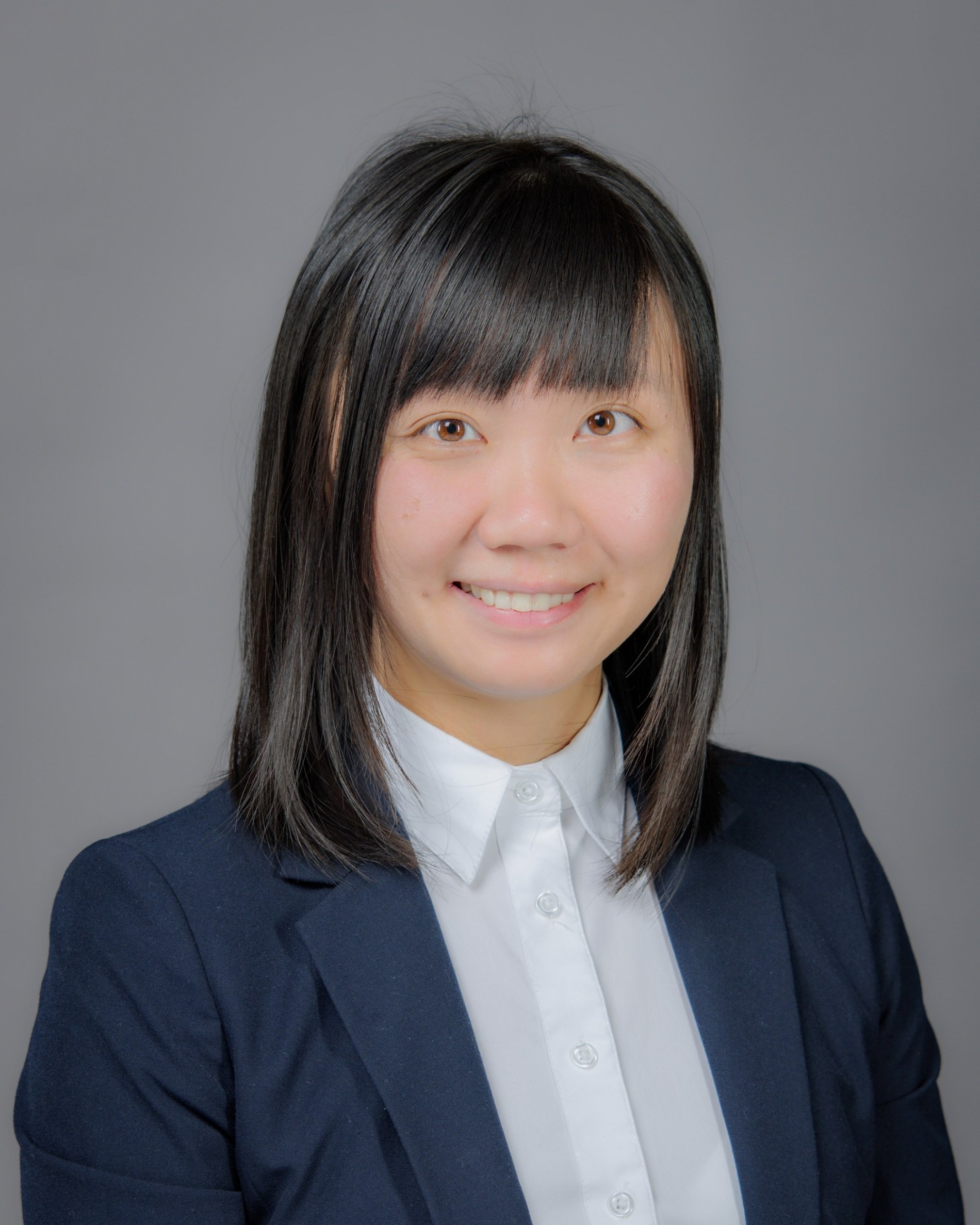}}]
{Liting Hu} received the BS degree in Computer Science from the Huazhong University of Science and Technology, China, in 2007, and the PhD degree in Computer Science from the Georgia Institute of Technology, USA, in 2016. She conducts AI-driven experimental computer systems research in stream processing systems, cloud and edge computing, distributed systems, and systems virtualization. Currently she is an associate professor at the University of California Santa Cruz.
\end{IEEEbiography}

\clearpage
\appendices
\setcounter{section}{0}
\setcounter{subsection}{0}

{\color{dv}
\section{Application Advertisement and Discovery in Totoro and Totoro$^+$}
\label{app:advertise_discover}

\noindentparagraph{Totoro's design and limitations.}
In Totoro, the system does not have an independent system component for federated learning (FL) application advertisement and discovery. Instead, Totoro implicitly relies on the AppId and its DHT-based P2P overlay as a rendezvous mechanism for applications. Once an FL application is created, the node whose node ID is numerically closest to the AppId is automatically designated as the master. The system assumes that participating nodes already know the corresponding AppId of the application, and they route \texttt{JOIN} messages through the overlay network toward that AppId. Due to the properties of the P2P overlay, \texttt{JOIN} messages are eventually aggregated at the master node or any other nodes in the dynamically constructed dataflow tree, allowing nodes to participate in model dissemination or gradient aggregation. This design essentially abstracts application access and participation as a join-by-identifier mechanism, rather than providing an explicit application directory or advertisement plane.

However, this design also introduces several system-level challenges. First, due to the absence of an explicit application advertisement or discovery plane, nodes cannot directly query or enumerate the FL applications currently running in the system. As a result, application existence and access rely on additional, out-of-band coordination mechanisms (e.g., a third party advertisement platform~\cite{aaap, adsbpc}). Second, application-related metadata (e.g., model type, resource requirements, data constraints, or participation conditions) is not natively published or disseminated within the system, limiting the ability of nodes to perform conditional matching or application selection based on their own capabilities or environmental states. Finally, as the number of concurrently running FL applications increases, AppId dissemination and participant management increasingly depend on external control planes. In heterogeneous and highly dynamic edge environments, this dependency may further increase system management complexity and coordination overhead.

\noindentparagraph{Advertise-discover tree in Totoro$^+$.}
To address the above limitations in application advertisement and discovery, Totoro$^+$ introduces an additional application management structure, namely the \textit{Advertise-Discover Tree (AD tree)}. In Totoro$^+$, when a new FL application is created, its corresponding master node immediately subscribes to a globally shared AD tree. This subscription process follows the same mechanism as a standard FL application dataflow tree. After joining the AD tree, the new master uploads the AppId and associated metadata of the FL application it manages to the AD tree. At the same time, the master receives broadcast messages from the AD tree containing the AppIds and corresponding information of all FL applications currently running in the overlay. 

Following this mechanism, the AD tree forms a distributed and scalable application directory that continuously maintains and updates application-level state within the overlay network. When a new node joins the Totoro$^+$ overlay network and wants to explore which FL applications are currently available, it only needs to subscribe to the AD tree to obtain the list of active applications along with their descriptive metadata. Based on the AppIds obtained from the AD tree, the node can then subscribe to the corresponding FL application dataflow tree and participate in training. By explicitly abstracting application advertisement and discovery as an independent dataflow tree, Totoro$^+$ provides native application visibility and dynamic exploration capabilities while preserving decentralization. Moreover, the additional overhead introduced by the AD tree is negligible, as only the masters of FL application dataflow trees need to subscribe to the AD tree, and newly joined nodes can unsubscribe after identifying the desired FL application.

\section{Bandit-Based Model v.s. Game-Theoretic Model}
\label{app:math_model_comparison}
We compare the mathematical models adopted in Totoro$^+$ (game-theoretic) and Totoro (bandit-based) formulations in this section. The comparison is conducted along five dimensions: decision maker, reward mechanism, source of uncertainty, optimization objective, and optimal solution.

\noindentparagraph{Decision maker.}
In the bandit-based model, each node is modeled as an independent learner. In each round, a node selects a single action, namely choosing one path $p_k\in \mathcal{P}$ from the set of all possible paths $\mathcal{P}$. The decision made by one node does not affect the decisions of other nodes, as nodes are unaware of each other’s existence. Moreover, the model does not consider interference or competition effects when multiple nodes select the same path. Using a casino analogy, each node corresponds to an independent player, and each action represents choosing a slot machine to play. Although multiple players may be selecting and playing slot machines simultaneously, they are assumed to be unaware of one another, and the payoff of a slot machine does not change despite being played by multiple players (e.g., jackpots or missed pulls).

In contrast, the game-theoretic model jointly considers the simultaneous path selection decisions of $N$ nodes. The action space expands from $\mathcal{P}$ to the Cartesian product $\prod_{n=1}^N \mathcal{P}_n$ , representing all possible combinations of paths chosen by the $N$ nodes. Accordingly, the action in each round is no longer a single path $p_k$, but a vector
$\textbf{p}=(p_1,\cdots,p_N)$, which captures the paths selected by all nodes. Under the casino analogy, this corresponds to a setting where $N$ players are simultaneously present in the casino. Each player chooses a slot machine to play, while being aware of the presence of other players. Each player therefore considers whether a slot machine remains worthwhile after it has been played by others, or whether switching to a different machine would yield a better payoff. From the perspective of decision making, the game-theoretic formulation more closely reflects practical networking scenarios. In particular, \textit{when multiple nodes select the same path, the effective performance of that path may degrade due to shared transmission resources (e.g., under bandwidth-sharing assumptions), a phenomenon that is naturally captured in the game-theoretic model but ignored in the bandit formulation}.

\noindentparagraph{Reward mechanism.}
In the bandit-based model, the reward is defined as the path delay,
\[
D_\theta=\sum_{i\in E}x_i \cdot \frac{1}{\theta_i},
\]
where $x_i$ indicates whether link $i$ is included in a selected path, and $\theta_i$ denotes the success probability of link $i$. Under this formulation, although $\theta_i$ is unknown, it can be learned from historical observations. Importantly, the reward associated with a path is independent of the actions of other nodes. The model does not account for performance degradation when multiple nodes select the same link, as the environment is assumed to be \textit{stationary} and \textit{exogenous}.

In contrast, the game-theoretic model defines the mean reward of node $n$ as
\[
r_n(\textbf{p})=r^{p_n} (n^{p_n} (\textbf{p}),\theta_{p_n})
\]
after selecting path $p_n$, 
where $\theta_{p_n}$ represents the quality of path $p_n$, and the reward explicitly depends on the number of nodes selecting the same path, denoted by $n^{p_n} (\textbf{p})$. As a result, the reward is endogenous and jointly determined by the collective actions of all nodes. From the perspective of reward mechanisms, the game-theoretic formulation better reflects practical networking environments. When a path is utilized by multiple nodes, the reward should decrease due to reduced transmission efficiency and increased delay caused by shared resources. \textit{This congestion effect is naturally captured in the game-theoretic model but ignored in the bandit-based model}.

\noindentparagraph{Uncertainty source.}
In the bandit-based model, uncertainty primarily arises from the unknown environmental parameters. Specifically, the success probability $\theta_i$ of each link $i\in E$ is modeled as an unknown but fixed random but i.i.d parameter that remains unchanged throughout the routing process. Although $\theta_i$ is unknown a priori, it can be gradually estimated and learned from historical transmission outcomes. Consequently, the uncertainty in the bandit-based model is statistical uncertainty, which stems from estimation errors of random environmental parameters rather than from the behavior of other nodes. Since the reward (or delay) depends solely on the selected path and its underlying link success rates, the environment is assumed to be stationary and exogenous, and is not affected by the decisions of other nodes. 

In contrast, in the game-theoretic model, uncertainty does not arise solely from the unknown path quality $\theta_p$, but more fundamentally from the strategic choices of other nodes. As the reward obtained by node $n$,
\[
r_n(\textbf{p})=r^{p_n} (n^{p_n} (\textbf{p}),\theta_{p_n}),
\] 
explicitly depends on the joint action $\textbf{p}=(p_1,\cdots,p_N)$, a single node cannot determine its realized reward without considering the actions of other nodes, even if $\theta_{p_n}$ is known. In other words, each node faces \textit{strategic uncertainty}, which arises from the decisions of other rational nodes operating on the same time scale, rather than from an independently learnable random environment. From the perspective of uncertainty sources, the bandit-based model focuses on \textit{how to learn in a fixed and exogenous stochastic environment}, whereas the game-theoretic model characterizes \textit{how to cope with uncertainty induced by strategic interactions among multiple nodes}. In addition to unknown path qualities, the latter must account for the actions of other nodes, and thus more accurately reflects the dynamics and uncertainty of practical edge routing scenarios, where multiple nodes simultaneously compete for limited network resources.

\noindentparagraph{Optimization objective.}
In the bandit-based model, the routing problem is formulated as an online learning problem, whose optimization objective is to minimize the cumulative regret with respect to a fixed optimal path. Specifically, the bandit-based model defines the regret at time $K$ as
\[
    R^\pi(K) =\mathbb{E}\Bigg[\sum_{k=1}^K D^\pi(k)\Bigg]-K D_\theta(p^*),
\]
where
$p^* \in \argmin_{p\in \mathcal{P}}D_\theta(p)$
denotes the single path with the minimum expected delay under the true but unknown environment parameter $\theta$. This objective reflects the central concern of the bandit-based model: in a \textit{stationary} and \textit{exogenous} environment, how to gradually learn and approach a fixed optimal path with minimal exploration cost. Accordingly, regret measures the additional delay incurred during the learning process due to incomplete information, and fundamentally serves as an indicator of \textit{learning efficiency}. 

In contrast, in the game-theoretic model, the optimization objective is not defined relative to a static optimal path, but is instead evaluated using \textit{Nash regret}. The model focuses on whether, under simultaneous path selection by multiple nodes, any node has an incentive to unilaterally deviate from the currently adopted joint policy. The objective function at time $T$ is defined as
\[
    \text{Nash-Regret}(T)=\sum_{t=1}^T \max_{n\in [N]}(V^{\pi^+_n,\pi^t_{-n}}_n - V^{\pi^t}_n),
\]
where $\pi^+_n$ denotes the best-response policy of node $n$ given that the strategies of all other nodes are fixed, $\pi^t_{-n}$ represents the marginal joint policy of nodes $1,\cdots, n-1, n+1,\cdots, N$ at time $t$, and $V_n^\pi$ represents the value of policy $\pi$ for node $n$. This objective captures \textit{strategy stability}, rather than the learning of a globally shortest path. When the Nash regret is sufficiently small, no individual node can significantly improve its expected reward through unilateral strategy deviation. Therefore, from the perspective of optimization objectives, the bandit-based model aims to rapidly identify a fixed optimal action, whereas the game-theoretic model emphasizes suppressing incentives for deviation and promoting stability in strategic interactions. The corresponding regret definitions, comparison baselines, and performance interpretations of the two models thus differ in a fundamental manner.

\noindentparagraph{Optimal solution.}
In the bandit-based model, the notion of an optimal solution does not correspond to a static strategy equilibrium, but rather to a learning policy $\pi$ that guarantees \textit{sublinear regret}. The optimality of such a policy is evaluated by whether, as $K$ increases,
\[
\frac{R^\pi(K)}{K}\rightarrow 0,
\]
which ensures that the system’s average latency converges to the latency of the optimal path $p^*$ in the long run. In other words, the optimal solution in the bandit-based model emphasizes \textit{learning dynamics} and \textit{convergence speed}, and assumes that once the environment parameters are sufficiently learned, the system behavior eventually concentrates on a single optimal path. 

In contrast, in the game-theoretic model, the concept of an optimal solution is closely tied to \textit{equilibrium}. The objective is not to identify a single path that is globally optimal for all nodes, but to learn a joint policy $\pi^*$ under which no node can achieve a significant gain by unilaterally changing its strategy. When the Nash regret converges to zero, the resulting solution can be interpreted as an approximate \textit{Nash equilibrium} or a \textit{coarse correlated equilibrium}. Such solutions reflect a stable operating state under multi-node competition and resource sharing, rather than the minimization of a single performance metric. Therefore, from the perspective of optimal solutions, the bandit-based model focuses on how to efficiently learn a globally optimal behavior, whereas the game-theoretic model focuses on whether the system can reach a stable strategy profile with no incentive for unilateral deviation under strategic coupling and competition. The former prioritizes learning efficiency and asymptotic performance, while the latter places greater emphasis on system-level stability and fairness. As a result, the game-theoretic formulation is better suited to path selection in edge environments with limited resources and multiple competing nodes.

\section{Preliminaries on Games}
\label{app:game_prelim}
This section presents the game-theoretic preliminaries that form the foundation
of the mathematical modeling and algorithmic design of Totoro$^+$.
The problem studied in this work involves multiple nodes making decentralized
routing decisions under shared resource constraints, where the reward obtained
by each node depends on both its own action and the collective behavior of other
nodes.
Such interactions are naturally captured by game-theoretic models, especially
in settings with incomplete information and limited feedback.

We first introduce general-sum matrix games~\cite{gen_sum_game_icml_2005, gen_sum_game_jmlr_2023} as a general framework for modeling
multi-agent decision-making with possibly misaligned objectives. We then introduce the notion of general policies for these games~\cite{congestion_games, monderer1996}.
Within this framework, we review equilibrium notions and performance metrics,
including Nash equilibrium and Nash regret~\cite{nash_regret_nips, nash_regret_icml}, which are central to the analysis of
learning dynamics in repeated games.
We then discuss potential games~\cite{monderer1996} and congestion games~\cite{congestion_games}, which impose additional
structure on general-sum games and allow us to explicitly model performance
degradation due to resource contention.
Finally, we describe different types of feedback available to agents in
congestion games, distinguishing between semi-bandit and bandit feedback. In this work, we focus on the bandit feedback setting, which reflects realistic
communication constraints and directly motivates the learning algorithms
developed for Totoro$^+$.

\noindentparagraph{General-sum matrix games}~\cite{gen_sum_game_icml_2005, gen_sum_game_jmlr_2023}.
We study the framework of general-sum matrix games, which is specified by the tuple
$\mathcal{G} = (\{\mathcal{A}_i\}_{i=1}^m, R)$,
where $m$ denotes the number of players, $\mathcal{A}_i$ is the action space of player $i$, and
$R(\cdot \mid \textbf{a})$ represents a reward distribution supported on $[0, r_{\max}]^m$ with mean
$\textbf{r}(\textbf{a})$. Let $\mathcal{A} = \mathcal{A}_1 \times \cdots \times \mathcal{A}_m$ denote the joint action
space, and let $\textbf{a} = (a_1, \ldots, a_m) \in \mathcal{A}$ be a joint action profile. Once all players select
actions $\textbf{a} \in \mathcal{A}$, a reward vector is drawn according to $\textbf{r} \sim R(\cdot \mid \textbf{a})$,
and player $i$ receives a reward $r_i \in [0, r_{\max}]$ with expectation $r_i(\textbf{a})$.
Each player aims to maximize her own expected reward.

\noindentparagraph{General policy}~\cite{congestion_games, monderer1996}. A general policy $\pi$ is defined as a probability distribution in $\Delta(\mathcal{A})$, the simplex over the
joint action space $\mathcal{A}$. A product policy
$\pi = (\pi_1, \ldots, \pi_m)$ corresponds to an element of $\Delta(\mathcal{A}_1) \times \cdots \times \Delta(\mathcal{A}_m)$, under which $\textbf{a} = (a_1, \ldots, a_m) \sim \pi$
indicates that each $a_i$ is independently sampled according to $a_i \overset{\text{i.i.d.}}{\sim} \pi_i$.
The value of a policy $\pi$ for player $i$ is defined as $V_i^\pi = \mathbb{E}_{\textbf{a} \sim \pi}[r_i(\textbf{a})]$.

\noindentparagraph{Nash equilibrium and Nash regret}~\cite{nash_regret_nips, nash_regret_icml}.
Given a general policy $\pi$, let $\pi_{-i}$ denote the marginal joint policy of all players except player $i$.
The best response of player $i$ to policy $\pi$ is given by
\[
\pi_i^\dagger = \argmax_{\mu \in \Delta(A_i)} V_i^{\mu, \pi_{-i}},
\]
with the corresponding value $V_i^{\dagger, \pi_{-i}} := V_i^{\pi_i^\dagger, \pi_{-i}}$.
Our objective is to characterize approximate Nash equilibria of the matrix game, defined as follows.

\begin{defi}
A product policy $\pi$ is an \textit{$\epsilon$-approximate Nash equilibrium} if $\max_i \left( V_i^{\dagger, \pi_{-i}} - V_i^\pi \right) \le \epsilon$.
\end{defi}

An $\epsilon$-approximate Nash equilibrium can be achieved by ensuring sublinear growth of the Nash regret,
which we define below. For a more comprehensive discussion, see Section~3 in
Ding et al.~\cite{ding2022}.

\begin{defi}
With $\pi^k$ denoting the policy at the $k$-th episode, the Nash regret after
$K$ episodes is defined as
\[
\text{Nash-Regret}(K)
=
\sum_{k=1}^K
\max_{i \in [m]}
\left(
V_i^{\dagger, \pi_{-i}^k}
-
V_i^{\pi^k}
\right).
\]
\end{defi}

\begin{remark}
If $\max_{i \in [m]}$ in the definition of Nash regret is replaced by $\sum_{i=1}^m$, the resulting single-step
Nash regret at episode $k$ reduces to the Nikaid\^o--Isoda function evaluated at $\pi^k$, which is a
widely used objective for equilibrium computation
\cite{nikaido1955, raghunathan2019}.
Such a replacement increases the regret bounds by a factor of $m$, but does not affect our main conclusions.
\end{remark}

\noindentparagraph{Potential games}~\cite{monderer1996}.
A potential game is a class of general-sum games for which there exists a
potential function $\Phi : \Delta(\mathcal{A}) \to [0, \Phi_{\max}]$
such that, for any player $i \in [m]$ and any policies $\pi_i, \pi_i', \pi_{-i}$,
the following condition holds:
\[
\Phi(\pi_i, \pi_{-i}) - \Phi(\pi_i', \pi_{-i})
=
V_i^{\pi_i, \pi_{-i}} - V_i^{\pi_i', \pi_{-i}}.
\]

It follows directly that any policy maximizing the potential function
constitutes a Nash equilibrium.

\noindentparagraph{Congestion games}~\cite{congestion_games}.  
A congestion game models competitive interactions over shared resources whose
performance degrades as the level of simultaneous usage increases.
Formally, the game is specified by $\mathcal{G} = (\mathcal{F}, \{\mathcal{A}_i\}_{i=1}^m, \{R^f\}_{f \in \mathcal{F}})$,
where $\mathcal{F} = [F]$ denotes the set of facilities, and
$R^f(\cdot \mid n) \in [0,1]$ is the reward distribution associated with facility
$f$ when it is used by $n \in [m]$ players, with mean reward $r^f(n)$.
Each player $i$ selects an action $a_i \in \mathcal{A}_i$, where an action
corresponds to a subset of facilities, i.e., $a_i \subseteq \mathcal{F}$.

Given a joint action profile $\textbf{a} \in \mathcal{A}$ selected by all players,
each facility $f$ generates a random reward according to $r^f \sim R^f(\cdot \mid n^f(\textbf{a}))$,
where $n^f(\textbf{a}) = \sum_{i=1}^m \mathbf{1}\{ f \in a_i \}$
denotes the number of players concurrently utilizing facility $f$.
The total reward obtained by player $i$ is the aggregate reward over all
facilities included in $a_i$,
$r_i = \sum_{f \in a_i} r^f$, with expected value
$r_i(\textbf{a}) = \sum_{f \in a_i} r^f(n^f(\textbf{a})) \in [0, F]$.

\noindentparagraph{Connection to potential games}
\cite{monderer1996}.  
Congestion games constitute a special class of potential games in which
individual incentives are aligned with a global objective.
Specifically, they admit the potential function
$\Phi(\textbf{a}) = \sum_{f \in \mathcal{F}} \sum_{i=1}^{n^f(\textbf{a})} r^f(i)$.
This potential captures the cumulative impact of congestion on each facility.
For any player $i$, a unilateral deviation from $a_i$ to $a_i'$ while holding
$a_{-i}$ fixed leads to a change in the potential that exactly matches the
corresponding change in the player's reward:
\[
\Phi(a_i, a_{-i}) - \Phi(a_i', a_{-i})
=
r_i(a_i, a_{-i}) - r_i(a_i', a_{-i}).
\]
Extending the definition to mixed strategies, we define
\[
\Phi(\pi) = \mathbb{E}_{\textbf{a} \sim \pi}[\Phi(\textbf{a})].
\]
As a consequence, for any unilateral policy deviation of player $i$, we have
\[
\Phi(\pi_i, \pi_{-i}) - \Phi(\pi_i', \pi_{-i})
=
V_i^{\pi_i, \pi_{-i}} - V_i^{\pi_i', \pi_{-i}}.
\]
This connection will be instrumental in our subsequent analysis of learning dynamics under bandit feedback.

\noindentparagraph{Types of feedback.}
In congestion games, the reward information available to each player can take
one of two standard forms: semi-bandit feedback or bandit feedback, depending on
the level of observability in the model.
Under semi-bandit feedback, after selecting an action, player $i$ observes the
reward $r^f$ associated with each facility $f \in a_i$.
In contrast, under bandit feedback, after taking the action, player $i$ only
observes the aggregated reward $r_i = \sum_{f \in a_i} r^f$ without access to the individual components $r^f$.
In this work, we focus on the bandit feedback setting as it captures more realistic operating conditions in
practical scenarios, where fine-grained information about the contribution or
delay of each individual component is typically unavailable.
In particular, in Totoro$^+$, a node can usually observe only the overall
end-to-end transmission time or aggregate performance of a selected path,
rather than the latency or success rate incurred at each intermediate resource.
Accordingly, we adopt the bandit feedback setting, which reflects such practical constraints.

\section{Notation Table}
\label{app:notation_table}
Mathematical notations that will be introduced and used multiple times in this paper are summarized in Table~\ref{tab:notation_table}.

\begin{table}[t!]
\centering
\begin{tabular}{c|p{4.5cm}}    
    \Xhline{1.5\arrayrulewidth}    
    \noalign{\vskip 2pt}
    \rule{0pt}{.8\normalbaselineskip}Notation & Explanation \\ 
    \noalign{\vskip 2pt}\hline 
    \noalign{\vskip 2pt}
    
    \rule{0pt}{.8\normalbaselineskip}
    $N$ & Number of nodes in the dataflow tree \\

    $P$ & Total number of available paths to the root \\

    $p \in [P]$ & A routing path in $P$ \\

    $P_n$ & Set of available paths for node $n$ \\

    $\theta_p$ & Mean success rate of path $p$ \\

    $\mathcal{R}^p(\cdot \mid k,\theta_p)$ & Reward distribution of path $p$ when $k$ nodes select it \\

    $r^p(k,\theta_p)$ & Mean reward of path $p$ given $k$ nodes select it \\

    $\mathbf{p}=(p_1,\ldots,p_N)$ & Joint path selection of all nodes \\

    $p_n$ & Path selected by node $n$ \\

    $n^p(\mathbf{p})$ & Number of nodes selecting path $p$ under joint action $\mathbf{p}$ \\

    $r_n(\mathbf{p})$ & Reward received by node $n$ under joint action $\mathbf{p}$ \\

    $\pi_n$ & Mixed policy of node $n$ over paths $P_n$ \\

    $\Delta(P_n)$ & Probability simplex over $P_n$ \\

    $\pi=(\pi_1,\ldots,\pi_N)$ & Joint policy of all nodes \\

    $\pi_{-n}$ & Joint policy of all nodes except node $n$ \\

    $V_n^\pi$ & Expected reward (value) of node $n$ under policy $\pi$ \\

    $\pi_n^+$ & Best-response policy of node $n$ given $\pi_{-n}$ \\

    $T$ & Total number of packets transmitted \\

    $\tau$ & Number of packets for exploration per episode \\

    $\text{Nash-Regret}(T)$ & Cumulative Nash regret over $T$ packets \\

    $\alpha$ & Mixing parameter between exploitation and exploration \\

    $\beta$ & Step size for the Frank--Wolfe update \\

    $\rho_n^k$ & Exploratory policy of node $n$ at episode $k$ \\

    $\psi(p)$ & Feature vector associated with path $p$ \\

    $M(\cdot)$ & Policy correlation matrix \\

    $\widehat{\nabla}_n^k \Phi$ & Estimated gradient of the potential function for node $n$ at episode $k$ \\

    \noalign{\vskip 2pt}
    \Xhline{1.5\arrayrulewidth}
\end{tabular}
\caption{Summary of notations.}
\vspace{-0.05in}
\label{tab:notation_table}
\end{table}

}
\section{Numerical Example of Algorithm~\ref{pseudo:our_algorithm}}
\label{sec:numerical_example}

This appendix gives a numerical example of Algorithm~\ref{pseudo:our_algorithm}.

Suppose $4$ leaf nodes ($l_1, l_2, l_3, l_4$) run Algorithm~\ref{pseudo:our_algorithm} to select the next hops for forwarding model updates, and $2$ intermediate nodes ($m_1, m_2$) are available for selection (Figure~\ref{fig:algo_exp}). Let $\alpha=0.5$, $\beta=0.5$, and $\tau=2$. We use node $l_1$ to walk through Algorithm~\ref{pseudo:our_algorithm}. First, node $l_1$ follows its initial policy $\pi^1_{l_1}$ to select the next hops and collect rewards $r^{1,t}_{n}$.
Since there are two candidates for node $l_1$ to forward its packets, initial policy $\pi^1_{l_1}$ should be a 2-dimensional vector with the summation of its entries being $1$. For example, if $\pi^1_{l_1}=[0.5,0.5]$, the chance of selecting either intermediate node is $50\%$.

Suppose node $l_1$ forwards two packets to two intermediate nodes $m_1$ and $m_2$ each once, where the first packet goes to $m_1$ (i.e., $p^{1,1}_{l_1} = m_1$) and the second goes to $m_2$ (i.e., $p^{1,2}_{l_1} = m_2$), and the corresponding rewards $r^{1,1}_{l_1} = 0.4$ and $r^{1,2}_{l_1} = 0.8$. Since $\tau=2$, node $l_1$ updates its policy after forwarding two packets.

Since node $l_1$ has two candidates for packet forwarding, $P_{l_1} = \{m_1, m_2\}$, and $\Delta(P_{l_1})$ is a set that includes some feasible policies derived from $P_{l_1}$. Let
\begin{align}
    \Delta(P_{l_1}) = \Bigg[\begin{bmatrix}
           0.6 \\
           0.4 
         \end{bmatrix},
         \begin{bmatrix}
           0.5 \\
           0.5
         \end{bmatrix},
         \begin{bmatrix}
           0.3 \\
           0.7
         \end{bmatrix},
         \begin{bmatrix}
           0.1 \\
           0.9
         \end{bmatrix}\Bigg],
         \nonumber
\end{align}
where $[0.6, 0.4]$ represents a policy that has 0.6 of chance to select $m_1$ and 0.4 of chance to select $m_2$. For a policy $\lambda = [0.6, 0.4]$, the correlation matrix $M(\lambda)$ is 
\begin{align}
    M(\lambda) &= 0.6 \cdot \begin{bmatrix}
           1 \\
           0
         \end{bmatrix}\cdot \begin{bmatrix}
           1 \\
           0
         \end{bmatrix}^\top
         + 0.4 \cdot 
         \begin{bmatrix}
           0 \\
           1
         \end{bmatrix}\cdot 
         \begin{bmatrix}
           0 \\
           1
         \end{bmatrix}^\top  \nonumber\\
         &= \begin{bmatrix}
           0.6 & 0 \\
           0 & 0.4
         \end{bmatrix}.
         \nonumber
\end{align}
And its determinant $\text{det}(M(\lambda))=0.6\times 0.4 = 0.24$. Similarly, the determinants of other policies are $0.25$, $0.21$, and $0.09$. By line~\ref{pseudocode:d_optimal_design}, we know that $\rho^1_{l_1} = [0.1, 0.9]$.

\begin{figure}[t]
    \centering
    \includegraphics[width=\linewidth]{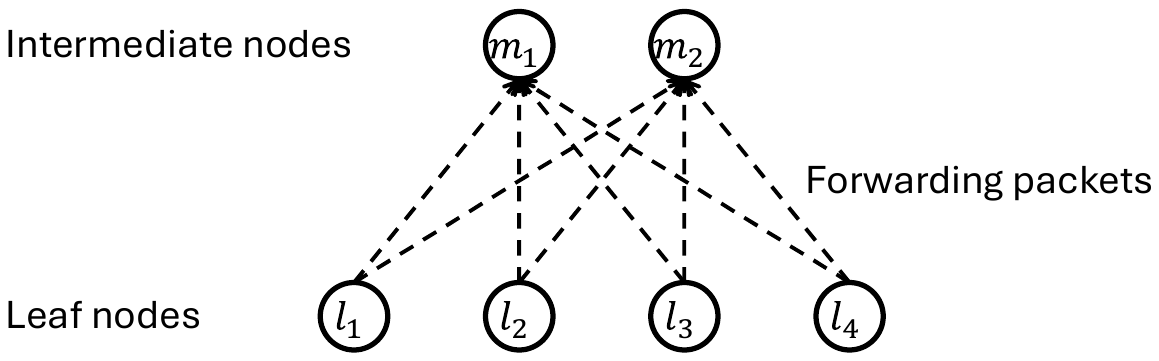}
    \caption{Example of a dataflow tree.}
    \label{fig:algo_exp}
\end{figure}

To estimate gradient $\widehat{\nabla}^1_{l_1}\Phi(p)$, we have
\begin{align}
    M(\pi^1_{l_n})^{-1} &=  \Bigg[0.5 \cdot\begin{bmatrix}
           1 \\
           0
         \end{bmatrix}\cdot \begin{bmatrix}
           1 \\
           0
         \end{bmatrix}^\top
         + 0.5 \cdot 
         \begin{bmatrix}
           0 \\
           1
         \end{bmatrix}\cdot 
         \begin{bmatrix}
           0 \\
           1
         \end{bmatrix}^\top\Bigg]^{-1} \nonumber\\
         &= \begin{bmatrix}
           0.5 & 0 \\
           0 & 0.5
         \end{bmatrix}^{-1} = \begin{bmatrix}
           2 & 0 \\
           0 & 2
         \end{bmatrix}.
         \nonumber
\end{align}

Next, by line~\ref{pseudocode:estimate_gradient}, we have
\begin{align}
    \widehat{\nabla}^1_{l_1}\Phi(m_1) &= \frac{1}{2}(
    \begin{bmatrix}
       1 \\
       0
    \end{bmatrix}^\top
    \begin{bmatrix}
           2 & 0 \\
           0 & 2
     \end{bmatrix}
     \begin{bmatrix}
       1 \\
       0
    \end{bmatrix}
    0.4
    ) \nonumber\\
    & + \frac{1}{2}(
    \begin{bmatrix}
       1 \\
       0
    \end{bmatrix}^\top
    \begin{bmatrix}
       2 & 0 \\
       0 & 2
     \end{bmatrix}
     \begin{bmatrix}
       0 \\
       1
    \end{bmatrix}
    0.8
    ) \nonumber \\
    &= 0.4. \nonumber
\end{align}
\begin{align}
    \widehat{\nabla}^1_{l_1}\Phi(m_2) &= \frac{1}{2}(
    \begin{bmatrix}
       0 \\
       1
    \end{bmatrix}^\top
    \begin{bmatrix}
       2 & 0 \\
       0 & 2
     \end{bmatrix}
     \begin{bmatrix}
       1 \\
       0
    \end{bmatrix}
    0.4
    ) \nonumber\\
    & + \frac{1}{2}(
    \begin{bmatrix}
       0 \\
       1
    \end{bmatrix}^\top
    \begin{bmatrix}
       2 & 0 \\
       0 & 2
     \end{bmatrix}
     \begin{bmatrix}
       0 \\
       1
    \end{bmatrix}
    0.8
    ) \nonumber \\
    &=0.8. \nonumber \\
    \widehat{\nabla}^1_{l_1}\Phi &= 
    \begin{bmatrix}
        \widehat{\nabla}^1_{l_1}\Phi(m_1) \\
        \widehat{\nabla}^1_{l_1}\Phi(m_2)
    \end{bmatrix} =
    \begin{bmatrix}
       0.4 \\
       0.8
    \end{bmatrix}. \nonumber
\end{align}

Next, by line~\ref{pseudocode:get_optimal_policy_by_gradient}, the inner products of each policy in $\Delta(P_{l_1})$ and $\widehat{\nabla}^1_{l_1}\Phi$ are  $0.56$, $0.60$, $0.68$, and $0.76$, and the optimal policy $\Tilde{\pi}^{2}_{l_1}$ should be $[0.1, 0.9]$ as its inner product with $\widehat{\nabla}^1_{l_1}\Phi$ is maximum. 

Lastly, by line~\ref{pseudocode:update_policy_with_exploration}, we have
\begin{align}
    \pi^{2}_{l_1} & = \alpha\big[\underbrace{\pi^1_{l_1}+\beta(\Tilde{\pi}^{2}_{l_1}-\pi^1_{l_1})}_{\text{Frank-Wolfe update}}\big] + \underbrace{(1-\alpha)\rho^1_{l_1}}_{\text{Exploration}} \nonumber\\
    &= 0.5\Bigg[
    \begin{bmatrix}
       0.5 \\
       0.5
    \end{bmatrix} + 
    0.5 \Bigg(
    \begin{bmatrix}
       0.1 \\
       0.9
    \end{bmatrix} - 
    \begin{bmatrix}
       0.5 \\
       0.5
    \end{bmatrix}
    \Bigg)
    \Bigg] +
    0.5 \begin{bmatrix}
       0.1 \\
       0.9
    \end{bmatrix} \nonumber\\
    & = \begin{bmatrix}
       0.2 \\
       0.8
    \end{bmatrix}. \nonumber
\end{align}

In the next episode, node $l_1$ follows policy $\pi^{2}_{l_1}$ to choose the next hops and collect rewards to update the policy. Other nodes follow the same process to update their policies in parallel. Also, if the packets intermediate nodes $m_1$ and $m_2$ receive have not reached their destinations, $m_1$ and $m_2$ follow the same process to forward the packets and update their policies in parallel.

{\color{dv}
\section{Observing Local Congestion in Totoro$^+$}
\label{app:local_congestion}
Observing local congestion at the network layer indeed requires support from lower-level instrumentation or third-party mechanisms. However, Totoro$^+$ is designed as an \textit{application-layer system}, and therefore congestion at a node is inferred based on \textit{application-level latency}, rather than network-layer indicators such as packet queues or link-level congestion metrics. Specifically, the overlay constructed by Totoro$^+$ operates at the application layer and forms a logical topology, while actual data transmission is carried out over the underlying physical network topology. In Totoro$^+$, congestion at an overlay node reflects a combination of physical network congestion and application-level task congestion. From a system design perspective, these effects can be jointly captured using end-to-end latency measurements, without requiring explicit knowledge of physical network congestion states.
As an illustrative example, a node in Totoro$^+$ may correspond to a drone, while another overlay node (e.g., a mobile phone) transmits application-level packets supported by Totoro$^+$. The congestion of interest in Totoro$^+$ is characterized by the response latency observed at the drone when handling these application-level requests, rather than by the underlying network transmission congestion itself. Therefore, although Totoro$^+$ abstracts away low-level congestion details, this abstraction remains practically meaningful for application-layer decision making.


\section{Implementation Details of Algorithm~\ref{pseudo:our_algorithm}}
\label{sec:impl_algo}


\begin{figure}[t!]
    \centering
    \includegraphics[width=0.9\linewidth]{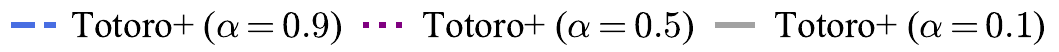}\vspace{-.05in}
    \includegraphics[width=0.9\linewidth]{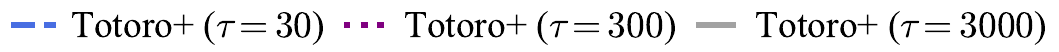}\\
    \begin{minipage}{0.24\textwidth}
        \centering
        \captionsetup{width=0.9\textwidth}
        \includegraphics[width=\textwidth]{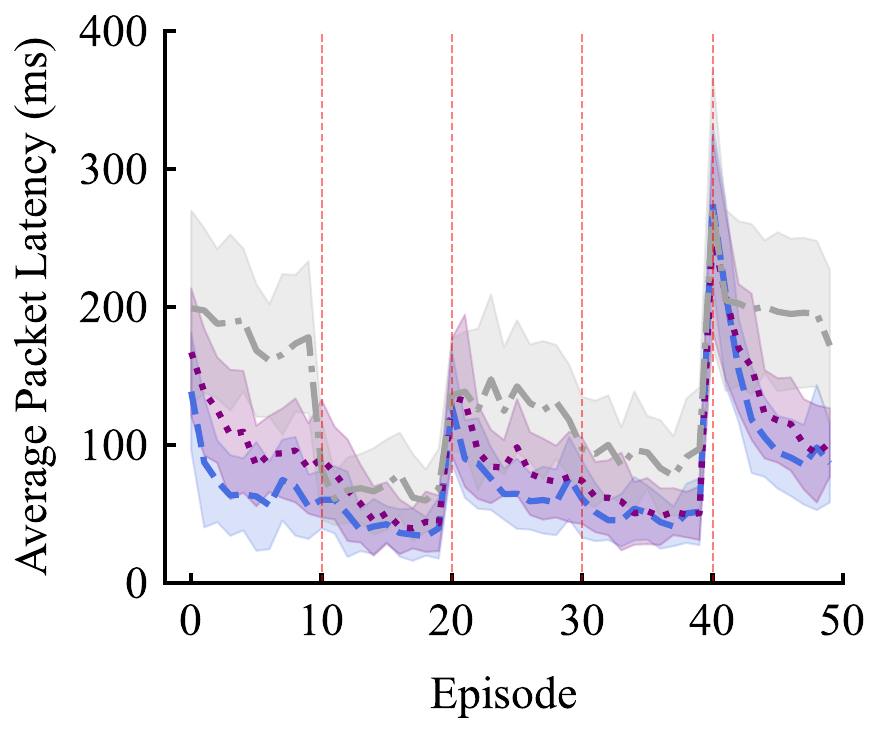} 
        \caption{Average packet latency with different $\alpha$.}
        \label{fig:failure_node_diff_alpha_c4}
    \end{minipage}
    \begin{minipage}{0.24\textwidth}
        \centering
        \captionsetup{width=0.9\textwidth}
        \includegraphics[width=\textwidth]{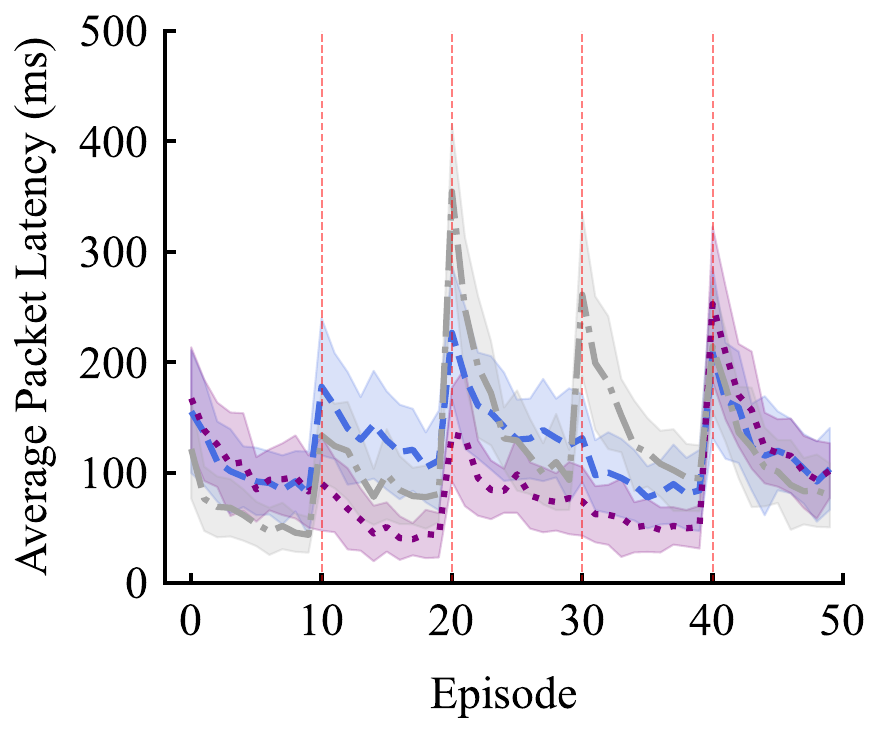}
        \caption{Average packet latency with different $\tau$.}
        \label{fig:failure_node_diff_tau_c4}
    \end{minipage}
    \vspace{0.1in}
\end{figure}

\begin{figure}[t!]
    \centering
    \includegraphics[width=0.9\linewidth]{figure/appendix/legend_tau_number_cropped.pdf}\\
    \begin{minipage}{0.24\textwidth}
        \centering
        \captionsetup{width=0.9\textwidth}
        \includegraphics[width=\textwidth]{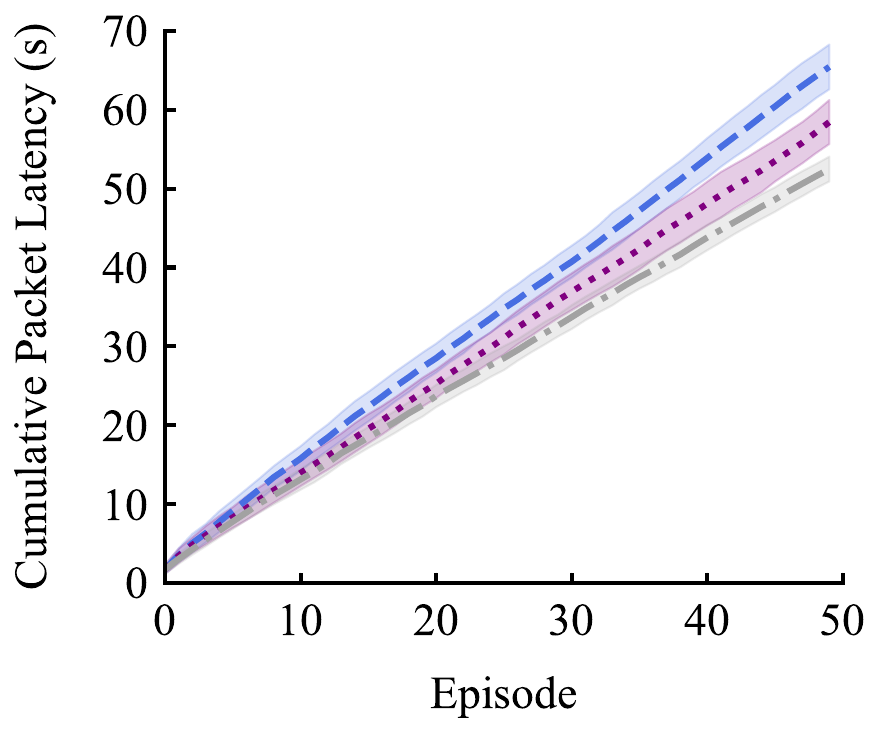} 
        \caption{Cumulative packet latency with different $\tau$.}
        \label{fig:packet_latency_diff_tau_c4}
    \end{minipage}
    \begin{minipage}{0.24\textwidth}
        \centering
        \captionsetup{width=0.9\textwidth}
        \includegraphics[width=\textwidth]{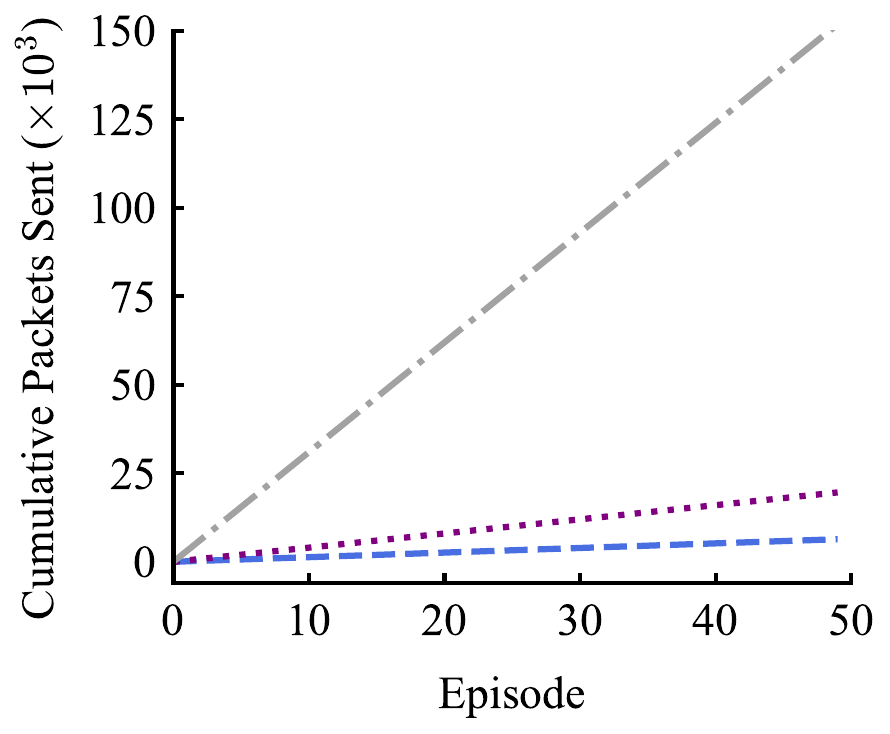}
        \caption{Cumulative packet sent with different $\tau$.}
        \label{fig:packet_sent_diff_tau_c4}
    \end{minipage}
    \vspace{-0.1in}
\end{figure}

\noindentparagraph{Latency uncertainty.}
In our implementation, the instantaneous reward is derived from the observed end-to-end latency of packet transmission. Since latency is a continuous-valued metric and may exhibit large variance across different network conditions, we normalize it into a bounded reward signal to stabilize the learning process. In addition, the end-to-end latency experienced by a packet is inherently affected by the joint routing decisions of multiple nodes: when multiple nodes simultaneously select the same next-hop node, increased contention and queuing effects can lead to higher latency. As a result, the observed latency naturally reflects both stochastic network conditions and the interaction among distributed routing decisions. 

Specifically, given an observed latency value $l$, we map it into the interval $[0,1]$ as 
\[
r = 1 - \frac{l}{l_{max}},
\]
where $l_{max}$ denotes the maximum latency observed or a predefined upper bound within the corresponding evaluation window. Under this mapping, lower latency yields higher reward, while larger latency is penalized proportionally. This normalization ensures a consistent reward scale across episodes and prevents extreme latency values from dominating the gradient estimation.

\noindentparagraph{Fluctuating bandwidth.}
Network bandwidth conditions may vary over time due to congestion, background traffic, or dynamic network environments. Totoro$^+$ naturally captures such dynamics through its episodic learning structure. At the beginning of each episode, rewards are re-sampled based on the current network conditions, implicitly reflecting any changes in available bandwidth. Consequently, when bandwidth fluctuates, the reward observations in subsequent episodes change accordingly, and nodes adapt their routing policies by updating local gradient estimates and re-optimizing hop-selection probabilities. This design allows Totoro$^+$ to operate effectively in non-stationary environments without requiring explicit bandwidth prediction or manual reconfiguration. 

To evaluate the robustness of Totoro$^+$ under bandwidth fluctuations, we conduct the following experiments. Specifically, the bandwidth of each node is randomly perturbed every ten episodes, with possible increases or decreases. We examine the impact of different values of the parameters $\alpha$ (Figure~\ref{fig:failure_node_diff_alpha_c4}) and $\tau$ (Figure~\ref{fig:failure_node_diff_tau_c4}) on the average packet latency in a routing scenario with 100 nodes. For a fair comparison, we focus exclusively on the latency observed after the sampling phase. Figure~\ref{fig:failure_node_diff_alpha_c4} shows that smaller values of $\alpha$ result in higher average packet latency. This is because a smaller $\alpha$ encourages exploration toward less familiar nodes, leading to slower convergence after environmental changes; nevertheless, the latency still gradually converges to a lower level over time. Figure~\ref{fig:failure_node_diff_tau_c4} illustrates the effect of varying the number of sampled rewards on latency. The results indicate that an appropriate number of reward samples enables faster latency reduction, as excessive exploration or insufficient exploration can lead to overconfidence or underconfidence in path selection, respectively.

\noindentparagraph{Transmission overhead.}
Totoro$^+$ is designed to limit the communication overhead incurred during learning. Each node performs reward sampling only at the beginning of an episode, and policy updates are based on at most $\tau$ packet transmissions within that episode. After reward sampling, each node updates its local routing policy solely based on its own observed rewards, without requiring any additional message passing or coordination with other nodes. As a result, the total number of packets generated by each node per episode is upper-bounded by $\tau$, independent of the network size or the number of available routing options. This bounded sampling strategy enables a controllable trade-off between learning accuracy and communication cost: larger $\tau$ provides more accurate gradient estimates at the cost of higher transmission overhead, while smaller $\tau$ reduces communication cost but increases estimation variance. 

To study how $\tau$ affects routing performance, we conduct routing experiments with 100 nodes under different values of $\tau$. For a fair comparison, we focus exclusively on the latency observed after the sampling phase. Under the cumulative packet latency metric (Figure~\ref{fig:packet_latency_diff_tau_c4}), larger values of $\tau$ clearly achieve lower overall latency, since sufficient exploration allows nodes to make more informed routing decisions. However, under the cumulative number of packets sent metric (Figure~\ref{fig:packet_sent_diff_tau_c4}), overly large values of $\tau$ lead to a higher total number of transmitted packets, prolonging the sampling phase. Therefore, achieving a balance between the overhead incurred during the sampling phase and the final routing performance remains an important and nontrivial design consideration.

\noindentparagraph{Time complexity.}
The time complexity of Totoro$^+$ is formally analyzed in Theorem~2, with the detailed proof provided in Appendix~\ref{sec:proof_time_complexity}. In Section~VII-E, we further compare the empirical running time of Totoro$^+$ against the baseline Totoro algorithm and provide a breakdown of the time spent in different stages of the algorithm to illustrate how the algorithmic components introduced in Algorithm~\ref{pseudo:our_algorithm} contribute to the overall runtime.
}


\section{Proof of Theorem~1}
\label{sec:proof_nash_regret}

\begin{lemma}
    Suppose Algorithm~\ref{pseudo:our_algorithm} runs $K$ episode.
    For any episode $k\in [K], n\in [N]$ and $p\in P_n$, we have
    \begin{align*}
        |\widehat{\nabla}^k_n \Phi(p)| \leq \frac{c_1}{(1-\alpha)},
    \end{align*}
    where $c_1$ is a constant.
    \label{lemma:bound_of_estimator}
\end{lemma}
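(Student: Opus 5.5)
The bound follows from three facts: the estimator in line~\ref{pseudocode:estimate_gradient} is a short average of bounded quantities, the rewards lie in $[0,1]$, and the matrix $M(\pi^k_n)$ has a controlled inverse. I would write
\begin{align*}
|\widehat{\nabla}^k_n\Phi(p)| \le \frac{1}{\tau}\sum_{t=1}^{\tau} |r^{k,t}_n|\,\bigl|\psi(p)^\top M(\pi^k_n)^{-1}\psi(p^{k,t}_n)\bigr| \le \max_{p,p'\in P_n}\bigl|\psi(p)^\top M(\pi^k_n)^{-1}\psi(p')\bigr|,
\end{align*}
using $r^{k,t}_n\in[0,1]$, since rewards are drawn from $\mathcal{R}^p(\cdot\mid k,\theta_p)\in[0,1]$. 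Because $\psi$ maps hops to one-hot vectors, $M(\lambda)=\mathrm{diag}(\lambda(p))_{p\in P_n}$. The quadratic form $\psi(p)^\top M(\pi^k_n)^{-1}\psi(p')$ is therefore zero when $p\neq p'$ and equals $1/\pi^k_n(p)$ when $p=p'$. The whole problem thus reduces to lower-bounding $\min_p \pi^k_n(p)$ by something of order $(1-\alpha)$.

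For the lower bound I use the update in line~\ref{pseudocode:update_policy_with_exploration}. For $k\ge 1$, $\pi^{k+1}_n=\alpha[\cdot]+(1-\alpha)\rho^k_n$. The Frank--Wolfe bracket is a convex combination of points of $\Delta(P_n)$, hence nonnegative, so $\pi^{k+1}_n(p)\ge(1-\alpha)\rho^k_n(p)$. For the exploratory policy $\rho^k_n$, I invoke the hypothesis of Theorem~\ref{theo:nash_regret_bound} that every policy in $\Delta(P_n)$ has no zero element. As the numerical example in Appendix~\ref{sec:numerical_example} shows, $\Delta(P_n)$ is used as a finite candidate set, so $c_0:=\min_{\lambda\in\Delta(P_n)}\min_p\lambda(p)>0$. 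This gives $\pi^{k+1}_n(p)\ge(1-\alpha)c_0$. For the first episode, the initial policy $\pi^1_n$ also lies in $\Delta(P_n)$, so $\pi^1_n(p)\ge c_0\ge(1-\alpha)c_0$. Combining these, $|\widehat{\nabla}^k_n\Phi(p)|\le 1/((1-\alpha)c_0)$, and I set $c_1=1/c_0$, taking the minimum over all $n$ to make the constant uniform.

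The main obstacle is justifying a uniform positive lower bound on $\rho^k_n$. If $\Delta(P_n)$ were the full simplex, the minimum-determinant choice in line~\ref{pseudocode:d_optimal_design} would put zero mass on some path. The no-zero-element assumption must therefore be read as applying to a finite (or compact, bounded-away-from-the-boundary) policy set, and I would state this explicitly. If one instead wants the standard D-optimal (maximum-determinant) design, an alternative is Kiefer--Wolfowitz: the design satisfies $\psi(p)^\top M(\rho)^{-1}\psi(p)\le d=|P_n|$. With $M(\pi^k_n)\succeq(1-\alpha)M(\rho^{k-1}_n)$, this gives $c_1=|P_n|$ directly. I would mention this as a backup route that handles the off-diagonal terms via Cauchy--Schwarz.
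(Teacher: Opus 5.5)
Your proof is correct and is essentially the paper's argument written in coordinates. For one-hot $\psi$, your entrywise bound $\pi^{k+1}_n(p)\ge(1-\alpha)\rho^k_n(p)$ is exactly the paper's Loewner bound $M(\pi^k_n)\succeq(1-\alpha)M(\rho^k_n)$, your $1/c_0$ is a concrete value for the paper's unspecified constant bounding $\|\psi(p)\|^2_{M(\rho^k_n)^{-1}}$, and your exact diagonal evaluation of $\psi(p)^\top M(\pi^k_n)^{-1}\psi(p')$ replaces the paper's Cauchy--Schwarz step, which is precisely your backup route.

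You are in fact more careful than the paper on three points it skips: the index shift to $\rho^{k-1}_n$, the first episode, and the fact that the no-zero-element assumption (or a genuine maximum-determinant design) is what makes $c_1$ finite.
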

\begin{proof}
    As $\pi^k_n = \alpha\big[\pi^k_n+\beta(\Tilde{\pi}^{k+1}_n-\pi^k_n)\big] + (1-\alpha) \rho_n^k$, we have
    \begin{align}    
        M(\pi^k_n) &= \sum_{p\in \pi^k_n} \pi^k_n(p)\psi(p)\psi(p)^\top \nonumber\\
        &\succeq (1-\alpha) \sum_{p\in \rho^k_n} \rho^k_n(p)\psi(p)\psi(p)^\top,
        \nonumber
    \end{align}
    for any hop $p\in P_n$, we get
    \begin{align}
        \|\psi(p)\|^2_{M(\pi^k_n)^{-1}} &\leq \frac{1}{(1-\alpha)}\|\psi(p)\|^2_{M(\rho^k_n)^{-1}} \nonumber\\
        &\leq \frac{c_1}{(1-\alpha)},
        \nonumber
    \end{align}
    where $c_1$ is a constant. Then for any $t\in [\tau]$, since $r^{k,t}_n \leq 1$, we have
    \begin{align}
        &|r^{k,t}_n \psi(p)^\top M(\pi^k_n)^{-1}\psi(p^{k,t}_i)| \nonumber\\
        &\leq r^{k,t}_i\|\psi(p)\|_{M(\pi^k_n)^{-1}} \|\psi(p^{k,t}_i)\|_{M(\pi^k_n)^{-1}} \nonumber\\
        &\leq \frac{c_1}{(1-\alpha)}.
        \nonumber
    \end{align}
    As a result, we have
    \begin{align}
        |\widehat{\nabla}^k_n \Phi(p)| 
        &= \bigg|\frac{1}{\tau}\sum_{t=1}^\tau \psi(p)^\top M(\pi^k_n)^{-1}\psi(p^{k,t}_n)r^{k,t}_n\bigg| \nonumber \\
        &\leq \frac{c_1}{(1-\alpha)}.
        \nonumber
    \end{align}
\end{proof}

\begin{lemma}
    Suppose Algorithm~1 runs $K$ episode.
    For any episode $k\in [K], n\in [N]$ and $p\in P_n$, we have
    \begin{align*}
        \mathbb{E}_k[(\psi(p)^\top M(\pi^k_n)^{-1}\psi(p^{k,t}_n)r^{k,t}_n)^2 ] \leq \frac{c_2}{(1-\alpha)},
        \nonumber
    \end{align*}
    where $c_2$ is a constant.\label{lemma:bound_of_estimator_second_order}
\end{lemma}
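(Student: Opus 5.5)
Since $r^{k,t}_n\in[0,1]$, it suffices to bound $\mathbb{E}_k[(\psi(p)^\top M(\pi^k_n)^{-1}\psi(p^{k,t}_n))^2]$, where $\mathbb{E}_k$ conditions on everything up to the start of episode $k$. Conditionally on this history the hop $p^{k,t}_n$ is drawn from the fixed policy $\pi^k_n$. The argument therefore reduces to a variance computation for the linear-regression estimator, followed by the exploration bound already used in Lemma~\ref{lemma:bound_of_estimator}.

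\textbf{Step 1 (drop the reward).} Use $(r^{k,t}_n)^2\le 1$ to get
\begin{align*}
\mathbb{E}_k\big[(\psi(p)^\top M(\pi^k_n)^{-1}\psi(p^{k,t}_n)r^{k,t}_n)^2\big]
\le \mathbb{E}_{q\sim\pi^k_n}\big[\psi(p)^\top M(\pi^k_n)^{-1}\psi(q)\psi(q)^\top M(\pi^k_n)^{-1}\psi(p)\big].
\end{align*}

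\textbf{Step 2 (trace identity).} Move the expectation inside. By \eqref{eq:policy_correlation_matrix}, $\mathbb{E}_{q\sim\pi^k_n}[\psi(q)\psi(q)^\top]=M(\pi^k_n)$, so the right-hand side equals
\begin{align*}
\psi(p)^\top M(\pi^k_n)^{-1}M(\pi^k_n)M(\pi^k_n)^{-1}\psi(p)=\|\psi(p)\|^2_{M(\pi^k_n)^{-1}}.
\end{align*}
The matrix $M(\pi^k_n)$ is invertible because, by the hypothesis of Theorem~\ref{theo:nash_regret_bound}, every policy has full support. With one-hot features $M$ is then a diagonal matrix with positive entries.

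\textbf{Step 3 (exploration bound).} Reuse the argument of Lemma~\ref{lemma:bound_of_estimator}. The update in line~\ref{pseudocode:update_policy_with_exploration} mixes in $(1-\alpha)\rho^{k-1}_n$, which gives $M(\pi^k_n)\succeq(1-\alpha)M(\rho^{k-1}_n)$. Hence
\begin{align*}
\|\psi(p)\|^2_{M(\pi^k_n)^{-1}}\le\frac{1}{1-\alpha}\,\|\psi(p)\|^2_{M(\rho^{k-1}_n)^{-1}}.
\end{align*}
The last norm is bounded by a constant $c_2$ (the same role $c_1$ plays in Lemma~\ref{lemma:bound_of_estimator}). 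With one-hot $\psi$, it equals $1/\rho^{k-1}_n(p)$. This is finite by the no-zero-element assumption, and it is bounded uniformly over $k$ because $\Delta(P_n)$ is a finite set of admissible policies, as in Appendix~\ref{sec:numerical_example}. For the first episode, I will either assume the initial policy $\pi^1_n$ has full support or absorb it into the constant.

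\textbf{Main obstacle.} The obstacle is Step 3: the uniform control of $\|\psi(p)\|^2_{M(\rho)^{-1}}$. The algorithm as written selects $\rho$ by minimizing the determinant, not by the usual D/G-optimal design, which would give the Kiefer--Wolfowitz bound $|P_n|$. So I cannot invoke Kiefer--Wolfowitz. Instead I will take $c_2:=\max_{\lambda\in\Delta(P_n)}\max_{p}1/\lambda(p)$, which is finite under the stated positivity assumption on the finite policy set. This matches how the constant $c_1$ was treated in Lemma~\ref{lemma:bound_of_estimator}.
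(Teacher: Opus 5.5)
Your proposal is correct and follows the paper's proof: drop the reward via $(r^{k,t}_n)^2\le 1$, use $\mathbb{E}_k[\psi(p^{k,t}_n)\psi(p^{k,t}_n)^\top]=M(\pi^k_n)$ to collapse the second moment to $\psi(p)^\top M(\pi^k_n)^{-1}\psi(p)$, then bound this by $c_2/(1-\alpha)$ through the exploration mixture, as in Lemma~\ref{lemma:bound_of_estimator}. Your Step 3 is more careful than the paper, which states that last inequality without justification: you shift the index to $\rho^{k-1}_n$, flag the first episode, and note that since $\rho$ minimizes rather than maximizes the determinant, $c_2$ must come from the finite full-support policy set, not from a Kiefer--Wolfowitz bound.
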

\begin{proof}
    We first show that for any $t\in [\tau]$, we have
    \begin{align}
        &\mathbb{E}_k[(\psi(p)^\top M(\pi^k_n)^{-1}\psi(p^{k,t}_n)r^{k,t}_n)^2 ] \nonumber\\
        & \leq \mathbb{E}_k[(\psi(p)^\top M(\pi^k_n)^{-1}\psi(p^{k,t}_n))^2 ] \nonumber\\
        & \leq \mathbb{E}_k[\psi(p)^\top M(\pi^k_n)^{-1}\psi(p^{k,t}_n)\psi(p^{k,t}_n)^\top M(\pi^k_n)^{-1}\psi(p)^\top]\nonumber\\
        & = \psi(p)^\top M(\pi^k_n)^{-1}\psi(p) \nonumber\\
        & \leq \frac{c_2}{(1-\alpha)}.
        \nonumber
    \end{align}
\end{proof}

\begin{lemma}
    Suppose Algorithm~\ref{pseudo:our_algorithm} runs $K$ episode. With probability $1-\delta$, for any episode $k\in [K], n\in [N]$ and $p\in P_n$, we have
    \begin{align}
        |\widehat{\nabla}^k_n \Phi(p) - \nabla^k_n \Phi(p)| 
        &\leq c\sqrt{\frac{\mathcal{O}(\log N)\log{(\frac{NK}{\delta})}}{(1-\alpha) \tau}} \nonumber\\
        &\quad + c\frac{\mathcal{O}(\log N)\log{(\frac{NK}{\delta})}}{(1-\alpha) \tau},
        \nonumber
    \end{align}
    where $c$ is a constant.
    \label{lemma:estimator_bound}
\end{lemma}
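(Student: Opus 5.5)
The plan is a Bernstein-type concentration argument for the average of $\tau$ i.i.d. terms, followed by a union bound over all $(k,n,p)$. Fix an episode $k$, a node $n$ and a hop $p\in P_n$. Write
\[
X_t := \psi(p)^\top M(\pi^k_n)^{-1}\psi(p^{k,t}_n)\,r^{k,t}_n, \qquad t=1,\dots,\tau ,
\]
so that $\widehat{\nabla}^k_n\Phi(p)=\frac{1}{\tau}\sum_{t}X_t$.

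The first step is unbiasedness. Conditioned on the history up to episode $k$, the policies $\pi^k$ are fixed, and each packet $t$ is an independent draw $p^{k,t}_n\sim\pi^k_n$, with the other nodes drawing from $\pi^k_{-n}$. Hence the $X_t$ are conditionally i.i.d. Using the potential-game identity from Appendix~\ref{app:game_prelim}, the quantity $\mathbb{E}[r^{k,t}_n\mid p^{k,t}_n=q]=V_n^{q,\pi^k_{-n}}$ is exactly the partial derivative of $\Phi$ in the coordinate $q$ of $\pi_n$. I define this vector as $\nabla^k_n\Phi$. Since $\psi$ is one-hot, $\mathbb{E}_k[\psi(q)\psi(q)^\top]=M(\pi^k_n)$, and therefore
\[
\mathbb{E}_k[X_t]=\psi(p)^\top M(\pi^k_n)^{-1}\,\mathbb{E}_k[\psi(q)\psi(q)^\top]\,\nabla^k_n\Phi=\nabla^k_n\Phi(p).
\]
This identity is what makes the linear-regression estimator in line~\ref{pseudocode:estimate_gradient} unbiased. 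The standing assumption that policies have no zero entries, together with the exploration mixture, guarantees that $M(\pi^k_n)$ is invertible.

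The second step is to apply Bernstein's inequality to the centered variables $X_t-\nabla^k_n\Phi(p)$.
\begin{itemize}
\item By Lemma~\ref{lemma:bound_of_estimator}, these variables are bounded by $2c_1/(1-\alpha)$.
\item By Lemma~\ref{lemma:bound_of_estimator_second_order}, their variance is at most $c_2/(1-\alpha)$.
\end{itemize}
Bernstein then gives, with probability $1-\delta'$,
\[
\bigl|\widehat{\nabla}^k_n\Phi(p)-\nabla^k_n\Phi(p)\bigr|\le \sqrt{\frac{2c_2\log(2/\delta')}{(1-\alpha)\tau}}+\frac{4c_1\log(2/\delta')}{3(1-\alpha)\tau}.
\]

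The third step is the union bound. I set $\delta'=\delta/(NK|P_n|)$ and bound over all $k\in[K]$, $n\in[N]$ and $p\in P_n$. Then $\log(2/\delta')\le \log|P_n|+\log(2NK/\delta)$. The remaining task is to turn the extra $\log|P_n|$ into the stated factor $\mathcal{O}(\log N)\log(NK/\delta)$.
\begin{itemize}
\item The number of candidate next hops at a node is controlled by the DHT routing structure: routing-table entries per level and hops are both $\mathcal{O}(\log N)$. This gives $\log|P_n|\le\mathcal{O}(\log N)$.
\item The resulting sum of logarithms is then upper bounded by their product, since both factors are at least $1$ for nontrivial parameters.
\end{itemize}
Absorbing $c_1$ and $c_2$ into a single constant $c$ yields the two terms in the statement.

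The main obstacle is the second step's reliance on conditional independence across the $\tau$ packets within an episode. The rewards depend on other nodes' simultaneous choices, so I must argue that, given $\pi^k$, the joint action vectors for different packets are independent draws. Then the $X_t$ are i.i.d. even though $r^{k,t}_n$ depends on $\mathbf{p}^{k,t}_{-n}$. If this fails, I would instead use Freedman's martingale inequality with the same range and variance bounds, which gives the same rate.
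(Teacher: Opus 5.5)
Your proposal is correct and follows essentially the same route as the paper: unbiasedness of the linear-regression estimator, Bernstein's inequality with the range and second-moment bounds from Lemma~\ref{lemma:bound_of_estimator} and Lemma~\ref{lemma:bound_of_estimator_second_order}, and a union bound over all $(k,n,p)$. Your version spells out the unbiasedness computation and the $\log|P_n|$ bookkeeping, which the paper only asserts. One harmless imprecision: in a potential game, $V_n^{q,\pi^k_{-n}}$ equals $\partial\Phi/\partial\pi_n(q)$ only up to an additive constant independent of $q$. This does not affect the argument, because you define $\nabla^k_n\Phi$ as that vector and Frank--Wolfe directions are invariant to such shifts.
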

\begin{proof}
    Recall that 
    \begin{align}
        \widehat{\nabla}^k_i \Phi(p) = \frac{1}{\tau}\sum_{t=1}^\tau \psi(p)^\top M(\pi^k_n)^{-1}\psi(p^{k,t}_n)r^{k,t}_n,
        \nonumber
    \end{align}
     and $(p^{k,t}_n, r^{k,t}_n)$ are drawn independently at each $t\in [\tau]$. As we use linear regression to estimate gradients, $\widehat{\nabla}^k_n \Phi(a)$ is unbiased~\cite{congestion_games,bandit_algo_book}, i.e., $\mathbb{E}_k[\widehat{\nabla}^k_n \Phi(a)]=\nabla^k_n \Phi(a)$. In addition, Lemma~\ref{lemma:bound_of_estimator} shows that $\psi(p)^\top M(\pi^k_n)^{-1}\psi(p^{k,t}_n)r^{k,t}_n$ is bounded by $c_1/(1-\alpha)$ and Lemma~\ref{lemma:bound_of_estimator_second_order} shows that its second moment is bounded by $c_2/(1-\alpha)$. Then by Bernstein's inequality, for a fixed $k\in [K], n\in [N]$, and $p\in P_n$, with probability $1-\delta$, we have
    \begin{align}
        |\widehat{\nabla}^k_n \Phi(p) - \nabla^k_n \Phi(p)| \leq \sqrt{\frac{4c_2\log{(\frac{2}{\delta})}}{(1-\alpha) \tau}} + \frac{4c_1\log{(\frac{2}{\delta})}}{3(1-\alpha) \tau}.
        \label{eq:estimation_bound_with_fixed_episode}
    \end{align}
    Then we extend Ineq. (\ref{eq:estimation_bound_with_fixed_episode}) to all $k\in [K], n\in [N]$ and $p\in P_n$, yielding
    \begin{align}
        |\widehat{\nabla}^k_n \Phi(p) - \nabla^k_n \Phi(p)| 
        &\leq c\sqrt{\frac{\mathcal{O}(\log N)\log{(\frac{NK}{\delta})}}{(1-\alpha) \tau}} \nonumber\\
        &\quad + c\frac{\mathcal{O}(\log N)\log{(\frac{NK}{\delta})}}{(1-\alpha) \tau},
        \nonumber
    \end{align}
    where $c$ is a constant.
\end{proof}

\begin{lemma}
    $\Phi(\cdot)$ is N-Lipschitz and N-smooth with respect to the L1 norm $\|\cdot\|_1$.
    \label{lemma:lip_and_smooth}
\end{lemma}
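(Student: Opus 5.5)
The plan is to work directly with the potential $\Phi(\pi)=\mathbb{E}_{\mathbf{p}\sim\pi}[\Phi(\mathbf{p})]$ for product policies $\pi=(\pi_1,\dots,\pi_N)$. Here $\Phi(\mathbf{p})=\sum_{p}\sum_{i=1}^{n^p(\mathbf{p})} r^p(i,\theta_p)$, as in the congestion-game preliminaries of Appendix~\ref{app:game_prelim}. Under a product policy, $\Phi(\pi)$ is multilinear in the blocks $\pi_1,\dots,\pi_N$. I will regard it as a function on the product of simplices, with the $L_1$ norm $\|\pi-\pi'\|_1=\sum_n\|\pi_n-\pi'_n\|_1$, and bound its partial gradients and Hessian blocks uniformly.

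\textbf{Lipschitz.} By multilinearity, the partial derivative $\partial\Phi/\partial\pi_n(p)$ equals $\mathbb{E}_{\mathbf{p}_{-n}\sim\pi_{-n}}[\Phi(p,\mathbf{p}_{-n})]$. Since rewards lie in $[0,1]$ and there are at most $N$ users in total, $0\le\Phi(\mathbf{p})\le N$. Hence every coordinate of the gradient is bounded in absolute value by $N$, so $\|\nabla\Phi\|_\infty\le N$. The mean value theorem along the segment between two policies (which stays in the product of simplices by convexity) then gives
\[|\Phi(\pi)-\Phi(\pi')|\le\|\nabla\Phi\|_\infty\|\pi-\pi'\|_1\le N\|\pi-\pi'\|_1.\]
A sharper route uses the potential identity: $\Phi(p,\mathbf{p}_{-n})-\Phi(p',\mathbf{p}_{-n})=r_n(p,\mathbf{p}_{-n})-r_n(p',\mathbf{p}_{-n})\in[-1,1]$. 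This shows that gradient differences within a block are at most $1$, and since $\pi_n-\pi'_n$ sums to zero, it yields even a constant Lipschitz bound. Either way, $N$ suffices.

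\textbf{Smoothness.} I need $\|\nabla\Phi(\pi)-\nabla\Phi(\pi')\|_\infty\le N\|\pi-\pi'\|_1$. Fix the coordinate $(n,p)$. The map $\pi\mapsto\mathbb{E}_{\pi_{-n}}[\Phi(p,\mathbf{p}_{-n})]$ is multilinear in $\pi_{-n}$, with each underlying value in $[0,N]$. Applying the same Lipschitz argument to it (its partials are again expectations of $\Phi$ values in $[0,N]$) gives the bound
\[|\partial_{n,p}\Phi(\pi)-\partial_{n,p}\Phi(\pi')|\le N\sum_{m\ne n}\|\pi_m-\pi'_m\|_1\le N\|\pi-\pi'\|_1.\]
Equivalently, every second mixed partial is bounded by $N$, and the diagonal blocks vanish by multilinearity.

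\textbf{Main obstacle.} The delicate step is making the smoothness constant come out as $N$ rather than $N^2$ or $2N$. A naive entrywise Hessian bound combined with a norm conversion could lose a factor. I would handle this by working in the $\ell_\infty/\ell_1$ dual pairing throughout, so that only the maximum entry of the Hessian enters. If needed, I would also exploit the fact that differences of policies sum to zero within each block. That lets me replace the value bound $N$ by a centered bound, using the potential identity to control the differences.
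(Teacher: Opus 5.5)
Your proof is correct and follows essentially the paper's route. The paper telescopes $\Phi(\pi)-\Phi(\pi')$ over hybrid product policies, swapping one block $\pi_n\to\pi'_n$ at a time at cost at most $\|\pi_n-\pi'_n\|_1\|\Phi\|_\infty\le N\|\pi_n-\pi'_n\|_1$, then reuses this on the gradient coordinates $\mathbb{E}_{\mathbf{p}_{-n}\sim\pi_{-n}}[\Phi(p,\mathbf{p}_{-n})]$, which is just the discrete form of your mean-value/$\ell_\infty$--$\ell_1$ duality step. Your extra observation that the potential identity plus $\sum_p(\pi_n(p)-\pi'_n(p))=0$ gives an $N$-independent constant is a valid sharpening the paper does not need, and the factor-loss worry you flag does not arise, since your smoothness bound already comes out as exactly $N$.
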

\begin{proof}
    By~\cite{congestion_games,monderer1996potential}, we know that $\Phi(\pi)=\mathbb{E}_{\textbf{p}\sim \pi} \Phi(\textbf{p})$ and $\Phi(\textbf{p})\in [0,N]$, where $\textbf{p}$ is the joint hop chosen by all the nodes.
    \begin{align}
        \Phi(\pi) - \Phi(\pi')
        & = \mathbb{E}_{\textbf{p}\sim \pi} - \mathbb{E}_{\textbf{p}\sim \pi'} \nonumber\\
        & = \sum_{n\in [N]} \Big[ \mathbb{E}_{p_{1:n-1}\sim \pi'_{1:n-1},p_{n:N}\sim \pi_{n:N}}\Phi(\textbf{p}) \nonumber\\
        & \quad - \mathbb{E}_{p_{1:n}\sim \pi'_{1:n},p_{n+1:N}\sim \pi_{n+1:N}}\Phi(\textbf{a}) \Big] \nonumber\\
        & \leq \sum_{n\in [N]} \|\pi_n - \pi'_n\|_1\cdot \|\Phi\|_{\infty}\nonumber\\
        & \leq N\|\pi - \pi'\|_1.
        \nonumber
    \end{align}
    In addition, $\nabla_\pi \Phi(p_i) = \mathbb{E}_{p_{-i}\sim\pi_{-i}}\Phi(p_i,p_{-i})$, yielding
    \begin{align}
        \|\nabla_\pi \Phi - \nabla_{\pi'} \Phi\|_\infty \leq N\|\pi-\pi'\|_1.
        \nonumber
    \end{align}
\end{proof}

Finally, we can derive the following theorem with Lemma~\ref{lemma:bound_of_estimator}, Lemma~\ref{lemma:bound_of_estimator_second_order}, Lemma~\ref{lemma:estimator_bound}, and Lemma~\ref{lemma:lip_and_smooth}.

\begin{theorem}
    Let $T=k\tau$ and assume that each policy in $\Delta(P_n)$ has no zero element for all $n$. By running Algorithm~\ref{pseudo:our_algorithm} with gradient estimator $\widehat{\nabla}^k_n\Phi(p)$ defined in line~\ref{pseudocode:estimate_gradient} and exploratory policy $\rho^k_n$ defined in line~\ref{pseudocode:d_optimal_design}, if $k\geq \frac{2}{N}$ , then with probability $1-\delta$, we have    
    \begin{align}
        \text{Nash-Regret}(T) \leq \Tilde{\mathcal{O}}(N^2 T^{5/6}\log N).\nonumber
    \end{align}
    \label{theo:nash_regret_bound_appendix}
\end{theorem}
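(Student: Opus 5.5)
The argument follows the Frank--Wolfe analysis for potential games with bandit feedback: we track the potential $\Phi$ along the iterates and convert the per-episode Frank--Wolfe gap into Nash regret. By the potential-game identity of Appendix~\ref{app:game_prelim}, for every node $n$ and episode $k$,
\[
V_n^{\pi^+_n,\pi^k_{-n}}-V_n^{\pi^k}=\Phi(\pi^+_n,\pi^k_{-n})-\Phi(\pi^k).
\]
Since $\Phi$ is multilinear in the per-node policies, the right-hand side is at most
\[
\max_{\lambda\in\Delta(P_n)}\langle \lambda-\pi^k_n,\nabla_n\Phi(\pi^k)\rangle \le G_k:=\sum_{n}\max_{\lambda}\langle\lambda-\pi^k_n,\nabla_n\Phi(\pi^k)\rangle,
\]
which is the Frank--Wolfe gap. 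It therefore suffices to bound $\sum_k G_k$. The Nash regret over $T=K\tau$ packets is then $\tau\sum_k\max_n(\cdot)\le \tau\sum_k G_k$.

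The first step is to bound the one-step ascent of $\Phi$. Lemma~\ref{lemma:lip_and_smooth} gives $N$-smoothness in $\|\cdot\|_1$. The update is $\pi^{k+1}_n-\pi^k_n=\alpha\beta(\tilde\pi^{k+1}_n-\pi^k_n)+(1-\alpha)(\rho^k_n-\pi^k_n)$, so
\[
\Phi(\pi^{k+1})\ge \Phi(\pi^k)+\alpha\beta\sum_n\langle\tilde\pi^{k+1}_n-\pi^k_n,\nabla_n\Phi(\pi^k)\rangle-2N(1-\alpha)\cdot N-\tfrac{N}{2}\|\pi^{k+1}-\pi^k\|_1^2 .
\]
Here the exploration term is controlled by the $N$-Lipschitz bound. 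The quadratic term is $O(N\cdot N^2(\beta+1-\alpha)^2)$.

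Next, $\tilde\pi^{k+1}_n$ maximizes the estimated linear objective, so
\[
\langle\tilde\pi^{k+1}_n-\pi^k_n,\nabla_n\Phi\rangle\ge \max_\lambda\langle\lambda-\pi^k_n,\nabla_n\Phi\rangle-2\|\widehat\nabla^k_n\Phi-\nabla^k_n\Phi\|_\infty .
\]
On the probability-$(1-\delta)$ event of Lemma~\ref{lemma:estimator_bound}, this error is $\varepsilon_{\rm est}=\tilde O\big(\sqrt{\log N/((1-\alpha)\tau)}\big)$. Lemmas~\ref{lemma:bound_of_estimator} and~\ref{lemma:bound_of_estimator_second_order} enter only through that lemma. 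The per-node version of the bound over $P_n$ requires the union bound over $n,k,p$, which is already built in.

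We then telescope. Using $\Phi\in[0,N]$ and summing over $k\le K$ gives
\[
\alpha\beta\sum_k G_k\le N+K\big[2N\alpha\beta\,\varepsilon_{\rm est}+O(N^2(1-\alpha))+O(N^3\beta^2)\big].
\]
Dividing by $\alpha\beta$, taking $\alpha\ge 1/2$, and multiplying by $\tau$ yields
\[
\text{Nash-Regret}(T)\lesssim \frac{N\tau}{\beta}+T\Big[N\varepsilon_{\rm est}+\frac{N^2(1-\alpha)}{\beta}+N^3\beta\Big].
\]
The final step is tuning. I would take $1-\alpha\asymp T^{-1/6}$, $\beta\asymp T^{-1/6}$ and $\tau\asymp T^{2/3}$, so that $K=T^{1/3}$; the condition $k\ge 2/N$ only guarantees $K\ge1$. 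With these choices $\varepsilon_{\rm est}\asymp\sqrt{\log N\cdot T^{1/6}/T^{2/3}}=T^{-1/4}\sqrt{\log N}$, and each term is at most $\tilde O(N^{2\text{--}3}T^{5/6}\log N)$.

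The main obstacle is the polynomial dependence on $N$. The naive smoothness term gives $N^3$, so to reach the stated $N^2$ I would bound the quadratic term per coordinate block: smoothness acts blockwise, and the cross terms are bounded by $N\|\cdot\|_1^2$ with $\|\pi^{k+1}-\pi^k\|_1\le 2N(\beta+1-\alpha)$. If that still leaves an extra factor, I would absorb it by rescaling $\beta\propto N^{-1/2}T^{-1/6}$, which still keeps the $T^{5/6}$ rate. The no-zero-entry assumption ensures $M(\pi^k_n)$ is invertible, so the estimator is well defined.
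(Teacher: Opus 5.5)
Your skeleton is the paper's: reduce Nash regret to $\tau\sum_k G_k$ via the potential identity and multilinearity, apply $N$-smoothness to $\pi^{k+1}-\pi^k=\alpha\beta(\tilde\pi^{k+1}-\pi^k)+(1-\alpha)(\rho^k-\pi^k)$, pay $2N\varepsilon_{\mathrm{est}}$ for using $\widehat\nabla$ in the Frank--Wolfe step, and telescope using $\Phi\in[0,N]$. Your intermediate bound
\[
\text{Nash-Regret}(T)\lesssim \frac{N\tau}{\beta}+T\Big[N\varepsilon_{\mathrm{est}}+\frac{N^2(1-\alpha)}{\beta}+N^3\beta\Big]
\]
is fine.

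\textbf{The gap is the tuning.} With $1-\alpha\asymp\beta\asymp T^{-1/6}$, the exploration-cost term $TN^2(1-\alpha)/\beta$ is $\Theta(N^2T)$, i.e.\ linear in $T$. So the claim that every term is $\tilde O(T^{5/6})$ fails for this term. The loss $(1-\alpha)\langle\nabla\Phi(\pi^k),\rho^k-\pi^k\rangle\ge -2N^2(1-\alpha)$ is paid every episode, while the guaranteed progress is only $\alpha\beta G_k$. You therefore need $(1-\alpha)/\beta\lesssim T^{-1/6}$, i.e.\ $1-\alpha$ an order smaller than $\beta$.

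Matching all four terms to $T^{5/6}$ forces $\tau\asymp T^{2/3}$, $\beta\asymp T^{-1/6}$, $1-\alpha\asymp T^{-1/3}$. With $K=T^{1/3}$, this is the paper's choice $1-\alpha=1/(NK)$, $\beta=1/(N\sqrt K)$, $\tau=K^2$. Then $(1-\alpha)\tau=K/N$, so $\varepsilon_{\mathrm{est}}=\tilde O(\sqrt{N\log N/K})$; your choice made $\varepsilon_{\mathrm{est}}$ smaller than necessary at the expense of the exploration term. The four terms become $N^2T^{5/6}$, $\tilde O(N^{3/2}T^{5/6}\sqrt{\log N})$, $N^2T^{5/6}$ and $N^2T^{5/6}$, and the $(1-\alpha)^2$ part of your quadratic term is lower order.

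The same choice removes your ``main obstacle''. No blockwise smoothness argument is needed: balancing $N\tau/\beta$ against $TN^3\beta$ gives $\beta\asymp 1/(N\sqrt K)$ and hence the $N^2$ factor directly. Your fallback $\beta\propto N^{-1/2}T^{-1/6}$ would leave $TN^3\beta=N^{5/2}T^{5/6}$.

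The hypothesis $k\ge 2/N$ is also not about $K\ge 1$. With $1-\alpha=1/(NK)$, it is exactly what gives $\alpha\ge 1/2$, which you use when dividing by $\alpha\beta$.

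Finally, your quadratic bound $O(N^3(\beta+1-\alpha)^2)$ is the right one. The paper's displayed $2N^3(\alpha^2+\beta^2)$ only yields the stated rate if read as $\alpha^2\beta^2+(1-\alpha)^2$.
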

\begin{proof}
    Set $\nabla^k \Phi = \nabla \Phi(\prod^k)$. By Lemma~\ref{lemma:lip_and_smooth}, we have
    \begin{align}
        \Phi(\pi^{k+1}) &\geq \Phi(\pi^{k}) + \langle \nabla \Phi(\pi^k), \pi^{k+1} - \pi^{k} \rangle \nonumber\\
        & \quad - \frac{N}{2}\|\pi^{k+1}-\pi^{k}\|_1^2
        \nonumber \\
        & = \Phi(\pi^{k}) + \alpha\beta\langle \nabla \Phi(\pi^k), \Tilde{\pi}^{k+1}-\pi^k\rangle \nonumber\\
        & \quad + (1-\alpha) \langle \nabla^k\Phi, \rho^k-\pi^k\rangle \nonumber\\
        & \quad - \frac{N}{2}\|\alpha\beta(\Tilde{\pi}^k-\pi^k)+(1-\alpha)(\rho^k-\pi^k)\|^2_1 \nonumber\\
        & \geq \Phi(\pi^{k}) + \alpha\beta\langle \nabla \Phi(\pi^k), \Tilde{\pi}^{k+1}-\pi^k\rangle \nonumber\\
        & \quad - (1-\alpha )\|\nabla^k\Phi\|_\infty\|\rho^k-\pi^k\|_1 \nonumber\\
        & \quad - \frac{N}{2}\|\alpha\beta(\Tilde{\pi}^k-\pi^k)+(1-\alpha)(\rho^k-\pi^k)\|^2_1 \nonumber\\
        & \geq \Phi(\pi^{k}) + \alpha\beta\langle \nabla \Phi(\pi^k), \Tilde{\pi}^{k+1}-\pi^k\rangle \nonumber\\
        & \quad - 2(1-\alpha) N^2\mathcal{O}(\log N) \nonumber\\
        & \quad - \frac{N}{2}\|\alpha\beta(\Tilde{\pi}^k-\pi^k)+(1-\alpha)(\rho^k-\pi^k)\|^2_1 \nonumber\\
        & \geq \Phi(\pi^{k}) + \alpha\beta\langle \nabla \Phi(\pi^k), \Tilde{\pi}^{k+1}-\pi^k\rangle \nonumber\\
        & \quad - 2(1-\alpha) N^2\mathcal{O}(\log N) \nonumber\\
        & \quad - 2N^3(\alpha^2 + \beta^2)
        \nonumber
    \end{align}
    Define the true target policy at episode $k$ as follows
    \begin{align}
        \widehat{\pi}^{k+1}_n = \argmax_{\pi_n} \langle \pi_n, \nabla_n \Phi(\pi^k_n) \rangle,
        \nonumber
    \end{align}
    and the Frank Wolfe gap of join strategy $\pi$
    \begin{align}
        G(\pi) = \max_{\pi'} \langle \pi'-\pi, \nabla\Phi(\pi)\rangle
        \nonumber
    \end{align}
    according to~\cite{congestion_games}.
    Then we have 
    \begin{align}
        & \langle \nabla \Phi(\pi^k), \Tilde{\pi}^{k+1}-\pi^k\rangle \nonumber\\
        &= \langle \widehat{\nabla}\Phi(\pi^k),\Tilde{\pi}^{k+1}-\pi^k\rangle \nonumber\\
        & \quad + \langle \nabla \Phi(\pi^k) - \widehat{\nabla}\Phi(\pi^k), \Tilde{\pi}^{k+1}-\pi^k\rangle \nonumber\\
        & \geq \langle \widehat{\nabla}\Phi(\pi^k),\widehat{\pi}^{k+1}-\pi^k\rangle \nonumber\\
        & \quad + \langle \nabla \Phi(\pi^k) - \widehat{\nabla}\Phi(\pi^k), \Tilde{\pi}^{k+1}-\pi^k\rangle \nonumber\\
        & = \langle \nabla\Phi(\pi^k),\widehat{\pi}^{k+1}-\pi^k\rangle \nonumber\\
        & \quad + \langle \nabla \Phi(\pi^k) - \widehat{\nabla}\Phi(\pi^k), \Tilde{\pi}^{k+1}-\widehat{\pi}^{k+1}\rangle \nonumber\\
        & \geq G(\pi^k) \nonumber\\
        & \quad - \|\nabla \Phi(\pi^k) - \widehat{\nabla}\Phi(\pi^k)\|_\infty \|\Tilde{\pi}^{k+1}-\widehat{\pi}^{k+1}\|_1\nonumber\\
        & \geq G(\pi^k)  - 2N\|\nabla \Phi(\pi^k) - \widehat{\nabla}\Phi(\pi^k)\|_\infty  \nonumber\\
        & \geq G(\pi^k)  - \sqrt{\frac{4N^2\mathcal{O}(\log N)c\log{(\frac{NK}{\delta})}}{(1-\alpha) \tau}} \nonumber\\
        & \quad - \frac{2N\mathcal{O}(\log N)c\log{(\frac{NK}{\delta})}}{(1-\alpha) \tau}
        \nonumber
    \end{align}
    Apply it to the previous bound, and we have
    \begin{align*}
        \Phi(\pi^{k+1}) &\geq \Phi(\pi^{k}) + \alpha\beta G(\pi^k) \nonumber\\
        & \quad - \frac{\alpha\beta}{\sqrt{(1-\alpha)\tau}}\sqrt{4N^2\mathcal{O}(\log N)c\log{(NK/\delta)}} \nonumber\\
        & \quad - \frac{\alpha\beta}{(1-\alpha)\tau} 2N\mathcal{O}(\log N)c\log{(NK/\delta)} 
        \nonumber\\
        & \quad - 2(1-\alpha) N^2\mathcal{O}(\log N) -2N^3(\alpha^2+\beta^2).
    \end{align*}
    Summing over $k\in [K]$ and we have
    \begin{align*}
        \sum_{k=1}^K G(\pi^k) &\leq \frac{\Phi(\pi^{K+1})-\Phi(\pi^1)}{\alpha\beta} \nonumber\\
        & \quad + \frac{K}{\sqrt{(1-\alpha)\tau}}\sqrt{4N^2\mathcal{O}(\log N)c\log{(NK/\delta)}} \nonumber\\
        & \quad + \frac{K}{(1-\alpha)\tau} 2N\mathcal{O}(\log N)c\log{(NK/\delta)} 
        \nonumber\\
        & \quad + \frac{2K(1-\alpha) N^2\mathcal{O}(\log N)}{\alpha\beta} \\
        & \quad + \frac{2KN^3(\alpha^2+\beta^2)}{\alpha\beta}.
    \end{align*}
    Set $(1-\alpha)=\frac{1}{NK}$, $\beta=\frac{1}{N\sqrt{K}}$, and $\tau=K^2$ and notice that when $K \geq \frac{2}{N}$, we have $\alpha \geq \frac{1}{2}$ and 
    \begin{align*}
        \sum_{k=1}^K G(\pi^k) = \Tilde{\mathcal{O}}(N^2 K^{1/2}\log N).
    \end{align*}
    Let $T=K\tau$,
    we have
    \begin{align*}
        \text{Nash-Regret}(T)=\tau \sum_{k=1}^K G(\pi^k) =\Tilde{\mathcal{O}}(N^2 T^{5/6}\log N).
    \end{align*}
\end{proof}

{\color{dv}
\section{Proof of Corollary~2}
\label{app:proof_corollary2}
\begin{assumption}
\label{assump:async}
Each node $n$ maintains a local episode counter $k_n$ and performs updates at a sequence of (wall-clock) times.
When node $n$ performs its $(k_n+1)$-th update, it uses only its own local history
$\{(p_n^{k,t}, r_n^{k,t})\}_{t=1}^{\tau}$ collected since its previous update and its current policy $\pi_n^{k_n}$.
Moreover, (i) no inter-node communication is used, and (ii) the delay between two consecutive updates of any node is bounded:
there exists $D\ge 1$ such that between any two updates of node $n$, every other node updates at most $D$ times.
\end{assumption}

\begin{lemma}
\label{lemma:async_regret}
Suppose Assumption~\ref{assump:async} holds.
Assume further that each node's expected reward is Lipschitz in other nodes' mixed policies:
for every node $n$ and action $p\in P_n$,
\begin{align}
\big|\mathbb{E}[r_n \mid p, \pi_{-n}] - \mathbb{E}[r_n \mid p, \pi'_{-n}]\big|
\le L \sum_{m\neq n}\|\pi_m-\pi'_m\|_1 .
\label{eq:lipschitz_env}
\end{align}
Let $\eta:=\sup_{m}\sup_{\text{one update}}\|\pi_m^{\text{new}}-\pi_m^{\text{old}}\|_1$ be the maximum per-update policy change.
Then the Nash regret of the asynchronous execution satisfies, with probability $1-\delta$,
\begin{align}
\text{Nash-Regret}_{\mathrm{async}}(T) 
&\le \text{Nash-Regret}_{\mathrm{sync}}(T) \nonumber \\
& \quad + \widetilde{\mathcal{O}}(L N D \eta \, T).
\label{eq:async_regret_bound}
\end{align}
In particular, if $D\eta=o(1)$ (e.g., via diminishing stepsizes), then the regret remains sublinear in $T$.
\end{lemma}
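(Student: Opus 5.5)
The plan is a perturbation argument: compare the asynchronous execution with a virtual synchronous execution driven by the same local samples, and charge the difference to the staleness of the other nodes' policies. For each node $n$ and each of its local episodes $k_n$, let $\pi^{\mathrm{ref}}_{-n}$ denote the joint policy of the other nodes at the start of that episode. The rewards node $n$ actually collects during the episode are drawn under $\pi_{-n}(t)$, which may drift while node $n$ is sampling. By Assumption~\ref{assump:async}(ii), each other node $m$ performs at most $D$ updates during this window, and each update changes $\pi_m$ by at most $\eta$ in $\|\cdot\|_1$. Hence
\begin{align}
\sum_{m\neq n}\|\pi_m(t)-\pi^{\mathrm{ref}}_m\|_1 \le (N-1)D\eta .\nonumber
\end{align}
Combining this with the Lipschitz condition (\ref{eq:lipschitz_env}), the conditional mean reward of every action $p\in P_n$ differs from its synchronous counterpart by at most $LND\eta$.

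The second step propagates this bias through the estimator and the regret. Line~\ref{pseudocode:estimate_gradient} of Algorithm~\ref{pseudo:our_algorithm} is linear in the rewards, so a reward bias of size $b$ shifts $\mathbb{E}_k[\widehat{\nabla}^k_n\Phi(p)]$ by at most $b\cdot\max_{p,p'}|\psi(p)^\top M(\pi^k_n)^{-1}\psi(p')|$. The last factor is at most $c_1/(1-\alpha)$, as shown in the proof of Lemma~\ref{lemma:bound_of_estimator}.

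The concentration part is unchanged, because Assumption~\ref{assump:async}(i) ensures that each node uses only its own samples. The Bernstein argument of Lemma~\ref{lemma:estimator_bound} therefore still holds, provided it is applied to the martingale differences with respect to the drifting conditional means. A Freedman-type inequality handles this with the same boundedness and variance bounds from Lemmas~\ref{lemma:bound_of_estimator} and \ref{lemma:bound_of_estimator_second_order}.

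Inside the proof of Theorem~\ref{theo:nash_regret_bound}, the gradient error term $\|\nabla\Phi-\widehat{\nabla}\Phi\|_\infty$ thus picks up an extra additive bias of order $LND\eta/(1-\alpha)$. The ascent inequality also must be evaluated at the actual joint policy rather than at a synchronized snapshot. Lemma~\ref{lemma:lip_and_smooth} bounds the resulting discrepancy in $\Phi$ and in the Frank--Wolfe gap $G$ by $N$ times the $\ell_1$ drift, which is again $O(ND\eta)$ per episode. Summing over episodes and multiplying by $\tau$ to convert to packets, as in the final step of Theorem~\ref{theo:nash_regret_bound}, gives the synchronous bound plus $\widetilde{\mathcal{O}}(LND\eta T)$.

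The main obstacle is this bias step. The $1/(1-\alpha)$ amplification from $M(\pi^k_n)^{-1}$, together with the $1/(\alpha\beta)$ factor when $G$ is isolated, can blow up the bias under the parameter choice $(1-\alpha)=1/(NK)$. The first attempt would be to avoid passing the bias through the estimator. Since $\widehat{\nabla}$ is an unbiased estimate of the gradient at the \emph{time-averaged} opponent policy, one can instead compare $\nabla\Phi$ at that averaged policy with $\nabla\Phi(\pi^k)$ directly via the smoothness in Lemma~\ref{lemma:lip_and_smooth}, which avoids any $1/(1-\alpha)$ amplification. If the $1/(\alpha\beta)$ factor still cannot be absorbed, the fallback is to note that $\eta\le\alpha\beta\cdot 2+(1-\alpha)\cdot 2$ under line~\ref{pseudocode:update_policy_with_exploration}. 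This ties the drift to the step sizes, and the extra term is then stated in the $\widetilde{\mathcal{O}}$ form above, with hidden factors allowed to depend on $\alpha,\beta$. The claimed sublinearity when $D\eta=o(1)$ follows immediately.
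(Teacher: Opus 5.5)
Your first step is the paper's entire proof. Within a local block of node $n$, every other node updates at most $D$ times, so $\sum_{m\neq n}\|\pi_m(t)-\bar{\pi}_m\|_1\le (N-1)D\eta$, and the Lipschitz condition turns this into an $LND\eta$ shift of every action's mean reward. The paper then simply asserts that the estimator acquires an $\mathcal{O}(LND\eta)$ bias and ``plugs it into'' the proof of Theorem~\ref{theo:nash_regret_bound}. The trouble is in how you carry out that plug-in.

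Your first remedy for the $1/(1-\alpha)$ factor is right, and for a simpler reason than you give. With one-hot $\psi$, $\psi(p)^\top M(\pi^k_n)^{-1}\psi(p')=\mathbf{1}\{p=p'\}/\pi^k_n(p)$. Hence $\mathbb{E}[\widehat{\nabla}^k_n\Phi(p)]=\frac{1}{\tau}\sum_t\mu_t(p)$, where $\mu_t(p)$ is the conditional mean reward of $p$ at packet $t$, and an $LND\eta$ shift of each $\mu_t$ moves this by at most $LND\eta$. You should apply the Lipschitz bound at each $t$ and average; the quantity is not ``the gradient at the time-averaged policy'', since averaging product policies gives a mixture. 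Also, the gradient bias never meets a $1/(\alpha\beta)$. In the proof of Theorem~\ref{theo:nash_regret_bound}, the error $\|\nabla\Phi-\widehat{\nabla}\Phi\|_\infty$ enters through $\alpha\beta\langle\nabla\Phi(\pi^k),\tilde{\pi}^{k+1}-\pi^k\rangle$, so it is multiplied by $\alpha\beta$ before you divide by it.

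The $1/(\alpha\beta)$ does hit a term you introduced yourself: the per-episode discrepancy in $\Phi$ between the actual asynchronous policies and synchronized snapshots, which is of order $N\cdot ND\eta$ by Lemma~\ref{lemma:lip_and_smooth}. Summed over $K$ episodes, divided by $\alpha\beta$ and multiplied by $\tau$, it gives $N^2D\eta T/(\alpha\beta)$. Your fallback does not rescue the lemma. $\widetilde{\mathcal{O}}$ may hide only logarithmic factors. Moreover $\eta\le 2\alpha\beta+2(1-\alpha)$ with $\eta=\Theta(\alpha\beta)$ in the worst case, so $D\eta/(\alpha\beta)=\Theta(D)$. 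The extra term is then of order $N^2DT$, linear in $T$ however small $\eta$ is, in particular under diminishing step sizes. So sublinearity under $D\eta=o(1)$ does \emph{not} follow.

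The missing idea is to avoid snapshots entirely and telescope $\Phi$ along the actual asynchronous trajectory, one update at a time.
\begin{itemize}
\item Because $\Phi(\pi)$ is linear in each $\pi_n$ separately, an update of node $n$ changes $\Phi$ by exactly $\langle\nabla_n\Phi(\pi^{\mathrm{cur}}),\pi_n^{\mathrm{new}}-\pi_n^{\mathrm{old}}\rangle$. There is no smoothness remainder and no snapshot discrepancy.
\item Asynchrony then enters only through $\|\nabla_n\Phi(\pi^{\mathrm{cur}})-\mathbb{E}\widehat{\nabla}_n\Phi\|_\infty\le LND\eta$. This term carries the factor $\alpha\beta$, so it costs $\mathcal{O}(LND\eta)$ per update after division.
\item Relating node $n$'s gap at its update to its gap at each packet of the preceding block costs one more drift term, via Lemma~\ref{lemma:lip_and_smooth} or the Lipschitz assumption. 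This term is again free of step sizes.
\end{itemize}
This yields an extra term linear in $D\eta T$ with only polynomial factors in $N$ and $L$, and the sublinearity claim does follow from it. Honest bookkeeping, for example the factor $\|\tilde{\pi}^{k+1}-\widehat{\pi}^{k+1}\|_1\le 2N$, gives $\mathcal{O}(LN^2D\eta T)$ rather than the stated $LND\eta T$. The paper's one-line plug-in leaves this unresolved as well.
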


\begin{proof}
Fix a node $n$. Consider a local block of $\tau$ packets collected by node $n$ between two consecutive local updates.
Let $\pi_{-n}(t)$ denote the mixed policies of other nodes at packet time $t$ within this block,
and let $\bar{\pi}_{-n}:=\pi_{-n}(1)$ be the other nodes' policy vector at the beginning of the block.

By Assumption~\ref{assump:async}, during this block, each other node $m\neq n$ performs at most $D$ updates.
Since each update changes its mixed policy by at most $\eta$ in $\ell_1$ norm, we have for all $t$,
\begin{align}
\|\pi_m(t)-\bar{\pi}_m\|_1 \le D\eta,\qquad \forall m\neq n,
\end{align}
and hence
\begin{align}
\sum_{m\neq n}\|\pi_m(t)-\bar{\pi}_m\|_1 \le (N-1)D\eta \le ND\eta.
\end{align}

By the Lipschitz condition \eqref{eq:lipschitz_env}, for any action $p_n(t)$ chosen by node $n$ at time $t$,
\begin{align}
&\left|\mathbb{E}[r_n(t)\mid p_n(t), \pi_{-n}(t)] - \mathbb{E}[r_n(t)\mid p_n(t), \bar{\pi}_{-n}]\right| \nonumber\\
&\quad \le L \sum_{m\neq n}\|\pi_m(t)-\bar{\pi}_m\|_1
\le LND\eta .
\end{align}

Therefore, relative to the synchronous (frozen) environment $\bar{\pi}_{-n}$ within the block,
the gradient estimator incurs an additional bias term of order $\mathcal{O}(LND\eta)$.
Plugging this bias bound into the regret in Appendix~\ref{sec:proof_nash_regret} yields \eqref{eq:async_regret_bound}.
\end{proof}
}
\section{Proof of Theorem~2}
\label{sec:proof_time_complexity}

\begin{proof}
    We analyze the time complexity of Algorithm~\ref{pseudo:our_algorithm} line by line. Whenever a node receives packets, and it is not the destination of the packets, it first runs line~\ref{pseudocode:collect_info} to collect rewards. The node selects the next hop out of $\mathcal{O}(\log N)$ possible candidate nodes for $\tau$ packets with the time complexity being $\mathcal{O}(\tau\log N)$. 
    
    Next, the node runs line~\ref{pseudocode:d_optimal_design} to identify an exploratory matrix. Since each node has at most $\mathcal{O}(\log N)$ possible candidate nodes, the time complexity of calculating the policy correlation matrix $M(\lambda)$ by Eq.~({\color{blue}3}) is $\mathcal{O}(\log^3N)$, and $M(\lambda)$ is a $\mathcal{O}(\log N)$ by $\mathcal{O}(\log N)$ matrix so the time complexity of calculating the determinant of $M(\lambda)$ is also $\mathcal{O}(\log^3N)$. The time complexity of line~\ref{pseudocode:d_optimal_design} is $\mathcal{O}(|\Delta (P_n)|\log^3N)$ as the node identifies at most $|\Delta (P_n)|$ policies in $\Delta (P_n)$ to identify the minimum.

    In line~\ref{pseudocode:estimate_gradient}, the time complexity of calculating the inverse of the policy correlation matrix $M(\pi^k_n)^{-1}$ only once is $\mathcal{O}(\log^3N)$, and that of the matrix multiplication $\psi(p)^\top M(\pi^k_n)^{-1}\psi(p^{k,t}_n)r^{k,t}_n$ is $\mathcal{O}(\log^2N + \log N)$. Since the node estimates gradients over $\tau$ rewards for at most $\mathcal{O}(\log N)$ candidate nodes, the time complexity of calculating gradient estimator $\widehat{\nabla}^k_n\Phi$ is $\mathcal{O}(
    \tau \log^3N)$.

    Identifying the optimal policy $\Tilde{\pi}^{k+1}_n$ in line~\ref{pseudocode:get_optimal_policy_by_gradient} calculates the dot product of two $\mathcal{O}(\log N)$-dimensional vectors at most $|\Delta(P_n)|$ times, leading to $\mathcal{O}(|\Delta(P_n)|\log N)$ of time complexity. For line~\ref{pseudocode:update_policy_with_exploration}, the time complexity of the linear combination of $\mathcal{O}(\log N)$-dimensional vectors is $\mathcal{O}(\log N)$.

    Finally, since each node runs Algorithm~\ref{pseudo:our_algorithm} in parallel, the time complexity of updating a policy in an episode is $\mathcal{O}(\tau \log^3N + |\Delta (P_n)| \log^3N)$.
    
\end{proof}

{\color{dv}
{\color{dv}
\section{Adaptivity to Edge Heterogeneity}
\label{app:adap_edge_hetero}
\noindentparagraph{Heterogeneous computational capacity.}
We add a section discussing how Totoro$^+$ handles heterogeneous computational capacities in Appendix~\ref{sec:impl_algo}, along with additional experiments demonstrating that Totoro$^+$ can effectively adapt to heterogeneous computational capacities in terms of both computation and communication overhead.

\noindentparagraph{Heterogeneous bandwidth.}
To address heterogeneous bandwidth conditions, Totoro$^+$ adopts a game-theoretic routing model that enables nodes to adaptively identify suitable next-hop nodes for FL task transmission. The algorithmic details and theoretical analysis are presented in Section~{\color{blue}V-C}.

\noindentparagraph{Asynchronous updates.}
Totoro$^+$ supports two forms of asynchronous updates
\begin{enumerate}
    \item \textit{Asynchronous model updates in FL.} The model update scheme is configurable by the application owner or developer. After an application is published, the FL aggregation mode can be configured as asynchronous~\cite{async_fed_opt} or semi-synchronous~\cite{semi_sync_fl} according to application requirements. In addition, when nodes subscribe to an application, their computational capacity and bandwidth characteristics can be examined before admitting them into the candidate participant pool. 
    This configurability represents an additional customization capability of Totoro$^+$, beyond its scalability and adaptivity advantages.

    \item \textit{Asynchronous path selection.} In Algorithm~\ref{pseudo:our_algorithm}, the update procedure is asynchronous. Each node first explores candidate paths for $\tau$ rounds and then selects subsequent paths based on the feedback obtained during these $\tau$ rounds. When multiple nodes select the same path, the reward observed in that round is reduced. Moreover, when updating the policy $\pi$, Algorithm~\ref{pseudo:our_algorithm} employs a linear regression model to predict the gradient direction and adjust the policy accordingly, without requiring coordination with other nodes. As a result, the proposed game-theoretic algorithm in Totoro$^+$ inherently supports asynchronous updates and is well suited to the uncertainty and dynamics of edge networks.
\end{enumerate}
}

\section{Adapt to edge nodes with heterogeneous resources.}
\label{app:adap_edge_hetero_resource}
We provide the following mechanism to adapt to edge nodes with heterogeneous resources.

\begin{figure*}[t!]
        \centering
        \includegraphics[width=0.15\linewidth]{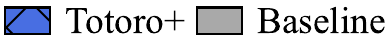}
        \vspace{-0.1in}
\end{figure*}
\begin{figure*}[t!]
\centering
    \begin{minipage}{0.31\textwidth}
        \captionsetup{width=\textwidth} 
        \includegraphics[width=\textwidth]{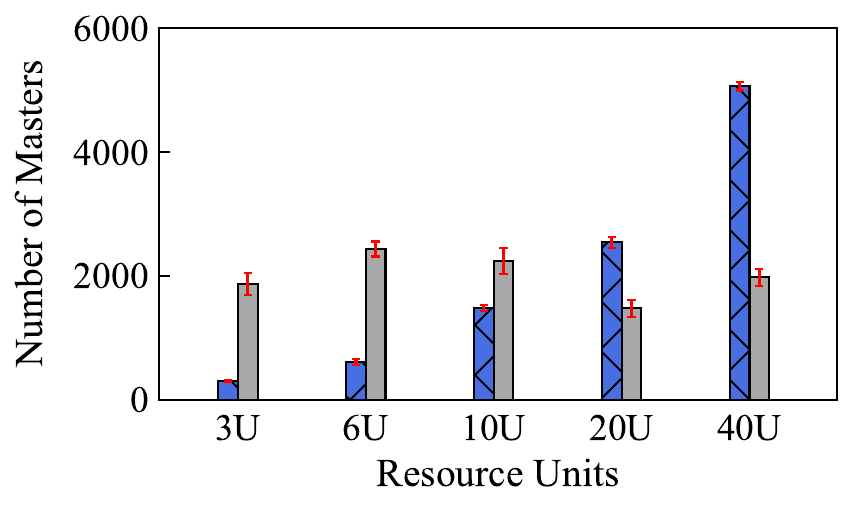}
        \caption{Totoro$^+$ allocates more masters to the node with more resource units.}
        \label{fig:number_masters_r2_r4}
    \end{minipage}\hspace{.1in}
    \begin{minipage}{0.31\textwidth}
        \captionsetup{width=\textwidth} 
        \includegraphics[width=\textwidth]{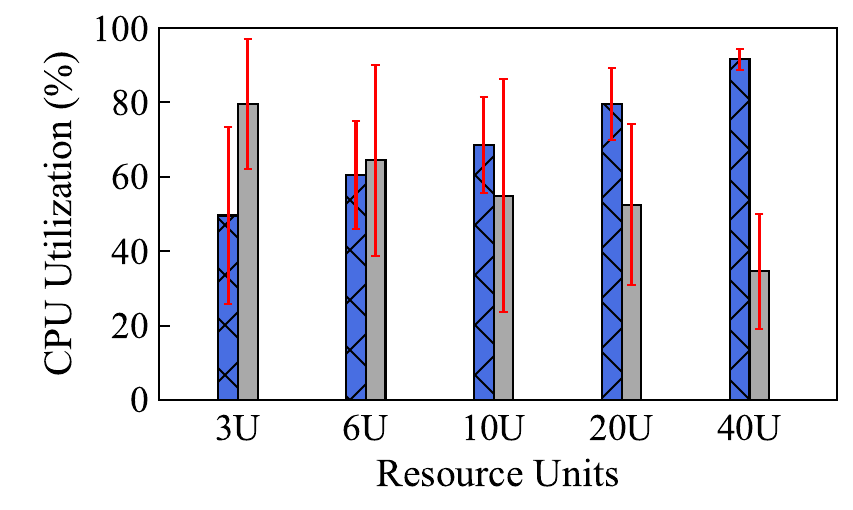}
        \caption{Totoro$^+$ keeps a high CPU utilization in the node with more resource units.}
        \label{fig:cpu_utilizations_r2_r4}
    \end{minipage}
    \vspace{.1in}
\end{figure*}

\begin{figure*}[h]
        \centering
        \includegraphics[width=0.25\linewidth]{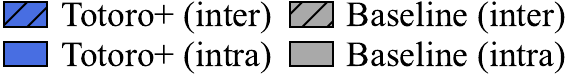}
        \vspace{-0.1in}
\end{figure*}
\begin{figure*}[h]
\centering
    \begin{minipage}{0.31\textwidth}
        \captionsetup{width=\textwidth} 
        \includegraphics[width=\textwidth]{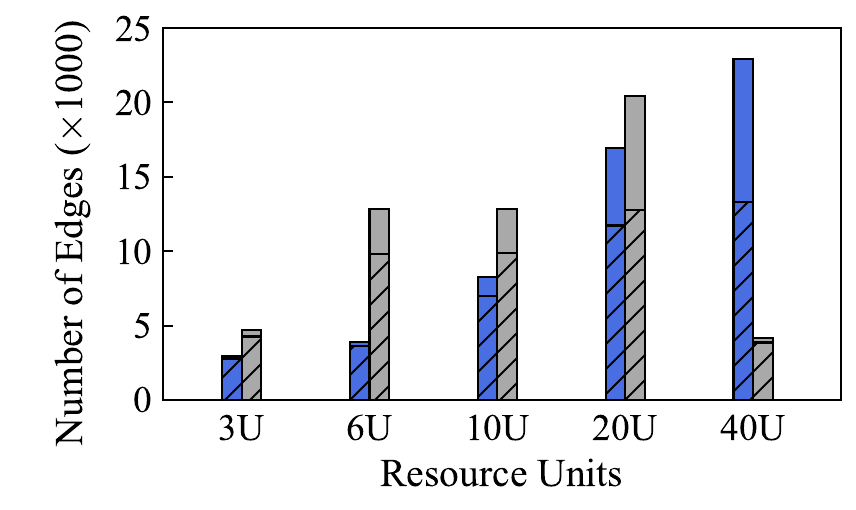}
        \caption{Totoro$^+$ enables more intra- and inter-node communication edges.}
        \label{fig:number_edges_r2_r4}
    \end{minipage}\hspace{.1in}
    \begin{minipage}{0.31\textwidth}
        \captionsetup{width=\textwidth} 
        \includegraphics[width=\textwidth]{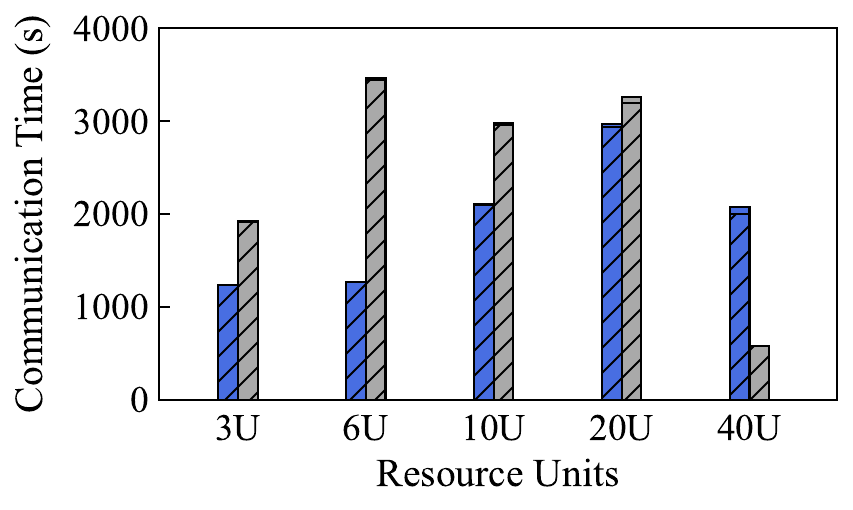}
        \caption{Totoro$^+$ reduces total intra- and inter-node communication times.}
        \label{fig:communication_times_r2_r4}
    \end{minipage}
\end{figure*}

A P2P node in Totoro$^+$ can be seen as a \textit{logical} node. We can map a physical edge node with rich resources to more P2P nodes in Totoro$^+$, whereas mapping resource-constrained edge nodes to fewer P2P nodes. For example, suppose three physical edge nodes with 2, 4, and 8 CPU cores, respectively. Then, the nodes with 4 and 8 CPU cores can serve as 2 and 3 logical P2P nodes in the DHT-based P2P overlay network, respectively. One can follow a similar way for different resources such as memory, network bandwidths, or GPUs.

\begin{figure*}[t!]
  \centering
  \vspace{0.15in}
  \captionsetup[subfigure]{width=4.1cm}
  \subfloat[Google Speech~\cite{google_speech_dataset} with ResNet-18~\cite{residualnet}]{\includegraphics[width=0.24\textwidth]{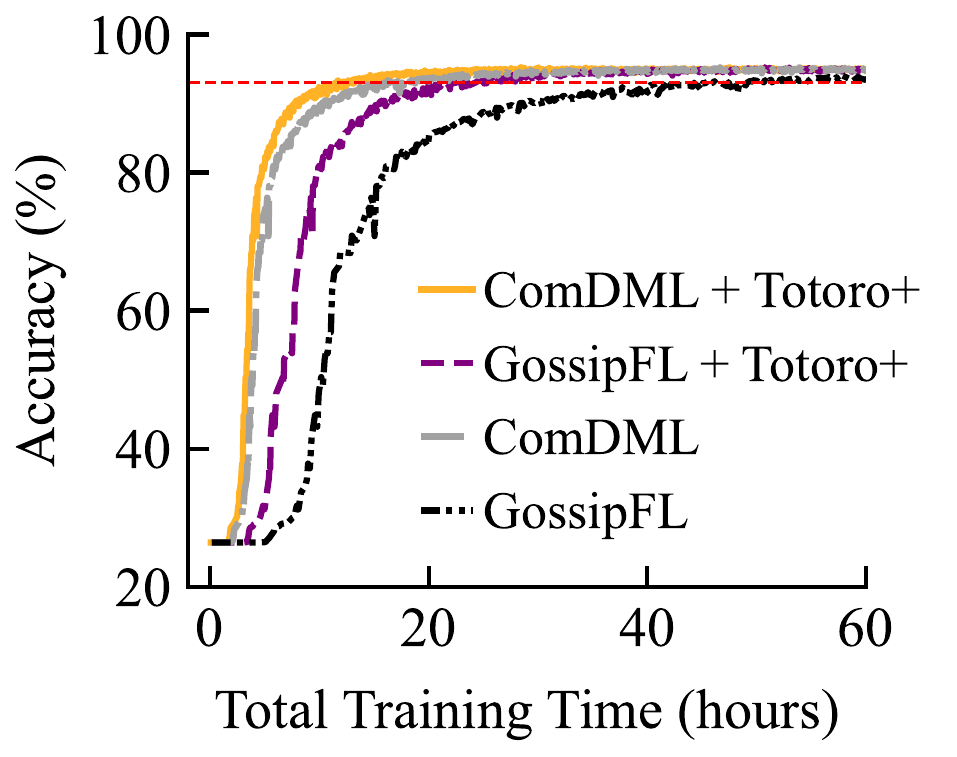}
  \label{fig:defl_speech_r1_c6}}
\hfill
  \subfloat[FEMNIST~\cite{femnist_dataset} with ShuffleNet~V2~\cite{shufflenet_v2}]{\includegraphics[width=0.24\textwidth]{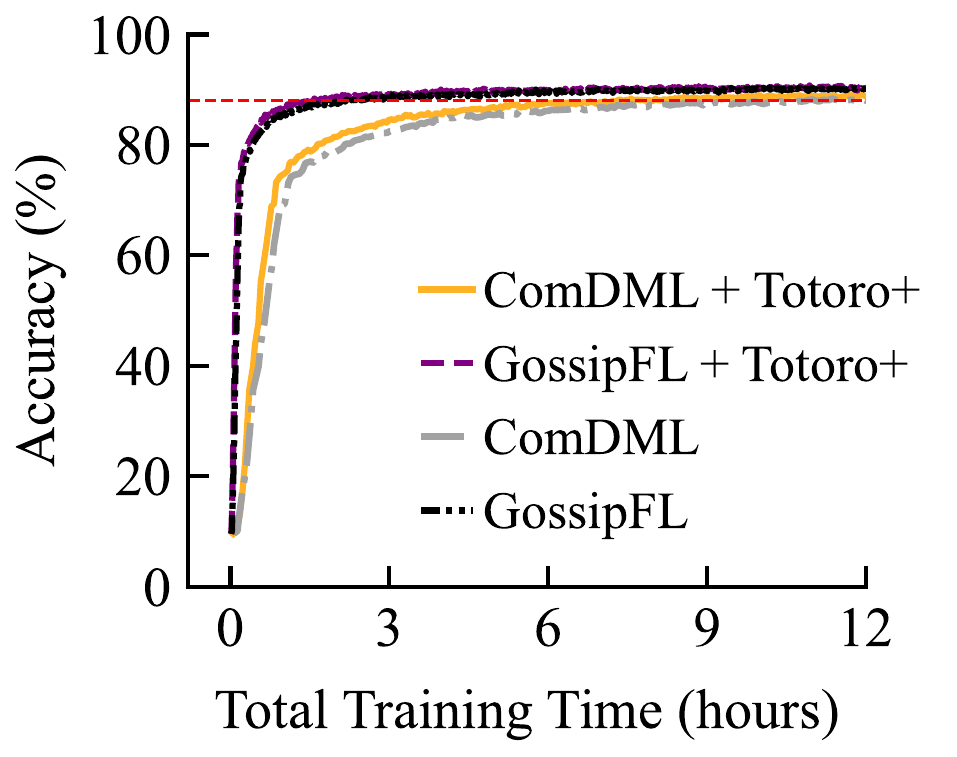}
  \label{fig:defl_femnist_r1_c6}}
 \hfill
   \subfloat[CIFAR-10~\cite{cifar10} with MobileViT~\cite{mobilevit}]{\includegraphics[width=0.24\textwidth]{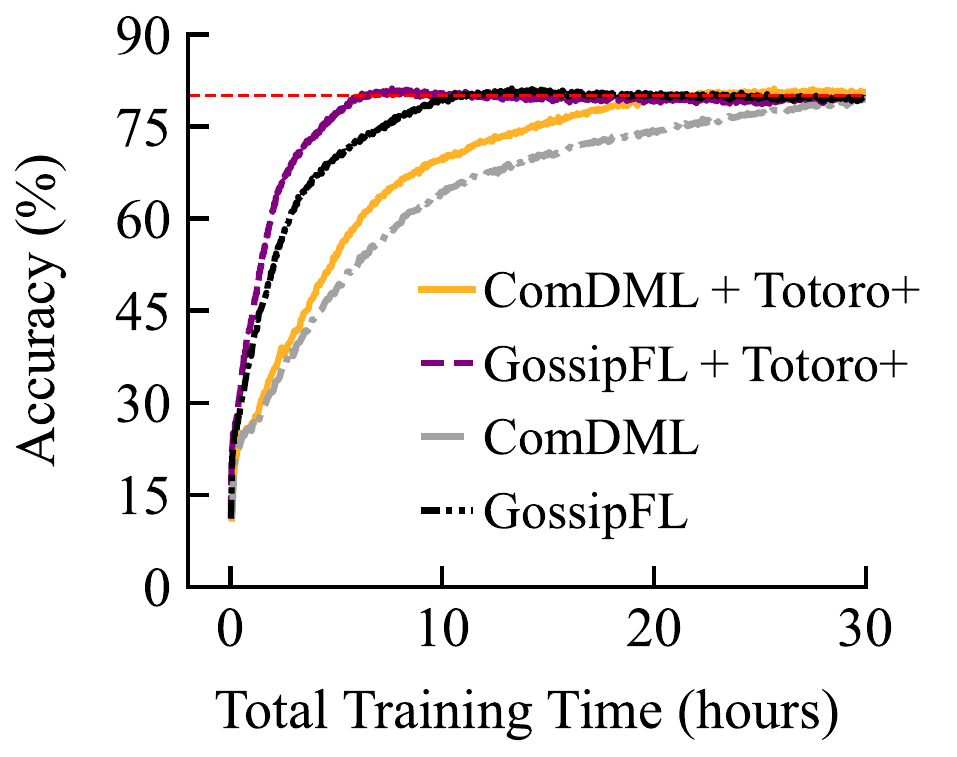}
   \label{fig:defl_cifar10_r1_c6}}
 \hfill
     \subfloat[MASSIVE~\cite{massive_dataset} with ALBERT~\cite{albert}]{\includegraphics[width=0.24\textwidth]{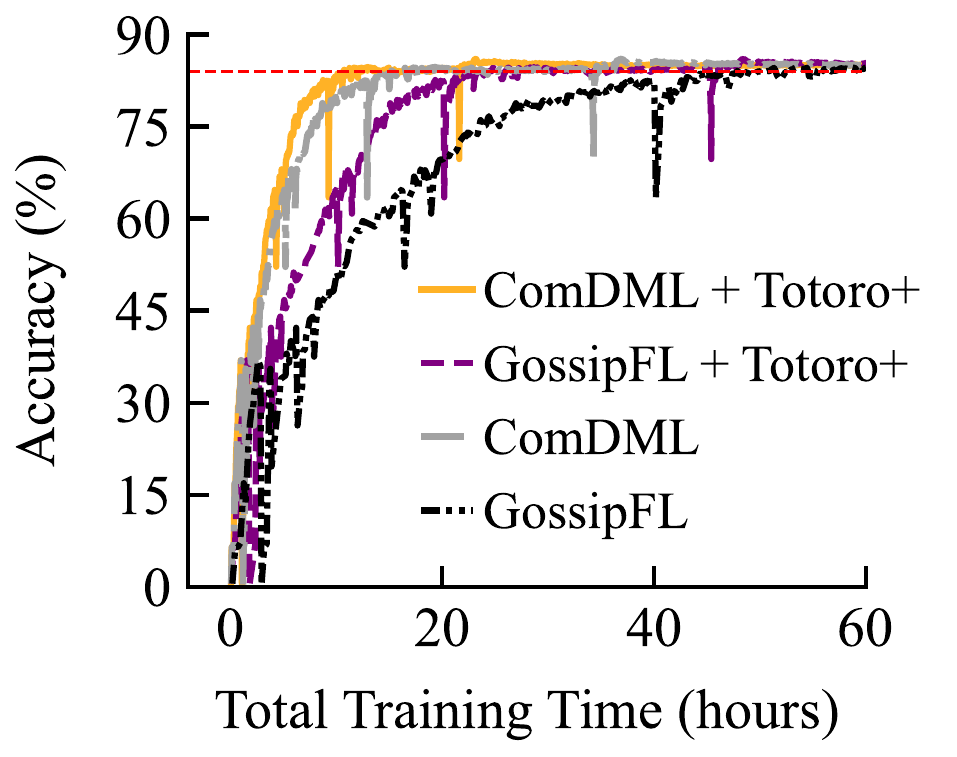}
   \label{fig:defl_massive_r1_c6}}
     \vspace{0.1in}
  \caption{Time-to-Accuracy performance. Decentralized federated learning frameworks running atop Totoro$^+$ (ComDML + Totoro$^+$ and GossipFL + Totoro$^+$) achieve faster time-to-accuracy convergence.}
  \label{fig:t2a_defl}
\end{figure*}

In the DHT-based P2P overlay, the node mapped to two P2P nodes is assigned two NodeIds. Although these two NodeIDs may not be numerically close, the proximity metrics (e.g., hop counts, RTT, cross-site link congestion levels) in the publish/subscribe-based forest
abstraction indicate that these two P2P nodes are physically close. Moreover, since the masters are distributed evenly across the overlay due to the hash function, the nodes mapped to two P2P nodes have more chances of being masters than those mapped to a P2P node. For failure recovery, since each P2P node operates individually and autonomously, the nodes mapped to multiple P2P nodes follow the same way to recover from failures. Lastly, since P2P nodes in the overlay represent \textit{resources}, a physical node mapped to several P2P nodes can manage its resources by shutting down or spawning P2P nodes at an application level. Similarly, an FL application can follow the publish/subscribe-based forest
abstraction to select P2P nodes that meet the resource requirements to run the tasks. 

To evaluate the feasibility of logical P2P nodes, we performed 10,000 runs of a simple CNN training task~\cite{fedavg} on a heterogeneous Amazon EC2 cluster. The cluster consists of 25 nodes, including 5 \texttt{t2.small} instances with 1 vCPU and 2 GB RAM, 5 \texttt{t2.medium} instances with 2 vCPUs and 4 GB RAM, 5 \texttt{t2.large} instances with 2 vCPUs and 8 GB RAM, 5 \texttt{t2.xlarge} instances with 4 vCPUs and 16 GB RAM, and 5 \texttt{t2.2xlarge} instances with 8 vCPUs and 32 GB RAM. For clarity of comparison, we denote Totoro$^+$ as the approach that employs logical P2P nodes, while Baseline represents the approach that “does not” use logical P2P nodes. In Totoro$^+$, different numbers of logical P2P nodes are mapped onto a single physical node according to its resource capacity: one logical node for \texttt{t2.small}, two \texttt{for t2.medium}, four for \texttt{t2.large}, eight for \texttt{t2.xlarge}, and sixteen for \texttt{t2.2xlarge}. In contrast, the Baseline treats each physical node as a single logical node. To study the effectiveness of logical P2P nodes on computation and communication overhead, we do not use the proposed game-theory algorithm to determine routing paths.

\noindentparagraph{Computation overhead.} We first compare the computation overhead between Totoro$^+$ and the Baseline. As shown in Figure~\ref{fig:number_masters_r2_r4}, Totoro$^+$ assigns a larger number of masters to physical nodes with higher capacity (i.e., \texttt{t2.2xlarge} instances, which have 8 vCPUs and 32 GB RAM), while nodes with fewer resources are assigned only a small number of masters. In contrast, the Baseline distributes masters evenly across all physical nodes. We further compare the average CPU utilization, as illustrated in Figure~\ref{fig:cpu_utilizations_r2_r4}. It can be clearly observed that Totoro$^+$ achieves significantly higher CPU utilization by placing more masters on high-capacity (40U) physical nodes. By contrast, the Baseline fails to fully exploit resource-rich nodes due to its uniform allocation strategy. These two figures demonstrate that the logical P2P node mechanism in Totoro$^+$ effectively adapts to node heterogeneity when handling computation overhead.

\noindentparagraph{Communication overhead.}
We next compare the communication overhead of Totoro$^+$ and the Baseline. Since communication between logical P2P nodes can occur either over intra-node edges or inter-node edges, depending on whether the communicating nodes reside on the same physical node, we first examine the distribution of intra-node and inter-node edges for both approaches (Figure~\ref{fig:number_edges_r2_r4}). For inter-node edges, we count the edges at the sender side. We observe that Totoro$^+$ assigns a larger number of both inter-node and intra-node edges to resource-rich nodes, whereas the Baseline distributes communication more evenly, with a stronger concentration on nodes with 6U, 10U, and 20U resources. Figure~\ref{fig:communication_times_r2_r4} further shows the communication time. As expected, inter-node edge communication incurs longer latency than intra-node edge communication due to cross–physical-node data transfers. However, Totoro$^+$ completes communication in less time overall, as it places a larger fraction of communication workloads on resource-rich nodes. In contrast, the Baseline distributes communication uniformly without considering node heterogeneity, resulting in less efficient utilization of available resources. Together, these results demonstrate that the logical P2P node mechanism in Totoro$^+$ effectively distributes communication overhead according to node heterogeneity.


\section{Totoro$^+$ v.s. Decentralized FL Systems}
\label{app:decsys_comparison}

In this section, we discuss the key differences between Totoro$^+$ and existing decentralized federated learning (DFL) systems, focusing on both system-level design and path orchestration strategies.

\noindentparagraph{System-level differences.}
Although decentralized federated learning (DFL) systems remove the need for centralized model aggregation and parameter servers, most existing DFL systems, still rely on a logically centralized control plane for resource management and application coordination~\cite{comdml, gossipfl, blockdfl, fsreal}.
Specifically, a central management interface is typically responsible for assigning devices to different FL applications. For example, consider a deployment scenario with three concurrent FL applications and $1,200$ available clients. The control plane partitions the client pool into three disjoint overlays, each dedicated to a single application. Clients within an overlay collaborate exclusively with peers assigned to the same task and execute the DFL protocol, i.e., performing local training and decentralized aggregation within that isolated topology. No cross-overlay interaction is allowed~\cite{comdml, gossipfl}.
In contrast, Totoro$^+$ eliminates the centralized control interface entirely and adopts a fully peer-to-peer (P2P) architecture for distributed management and coordination. Device participation, topology formation, and task execution are handled in a decentralized manner without relying on a global orchestrator. By removing the control-plane bottleneck, Totoro$^+$ enables improved scalability and more elastic adaptation as the number of clients and concurrent applications increases.

\noindentparagraph{Path planning models.}
The path planning model of Totoro$^+$ is grounded in a game-theoretic formulation that explicitly accounts for competition among nodes over shared network resources. In particular, Totoro$^+$ does not assume prior knowledge of bandwidth availability or link quality. Instead, routing decisions are made under incomplete information and are progressively refined based on observed transmission outcomes (bandit feedback).
In contrast, most existing decentralized federated learning (DFL) frameworks assume the existence of a logically centralized entity that can obtain real-time transmission statistics between nodes, such as available bandwidth and network congestion levels~\cite{comdml,gossipfl,fsreal}. Under this assumption, path or pairing decisions are typically derived from global network state information. To evaluate the impact of Totoro$^+$'s path planning model, we integrate it into both ComDML~\cite{comdml} and GossipFL~\cite{gossipfl} and compare time-to-accuracy performance with and without Totoro$^+$. We isolate the routing component while keeping the original training logic unchanged. This allows us to quantify whether replacing the baseline path planning model with Totoro$^+$, which operates without prior knowledge of bandwidth availability or link quality, improves time-to-accuracy performance.

We conduct experiments on four representative datasets: Google Speech~\cite{google_speech_dataset} for speech recognition, FEMNIST~\cite{femnist_dataset} for grayscale image classification, CIFAR-10~\cite{cifar10} for color image classification, and MASSIVE~\cite{massive_dataset} for intent classification. Each dataset is paired with a representative model architecture: ResNet-18~\cite{residualnet} for Google Speech, ShuffleNet~V2~\cite{shufflenet_v2} for FEMNIST, MobileViT~\cite{mobilevit} for CIFAR-10, and ALBERT~\cite{albert} for MASSIVE. The corresponding results are presented in Figures~\ref{fig:defl_speech_r1_c6}, \ref{fig:defl_femnist_r1_c6}, \ref{fig:defl_cifar10_r1_c6}, \ref{fig:defl_massive_r1_c6}.
The results show that, across both CNN-based applications (ResNet-18 in Figure~\ref{fig:defl_speech_r1_c6} and ShuffleNet~V2 in Figure~\ref{fig:defl_femnist_r1_c6}) and transformer-based models (MobileViT in Figure~\ref{fig:defl_cifar10_r1_c6} and ALBERT in Figure~\ref{fig:defl_massive_r1_c6}), DFL systems running atop Totoro$^+$ consistently reach the target accuracy (indicated by the red dashed line) within a shorter training time. This improvement stems from Totoro$^+$'s path planning model, which explicitly accounts for competition among nodes for shared transmission resources. Moreover, Totoro$^+$ operates without prior knowledge of bandwidth availability or link quality and instead adopts a bandit-based approach to progressively identify effective relay nodes for communication.
In contrast, existing frameworks such as ComDML and GossipFL rely heavily on accurate and up-to-date transmission information when making routing decisions. If the collected information is inaccurate or becomes stale, these systems may make suboptimal routing or pairing decisions, which can lead to slower time-to-accuracy performance.
}
\section{Discussions and Future Work}
\label{sec:discussions}

\subsection{Security of the advertise-discover tree.}

The advertise-discover (AD) tree enables edge nodes in the multi-ring for application advertisement and discovery. However, some adversarial edge nodes likely advertise malicious application information to other nodes. Also, adversarial edge nodes may discover and perform malicious attacks on specific FL applications. An efficient method is to rely on some trusted authorities, e.g., ICANN~\cite{icann} or a certification authority like Verisign~\cite{verisign}. The certification authority assigns a random NodeId to the new node and signs a NodeId certificate that binds the NodeID.

Suppose a node wants to locate an FL application. In that case, it first uses the \emph{AppId=hash(``advertise-discover application'')} as the key to route a \texttt{JOIN} message that contains its NodeId certificate. The internal node in the advertise-discover tree checks with the certification authority to verify the NodeId certificate after receiving the \texttt{JOIN} message. If the certificate is authentic, the new node joins the advertise-discover tree and gets the AppIds of FL applications running over the DHT-based P2P overlay. Next, joining the dataflow tree of the FL application follows the same way.

Likewise, if a node wants to advertise an FL application, it first contacts the certification authority to obtain an AppId certificate. The nodes that want to run the FL application first route \texttt{JOIN} messages using the AppId as the key, then receive the AppId certificate, check with the certification authority to verify the AppId certificate, join the dataflow tree, and run the FL application.

\subsection{Sending packets to multiple nodes simultaneously.}

\begin{algorithm}[t]
    \caption{Multicast-enabled distributed hop-by-hop routing algorithm}
    \label{pseudo:our_algorithm_multicast}
    \begin{algorithmic}[1]
        \INPUT{mixture weights $\alpha,\beta \in [0,1]$; initial policy $\pi^1_n$ for all $n\in N$.}
        \For{episode $k=1, 2, \cdots$}
            \For{packet $t=1,\cdots, \tau$}
                \State \parbox[t]{200pt}{Each node $n$ in parallel selects multiple hops $\textbf{p}^{k,t}_n\sim \pi^k_n$ to send a packet simultaneously, observes reward $r^{k,t}_n$. \strut}
                \label{pseudo:multicast_collect_info}
            \EndFor
            \For{node $n=1,\cdots, N$} \textbf{in parallel}
                \State \parbox[t]{200pt}{$\rho_n^k \leftarrow \argmin_{\lambda\in \Delta(\textbf{P}_n)} \text{det}(M(\lambda))$.\strut}
                \label{pseudo:multicast_d_optimal}

                \State \parbox[t]{200pt}{$\widehat{\nabla}^k_n\Phi(\textbf{p}) \leftarrow \frac{1}{\tau}\sum_{t=1}^\tau \psi(\textbf{p})^\top M(\pi^k_n)^{-1}\psi(\textbf{p}^{k,t}_n)r^{k,t}_n$, for all $\textbf{p}\in \textbf{P}_n$.\strut}
                \label{pseudo:multicast_gradient_estimator}
                
                \State \parbox[t]{200pt}{$\Tilde{\pi}^{k+1}_n\leftarrow \argmax_{\lambda\in \Delta(P_n)}\langle \lambda, \widehat{\nabla}^k_n\Phi \rangle$.\strut}
                \label{pseudo:multicast_find_optimal_policy}
                            
                \State \parbox[t]{200pt}{$\pi^{k+1}_n \leftarrow \underbrace{\alpha\big[\pi^k_n+\beta(\Tilde{\pi}^{k+1}_n-\pi^k_n)\big]}_{\text{Frank-Wolfe update}}$ 
                $+ \underbrace{(1-\alpha)\rho^k_n}_{\text{Exploration}}$.\strut}
                \label{pseudo:multicast_update_policy}
            \EndFor
        \EndFor
    \end{algorithmic}
\end{algorithm}

Currently, each node follows Algorithm~\ref{pseudo:our_algorithm} to forward packets to a ``single node'', and we have shown that Algorithm~\ref{pseudo:our_algorithm} achieves an $\epsilon$-approximate Nash equilibrium (Theorem {\color{blue}1}). However, multicast has been a communication paradigm that improves communication efficiency. Fortunately, we can extend Algorithm~\ref{pseudo:our_algorithm} to support multicast. 

We present the multicast-enabled distributed hop-by-hop routing algorithm in Algorithm~\ref{pseudo:our_algorithm_multicast}. We follow the setup in Appendix~\ref{sec:numerical_example} to illustrate the differences.
In line~\ref{pseudo:multicast_collect_info}, each node follows the initial policy to select multiple hops, e.g., $m_1$ and $m_2$, to send packets simultaneously and receive rewards. The range of rewards may become $[0, F]$, where $F$ denotes the maximum number of hops selected.
Then, since each node can select multiple hops, $\textbf{P}_{l_1}$ may contain $\{m_1, m_2, \{m_1, m_2\}\}$, and similarly each policy in $\Delta(\textbf{P}_{l_1})$ becomes a 3-dimensional vector, e.g., $[0.6, 0.1, 0.3]$, where it represents that $60\%$ of chance selecting $m_1$, $10\%$ of chance selecting $m_2$, and $30\%$ of chance selecting both $m_1$ and $m_2$. Subsequently, each node follows line~\ref{pseudo:multicast_gradient_estimator} to estimate the gradients of each possible action. Overall, Algorithm~\ref{pseudo:our_algorithm_multicast} can fully support multicast in a distributed manner, but we leave the upper bound of Nash regret of Algorithm~\ref{pseudo:our_algorithm_multicast} to future work.
{\color{dv}
\subsection{Real-world deployment of Totoro$^+$}
While many existing FL systems are operated by a single organization, there also exist more open and community-driven platforms (e.g., crowdsourcing-based FL platforms~\cite{kang2023incentive, tifedcrowd, p3m}), where application owners publish training requests according to their data or model requirements, and participating devices are recruited through incentive mechanisms designed to encourage contribution. 
Totoro$^+$ is particularly well-suited for these scenarios. First, its high scalability enables multiple FL applications and a large number of heterogeneous devices to execute training workflows in parallel in a distributed manner, without relying on a centralized server for synchronization or scheduling. Second, its adaptivity limits the impact of highly dynamic device participation on the overall FL process, which is common in open platforms or edge computing. Third, the customizability of Totoro$^+$ allows different FL applications to flexibly incorporate diverse mechanisms (e.g., incentive strategies and client selection policies) without embedding all such logic into a centralized server.
We also note that many existing P2P systems, including prior P2P-based FL designs (e.g.,~\cite{blockdfl, comdml, gossipfl, fsreal}), make similar assumptions regarding cooperative or incentive-compatible environments. Therefore, we believe it is well positioned to demonstrate the scalability, adaptivity, and customization advantages of Totoro$^+$ in environments where P2P functionality is permitted and open development is feasible.

\subsection{Adaptivity to non-IID settings}
Given the high heterogeneity of edge environments (e.g., diverse devices, data ownership, and application requirements), the customization capability of Totoro$^+$ enables application owners to deploy FL workflows with different mechanisms for handling non-IID data distributions. For example, before publishing an application, developers can specify customized algorithms for data filtering~\cite{safe, feddiv, fedfixer} or client selection~\cite{tifl,oort,client_selection_fl}, which are executed by the master assigned by Totoro$^+$. During application subscription, the master can selectively admit clients based on statistical properties of their local data (e.g., Dirichlet- or Gaussian-based distributions), select clients that satisfy specific data requirements, or employ personalization schemes that allow subscribed devices to update local model parameters differently under non-IID conditions. Through these extensions, Totoro$^+$ provides system-level support for a wide range of non-IID-aware FL algorithms.
}

\end{document}